\theoremstyle{plain}
\newtheorem{definition}{Definition}[section]
\newtheorem{theorem}[definition]{Theorem}
\newtheorem{lemma}[definition]{Lemma}
\newtheorem{corollary}[definition]{Corollary}
\newtheorem{proposition}[definition]{Proposition}
\theoremstyle{remark}
\newtheorem{example}[definition]{Example}
\newif\ifShort
\newif\ifDraft
\newcounter{fixcount}
\newcommand{\defineNote}[3][black!65!green]{\expandafter\def\csname #2\endcsname ##1{\stepcounter{fixcount}\fxwarning{\textcolor{#1}{\textbf{#3}: ##1}}}}
\title{The Complexity of Admissibility in Omega-Regular Games\thanks{Work supported by ERC Starting Grant inVEST (279499).}}
\author{Romain Brenguier \and Jean-Fran\c cois Raskin \and Mathieu Sassolas}
\date{D\'epartement d'informatique, Universit\'e Libre de Bruxelles (U.L.B), Belgium}
\newenvironment{collect}[3]{}{}
\newenvironment{collect*}[5]{}{}
\let\subp@r\subparagraph
\renewcommand{\subparagraph}[1]{%
\noexpandarg
\StrRight{#1}{1}[\endch@r]
\fullexpandarg
\IfEndWith{\endch@r}{.}{\subp@r{#1}}{\subp@r{#1.}}
}
\begin{document}

\maketitle

\begin{abstract}
Iterated admissibility is a well-known and important concept in classical game theory, e.g. to determine {\em rational} behaviors in multi-player matrix games.
As recently shown by Berwanger, this concept can be soundly extended to infinite games played on graphs with $\omega$-regular objectives.
In this paper, we study the algorithmic properties of this concept for such games.
We settle the exact complexity of natural decision problems on the set of strategies that survive iterated elimination of dominated strategies.
As a byproduct of our construction, we obtain automata which recognize all the possible outcomes of such strategies.
\end{abstract}

\section{Introduction}
	
Two-player games played on graphs are central in many applications of computer science.  
For example, in the synthesis problem for reactive systems, implementations are obtained from winning strategies in games with a $\omega$-regular objectives~\cite{PnueliR89}.
To analyze systems composed of several components, two-player games are extended to multi-player games with non zero-sum objectives, \ie each player has his own objective expressed as a $\omega$-regular specification which is not necessarily adversarial w.r.t. the objectives of the other players. 

\begin{wrapfigure}{r}{0.4\columnwidth}
\begin{center}
\begin{math}
	\begin{array}{c | cc}
		    & C & D \\ \hline 
		 A & (0,2) & (1,1) \\
		 B & (1,1) & (1,2)
	\end{array}
\end{math}
\end{center}
\small
\caption{A two-player matrix game.}
\label{matrix}
\end{wrapfigure}
To analyze multi-player games in normal form (a.k.a. {\em matrix games}), concepts like the celebrated Nash equilibrium~\cite{nash50} have been proposed. Another central concept is the notion of {\em dominated strategy}~\cite{Rub94}. A strategy of a player {\em dominates} another one if the outcome of the first strategy is better than the outcome of the second no matter how the other players play. In two-player matrix game of \figurename~\ref{matrix}, strategies of \player{1} (of \player{2} respectively) are given as rows of the matrix (as columns respectively), and the payoffs to be maximized, are given as pairs of integers (the first for \player{1} and the second for \player{2}).  
Strategy $B$ of \player{1} dominates strategy $A$: no matter how \player{2} plays, $B$ provides an outcome which is larger than or equal to the one of $A$, and if \player{2} plays $C$ then the outcome provided by $B$ is strictly larger than the outcome of $A$. 
On the other hand, \player{2} at first sight has no preference between $C$ and $D$. But if \player{2} knows that \player{1} prefers strategy $B$ to strategy $A$, then he will in turn prefer $D$ to $C$, and it is then reasonable to predict that $(B,D)$ will be played. This process is called the {\em iterated elimination of dominated strategies}, and it is valid under the hypothesis that rationality is {\em common knowledge} among the players~\cite{Aumann76}. Strategies that survive the iterated elimination of strategies are called {\em iteratively admissible strategies}. 

In~\cite{berwanger07}, Berwanger initiated a fundamental study of the notion of {\em rational behaviour} in infinite duration games played on graph by generalising the notion of strategy dominance and iterated elimination of dominated strategies to that setting. This solution concept is a clear potential alternative to Nash equilibria for those games. As pointed out by Berwanger, one important advantage of admissible strategies is that they are compatible with the sequential nature of games played on graphs: {\it ``in any position reachable with an admissible strategy, a strategy is admissible in the sub-game rooted in that position if, and only if, it is the restriction of an admissible strategy in the original game.''} As a consequence, admissibility does not feature {\em non-credible threats} while it is well known that it is the case for Nash equilibria. Nonetheless, the extension of iterated strategy elimination to infinite duration games is challenging as the set of strategies is infinite and may lead to infinite dominance chains. Berwanger's main technical results are as follows: all iteration stages are dominated by admissible strategies, the iteration is non-stagnating, and, under regular objectives, admissible strategies form a {\em regular set}. In particular, for the last result, Berwanger suggests a procedure that uses tree automata to represent sets of strategies. The closure of tree automata to projection and Boolean operations naturally provides an algorithm to compute admissible strategies in parity games but this algorithm has {\em non-elementary complexity}. 

In order to represent a viable alternative to Nash equilibria from a {\em computational point of view}, it is fundamental to better understand the complexity of iterated elimination of dominated strategies in $\omega$-regular games, and see if the non-elementary complexity of the tree-automata based procedure can be avoided. We prove here that this is indeed the case and that iterated elimination of dominated strategies has a computational complexity similar to the one of Nash equilibria. More precisely, we study games with weak Muller and (classical) Muller winning conditions given as circuits. Circuits offer a concise representation of Muller conditions and are closed (while remaining succinct) under Boolean operations.
Weak Muller conditions define objectives based on the set of states that occur along a run, they generalize safety and reachability objectives.
(Classical) Muller conditions define objectives based on the set of states that appear \emph{infinitely often} along a run.
They generalize B\"uchi and parity objectives and are canonical representations of $\omega$-regular objectives as every regular language can be accepted by a deterministic Muller automaton. We study the {\em winning coalition problem}: given a game and two subsets $W,L$ of players, to determine whether there exists an iteratively admissible profile of strategies that guarantees that $(i)$ all players of $W$ win the game, and $(ii)$ all players of $L$ lose the game (other players may either win or lose). For weak and classical Muller objectives, we provide a procedure in \PSPACE, with matching lower-bounds for safety, reachability, and Muller objectives. For B\"uchi objectives, we provide an algorithm that calls a polynomial number of times an oracle solving parity games (hence this would lead to a polynomial time solution if a polynomial time algorithm is found for parity games -- the current best known complexity is $\UP \cap \coUP$~\cite{Jur98}, although a deterministic subexponential algorithm exists~\cite{jurdzinski06}).


As a byproduct of our constructions, we obtain an automaton on infinite words which recognizes {\em all the possible outcomes} of iteratively admissible strategies. Any regular query on this language can be solved using classical automata techniques. As a consequence, we can solve any variant of the winning coalition problem defined above, if this variant can be expresse as such a query. For example, we can solve the \emph{model-checking under admissibility} problem: given $\varphi$, an \textsf{LTL} formula~\cite{pnueli77,sistla85}, does the outcome of every iteratively admissible profile satisfy $\varphi$?
We show that this problem is complete for the class \PSPACE, so it retains the same complexity as the ``classical'' model-checking problem for this logic. Model-checking under admissibility is useful to reason about properties that naturally {\em emerge} in a system from the interaction of {\em rational} agents that purse {\em their own objectives}. 

\subparagraph{Related work}
Dominance can be expressed in strategy logics~\cite{CHP10,mogavero10} but not unbounded iterated dominance.
Bounded iterated dominance is expressible but leads to classes of formulas with a non-elementary model-checking algorithm.
Other paradigms of rationality have been studied for games on finite graphs, like Nash-equilibria~\cite{Ummels08,bouyer12,klimos12} or \emph{regret minimization}~\cite{filiot10}.
In~\cite{klimos12}, the authors build an automaton that recognizes outcomes of Nash equilibria.
In turn-based game, finding a Nash equilibrium with a particular payoff is \PSPACE-complete for Muller objectives~\cite{Ummels10}, which is the same complexity we obtain for admissibility.
In the case of B\"uchi objectives, a polynomial algorithm exists for Nash equilibria~\cite{Ummels08}, while we have a $\NP \cap \co\NP$~algorithm for admissibility.

In this paper, we concentrate on $n$-player turn-based perfect information finite game graph with $\omega$-regular objective. This is the basic setting that needs to be studied before looking at richer models, like games with incomplete information~\cite{Reif84}, games with quantitative objectives like mean-payoff objectives~\cite{ZwickP96}, or concurrent games~\cite{AlfaroHK07}. Our results and techniques are clearly prerequisites to study those richer settings.

\subparagraph{Organization of the paper.}
The rest of the paper is organized as follows.
We first formalize the setting and the notations in Section~\ref{sec:background}.
Then we solve the case of games with safety objectives, in Section~\ref{sec:safety}, which gives rise to a simple notion of dominance.

We then give the more general algorithm and constructions for Muller objectives in Section~\ref{sec:prefix-independent}.
This include the special case of B\"uchi objectives (Section~\ref{sec:buchi}) as well as the algorithm solving the model-checking under admissibility problem (Section~\ref{sec:ltl}).

A comprehensive example of iterated elimination of dominated strategies and model-checking under admissibility is provided in Section~\ref{sec:train}.

In Section~\ref{sec:weak}, we generalize the algorithm for safety with weak Muller objectives.

\section{Definitions}\label{sec:background}

\subsection{Multiplayer Games}



\begin{definition}[Multiplayer games] 
  A \newdef{turn-based multiplayer (non zero-sum) game} is a tuple
  $\G = \left\langle \Agt, (V_i)_{i\in\Agt}, E, (\win{i})_{i\in\Agt} \right\rangle$
  where:
\begin{itemize}
\item $\Agt$ is the non-empty and finite set of players;
\item 
  $V = \biguplus_{i\in \Agt} V_i$ and for every $i$ in $\Agt$, $V_i$ is the finite set of player $i$'s states; 
\item $E \subseteq V \times V$
  is the set of edges\footnote{It is assumed for convinience and w.l.o.g. that each state in $V$ has at least one outgoing edge.};
  we write $s\rightarrow t$ for $(s,t)\in E$ when $E$ is clear from context.
\item For every $i$ in $\Agt$, $\win{i} \subseteq V^\omega$ is a
  winning condition. 
\end{itemize}
\end{definition}

A \newdef{path}~$\rho$ is a sequence of states $(\rho_j)_{0\le j < n}$ with $n\in \N \cup \{\infty\}$ s.t. for all $j <n -1$, $\rho_j \rightarrow \rho_{j+1}$.
The \newdef{length}~$|\rho|$ of the path $\rho$ is $n$.\MS{Je crois qu'on avait une typo ici; ou alors on compte le nombre de transitions travers\'ees?}
A \newdef{history} is a finite path and a \newdef{run} is an infinite path.
Given a run~$\rho= (\rho_j)_{j\in\N}$ and an integer $k$, we write $\rho_{\le k}$ the history $(\rho_j)_{0\le j< k+1}$, that is, the prefix of length $k+1$ of~$\rho$.
For a history $\rho$ and a (finite or infinite) path $\rho'$, $\rho$ is a \newdef{prefix} of $\rho'$ is written $\rho \prefix \rho'$.\MS{La macro $\preccurlyeq$ \'etant prise pour la dominance, j'ai choisi \c ca un
  peu au pif; \`a changer, \'eventuellement en faisant un gros shuffle
  des macros :-)}
The \newdef{last state} of a history $\rho$ is $\last(\rho) = \rho_{|\rho|-1}$.
The set of states \newdef{occuring} in a path~$\rho$ is $\states{\rho} = \left\{ s \in V \mmid \exists i \in \N.\ i < |\rho|, \rho_i = s \right\}$.
The set of states \newdef{occuring infinitely often} in a run~$\rho$ is $\inff{\rho} = \left\{ s \in V \mmid \forall j\in \N.\ \exists i > j, \rho_i = s \right\}$.


\begin{definition}[Strategies]
  A \newdef{strategy} of player $i$ is a function 
  $\sigma_i : (V^* \cdot V_i) \rightarrow V$, such that if $\sigma_i(\rho) = s$ then $(\last(\rho),s)\in E$.
  A \newdef{strategy profile} for the set of players~$P' \subseteq \Agt$ is a tuple of
  strategies, one for each player of~$P'$.
\end{definition}

Let $\stratset_{i}(\G)$ be the set of all strategies of player $i$ in $\G$; we write $\stratset_i$ when $\G$ is clear from the context.
We write $\stratset = \prod_{i\in\Agt}\stratset_i$ for the set of all strategy profiles for $\Agt$, and $\stratset_{-i}$ for the set of strategy profiles for all players
but $i$.
If $\sigma_{-i} = (\sigma_j)_{j\in \Agt\setminus\{i\}} \in \stratset_{-i}$, we write $(\sigma_i,\sigma_{-i})$ for $(\sigma_j)_{j\in \Agt}$.
Similarly, if $S$ is a set of profiles, $S_i$ denotes the $i$-th projection of $S$, \ie a set of strategies for \player{i}.
A \emph{rectangular set} of strategy profiles is a set that can be decomposed as a Cartesian product of sets of strategies, one for each player.

A strategy profile $\sigma_\Agt \in \stratset$ defines a unique \newdef{outcome} from state~$s$:
$\outcome_s(\sigma_\Agt)$ is the run $\rho = (\rho_j)_{j\in \N}$ s.t. $\rho_0 = s$ and for $j \ge 0$, if $\rho_j \in V_i$, 
then $\rho_{j+1} = \sigma_i(\rho_{\le j})$. 
If $S_i$ is a set of strategies for \player{i}, we write $\outcome_s(S_i)$ for $\{\rho \mid \exists \sigma_i \in S_i , \sigma_{-i}\in \stratset_{-i}.\ \outcome_s(\sigma_i,\sigma_{-i}) = \rho \}$.
For a tuple of sets of strategies $S_{P'}$ with $P' \subseteq \Agt$, we write $\outcome_s(S_{P'}) = \bigcap_{i\in P'} \outcome_s(S_i)$.
A strategy $\sigma_i$ of player $i$ is said to be \newdef{winning from
state~$s$ against a rectangular set $S_{-i} \subseteq \stratset_{-i}$}, if for all $\sigma_{-i}\in S_{-i}$, $\outcome_s(\sigma_i,\sigma_{-i}) \in \win{i}$.
It is simply said \newdef{winning from state~$s$} if $S_{-i} = \stratset_{-i}$.
For each player $i$, we write $\win{i}^s(\sigma_\Agt)$ 
if $\outcome_s(\sigma_\Agt) \in \win{i}$.


\subsection{Winning conditions}
Winning conditions for each player are given by accepting sets
either on the set of states occurring along the run, or the set of states occurring infinitely often.
Particular cases are safety, reachability, and B\"uchi winning conditions.

\begin{itemize}
\item
  A \newdef{safety condition} is defined by a set $\bad_i \subseteq V$:
 $\win{i} = (V\setminus \bad_i)^\omega$.
\item
  A \newdef{reachability condition} is 
  defined by a set $\good_i \subseteq V$:
  $\win{i} = V^* \cdot \good_i \cdot V^\omega$.
\item
  A \newdef{B\"uchi condition} is defined by a set $\buchi_i \subseteq V$:
  $\win{i} = (V^*\cdot\buchi_i)^\omega$.
\item
  A \newdef{Muller condition} is given by a family~$\mathcal{F}$ of sets of states: $\win{i} = \{ \rho \mid \inff{\rho}\in \mathcal{F}\}$.
  For a succinct representation, we assume $\mathcal{F}$ is given by a Boolean circuit whose inputs are the states of $V$, and which evaluates to true on valuation $v_S \colon s \mapsto 1 \textrm{ if } s \in S ; \ 0 \textrm{ otherwise};$
  if, and only if, $S\in \mathcal{F}$~\cite{hunter07}.
\item
  A \newdef{weak Muller condition} is given by a family~$\mathcal{F}$ of sets of states: $\win{i} = \{ \rho \mid \states{\rho}\in \mathcal{F}\}$.
  We again assume that $\mathcal{F}$ is given by a Boolean circuit.
\end{itemize}

Muller conditions generalize B\"uchi and other classical conditions such as parity: these can be encoded by a circuit of polynomial size~\cite{hunter07}.
Note that Muller conditions are \newdef{prefix-independent}: for any finite path $h$ and
infinite path $\rho'$, $h\cdot \rho' \in \win{i} \Leftrightarrow \rho' \in \win{i}$.
In two-player zero-sum games with circuit conditions, deciding the winner is \PSPACE-complete~\cite{hunter07}.
Weak Muller conditions generalize safety and reachability.

\subsection{Admissibility}

\begin{definition}[Dominance for strategies]
Let $S = \prod_{i \in \Agt}S_i \subseteq \stratset$ be a rectangular set of strategy profiles.\JFR{Here there is a game which is implicite, we should say it}
Let $\sigma,\sigma' \in S_i$.
Strategy $\sigma$ \newdef{very weakly dominates} strategy $\sigma'$ with
respect to $S$, written $\sigma \dom{S} \sigma'$, if from all states~$s$:
\[\forall \tau \in S_{-i}, \win{i}^s(\sigma',\tau) \Rightarrow \win{i}^s(\sigma,\tau).\]

Strategy $\sigma$ \newdef{weakly dominates} strategy $\sigma'$
with respect to $S$, written $\sigma \domstr{S} \sigma'$, if
$\sigma \dom{S} \sigma'$ and $\neg(\sigma' \dom{S} \sigma)$. 
A strategy $\sigma \in S_i$ is \newdef{dominated} in $S$ if there exists $\sigma' \in S_i$ such that $\sigma' \domstr{S} \sigma$.
A strategy that is not dominated in $S$ is \newdef{admissible} in $S$.
A profile $\sigma_\Agt$ such that for all $i \in \Agt$, $\sigma_i$ is admissible is called an \newdef{admissible profile}.
\end{definition}

The set~$\stratset^*$ of \newdef{iteratively admissible} strategies is obtained by iteratively eliminating dominated strategies, starting from set $\stratset$.
Formally, we consider the sequence: 
\begin{itemize}
\item $\stratset^0 = \stratset$;
\item $\stratset^{n+1} = \prod_{i\in\Agt} \{ \sigma_i \in \stratset^n_i \mid \sigma_i\text{ admissible in } \stratset^n\}$.
\end{itemize}
Then $\stratset^* = \bigcap_{n\in \N} \stratset^n$.
When for all player~$i\in \Agt$, $\win{i}$ is $\omega$-regular winning conditions, 
$\stratset^*$ is reached after finitely many iterations and is not empty~\cite{berwanger07}.

Note that our strategies are defined from all states while in~\cite{berwanger07} they are defined only for history starting from the initial state.
A strategy here can be seen as a tuple of strategies (in the sense of~\cite{berwanger07}): one for each state.
The set $\stratset^n$ we compute is then the cardinal product of admissible strategies from each state.

\begin{example}
  \figurename~\ref{fig:exsafety} presents a safety game that starts in $q_0$.
  Strategies of \player{1} that from $q_0$ go to $q_4$ are losing.
  Whereas for those that go to $q_1$, there is a strategy of \player{2} which helps \player{1} to win by playing back to $q_0$.
  Hence the former are dominated by the later, and so they are eliminated at the first elimination of dominated strategies and do not appear in $\stratset^1_1$.
  In the second step of iteration, if \player{2} plays from $q_1$ to $q_0$, he is ensured to win if \player{1} plays a strategy of $\stratset^1_1$.
  Therefore the strategies of \player{2} that go to $q_2$ are dominated: there are strategies in $\stratset^1_1$ that make them lose.
  These latter strategies are therefore removed and do not appear in $\stratset^2_2$.
  The process then stabilizes: $\stratset^*=\stratset^2$.
\end{example}

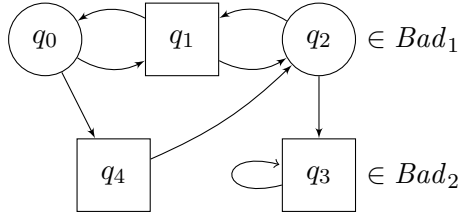
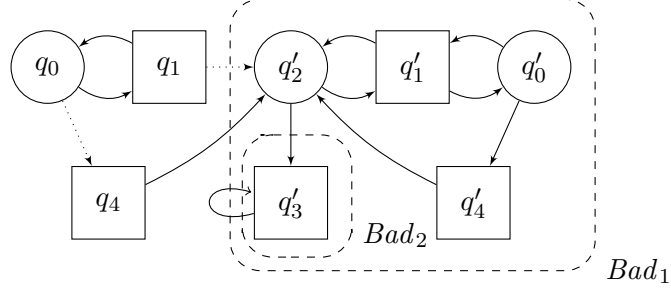
\begin{figure}[t]
\begin{center}
\subfloat[A safety game.]{\label{fig:exsafety}
  \begin{tikzpicture}[xscale=1.8,yscale=1.8]
    \draw (0,0) node[player1] (A) {$q_0$};
    \draw (1,0) node[player2] (B) {$q_1$};
    \draw (2,0) node[player1] (C) {$q_2$};
    \draw (C.0) node[right] {$\in \bad_1$};
    \draw (2,-1) node[player2] (D) {$q_3$};
    \draw (D.0) node[right] {$\in \bad_2$};
    \draw (0.5,-1) node[player2] (E) {$q_4$};
    
    \draw[-latex'] (A) -- (E);
    \draw[-latex'] (C) -- (D);
    \draw[-latex'] (E) edge[bend right=10] (C);

    \draw (A) edge[bend right,-latex'] (B);
    \draw (B) edge[bend right,-latex'] (A);
    \draw (B) edge[bend right,-latex'] (C);
    \draw (C) edge[bend right,-latex'] (B);
    \draw (D) edge[loop left,-latex'] (D);
  \end{tikzpicture}
}

\subfloat[Its unfolding. Transitions eliminated after two steps of elimination are dotted.\vspace{-1em}]{\label{fig:after-elim}
  \begin{tikzpicture}[xscale=1.6,yscale=1.8]
  \tikzstyle{dominated}=[dotted]
    \draw (0,0) node[player1] (A) {$q_0'$};
    \draw (-1,0) node[player2] (B) {$q_1'$};
    \draw (-2,0) node[player1] (C) {$q_2'$};
    \draw (-2,-1) node[player2] (D) {$q_3'$};
    \draw (-0.5,-1) node[player2] (E) {$q_4'$};
    
    \draw (-3,0) node[player2] (Bp) {$q_1$};
    \draw (-4,0) node[player1] (Ap) {$q_0$};
    \draw (-3.5,-1) node[player2] (Ep) {$q_4$};

    \draw[-latex',dominated] (Ap) -- (Ep); 
    \draw[-latex',dominated] (Bp) -- (C); 
    \draw[-latex'] (A) -- (E);
    \draw[-latex'] (C) -- (D);
    \draw[-latex'] (E)  edge[bend left=10] (C);
    \draw[-latex'] (Ep)  edge[bend right=10] (C);

    \draw (Bp) edge[bend right,-latex'] (Ap);
    \draw (Ap) edge[bend right,-latex'] (Bp);
    \draw (B) edge[bend right,-latex'] (A);
    \draw (B) edge[bend right,-latex'] (C);
    \draw (A) edge[bend right,-latex'] (B);
    \draw (C) edge[bend right,-latex'] (B);
    \draw (D) edge[loop left,-latex'] (D);
    \draw[rounded corners=4mm,dashed] (0.5,0) |- (-1,0.5) -| (-2.5,-1) |- (-1,-1.5) -| node[right]{$\bad_1$} (0.5,0);
    \draw[rounded corners=4mm,dashed] (-2,-1.4) -|node[above right]{$\bad_2$}  (-1.5,-1) |- (-2,-0.5) -| (-2.4,-1) |- (-2,-1.4);
  \end{tikzpicture}
}
\end{center}
\caption[A safety game and its unfolding.]{A safety game and its unfolding. States controlled by \player{1} are represented with circles and states controlled by \player{2} with squares. We keep this convention through the paper.
  The objective of \player{1} is to avoid $q_2$ and that of \player{2} is to avoid $q_3$.
}
\end{figure}

\subsection{Decision problems}
\subparagraph{Winning coalition problem} 
Given a game~$\G$ and two subsets $W,L$ of players, 
does there exist an iteratively admissible profile s.t. all players of $W$ win the game, and all players of $L$ lose the game (other players may either win or lose)?
\JFR{what about LTL}

\subparagraph{Model-checking under admissibility problem}\MS{Le nom est un peu au pif :-)}
Given a game and an \textsf{LTL}~\cite{pnueli77,sistla85} formula $\psi$,
does the outcome of every iteratively admissible profile satisfy $\psi$?

\subsection{Values}
Our algorithms are based on the notion of \emph{value} of a history.
It characterizes whether a player can win (alone) or cannot win (even with the help of other players), restricting the strategies to the ones that have not been eliminated so far.
This notion is also a central tool in~\cite{berwanger07} to characterize admissible strategies. 
However,~\cite{berwanger07} gives no practical way to compute values.
We will show in this paper, that these are indeed computable.

\begin{definition}[Value]
  The \newdef{value} of history~$h$ for \player{i} after
  the $n$-th step of elimination, written $\val^{n}_i(h)$, is given by:
  \begin{itemize}
  \item if there is no strategy profile $\sigma_{\Agt}$ in
    $\stratset^n$ whose outcome $\rho$ from $\last(h)$ is such that $h_{< |h| - 1} \cdot \rho$ 
    is winning for \player{i} then
    $\val^{n}_i(h)=-1$;
  \item if there is a strategy of $\sigma_i \in \stratset^n_i$ such that 
    for all strategy profiles $\sigma_{-i}$ in
    $\stratset^n_{-i}$, the outcome $\rho$ of $(\sigma_i,\sigma_{-i})$ from
    $s$ is such that $h_{< |h| - 1}\cdot\rho$ is winning for \player{i} then $\val^{n}_i(h)=1$;
  \item otherwise $\val^{n}_i(h)=0$;
  \end{itemize}
  By convention, $\val^{-1}_i(h)=0$.
\end{definition}

\noindent
The following lemma illustrates a property of values and admissible strategies:
\begin{lemma}\label{lem:necessaryAdm}
For all $n\in\N$, if $\last(h) \in V_i$ and $\sigma_i \in \stratset^{n+1}_i$ then $\val^n_i(h) = \val^n_i(h \cdot \sigma_i(h))$.
\end{lemma}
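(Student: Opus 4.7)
I will argue by contradiction: assume $\sigma_i\in\stratset^{n+1}_i$ satisfies $\val^n_i(h)\ne\val^n_i(h\cdot\sigma_i(h))$ for some $h$ with $\last(h)\in V_i$, and exhibit a $\tilde\sigma_i\in\stratset^n_i$ that strictly weakly dominates $\sigma_i$ in $\stratset^n$, contradicting admissibility.

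First I dispatch the easy inequality $\val^n_i(h\cdot\sigma_i(h))\le\val^n_i(h)$: given a profile $\sigma_\Agt^\dagger\in\stratset^n$ realising the right-hand value, I graft $\sigma_i$'s behaviour on histories outside the subtree rooted at $h\cdot\sigma_i(h)$ with $\sigma_\Agt^\dagger$'s behaviour inside that subtree, obtaining a profile that realises the same value from $h$ by picking $\sigma_i(h)$ at $h$ and then following $\sigma_\Agt^\dagger$.

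For the reverse inequality, assume $\val^n_i(h)>\val^n_i(h\cdot\sigma_i(h))$ and pick a witness $\sigma_i^\star\in\stratset^n_i$, together with a cooperating $\tau^\star\in\stratset^n_{-i}$ when $\val^n_i(h)=0$, realising the higher value via a move $v^\star\ne\sigma_i(h)$ at $h$. Define $\tilde\sigma_i$ to agree with $\sigma_i^\star$ at $h$ and on every history extending $h\cdot v^\star$, and with $\sigma_i$ elsewhere. I then need to check three things. (i)~$\tilde\sigma_i\in\stratset^n_i$. (ii)~$\tilde\sigma_i\dom{\stratset^n}\sigma_i$: for each starting state $s$ and opponent $\tau\in\stratset^n_{-i}$, if the outcome of $(\sigma_i,\tau)$ from $s$ does not traverse $h$ then the two strategies produce the same run, whereas if it does then $\tilde\sigma_i$ diverts play into the $\sigma_i^\star$-subtree, which realises the higher value by assumption and thus wins whenever $\sigma_i$ wins. (iii)~Strictness, obtained by taking $s=h_0$ with a $\tau\in\stratset^n_{-i}$ that follows $h$ up to $\last(h)$ and then replays $\tau^\star$, yielding a run where $\tilde\sigma_i$ wins while $\sigma_i$ loses because $\sigma_i(h)$ enters the value-decreasing subtree.

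The main obstacle is (i), i.e.\ certifying that the grafted strategy stays in $\stratset^n_i$. The base case $n=0$ is immediate since $\stratset^0=\stratset$ contains every strategy. For the inductive step I plan a simultaneous induction establishing that $\stratset^n_i$ is closed under such localised graftings: any putative strict dominator of $\tilde\sigma_i$ in $\stratset^{n-1}$ would, by the inductive hypothesis on value-preservation, decompose cleanly into a dominator of $\sigma_i$ outside the $h\cdot v^\star$-subtree or a dominator of $\sigma_i^\star$ inside it, contradicting respectively $\sigma_i\in\stratset^n_i$ or $\sigma_i^\star\in\stratset^n_i$. A complementary subtlety I will have to address is the case where $\sigma_i$ already deviates from $h$ at some strict prefix $h'$, so that the surgery at $h$ is not active on any outcome starting from $h_0$; this is handled by first applying the argument at the shortest such $h'$ and iterating, using prefix-independence of the Muller-style objectives to transport the contradiction.
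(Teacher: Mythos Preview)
Your plan takes a different route from the paper's. The paper does not build an explicit dominator; it argues by a three-case split on $\val^n_i(h)\in\{-1,0,1\}$. If $\val^n_i(h)=1$, then since some strategy in $\stratset^n_i$ wins from $\last(h)$ against all of $\stratset^n_{-i}$, admissibility of $\sigma_i$ forces $\sigma_i$ itself to be winning there, hence the next state still has value~$1$. If $\val^n_i(h)=0$, admissibility forces the existence of some $\tau\in\stratset^n_{-i}$ making $\sigma_i$ win from $\last(h)$; the corresponding outcome passes through $h\cdot\sigma_i(h)$, so that history cannot have value~$-1$. The case $\val^n_i(h)=-1$ is trivial. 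Combined with the easy inequality $\val^n_i(h)\ge\val^n_i(h\cdot\sigma_i(h))$, this gives the lemma without ever naming a dominating strategy.

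Your explicit-dominator plan is reasonable in spirit, but step~(i) is a genuine gap. Showing $\tilde\sigma_i\in\stratset^n_i$ amounts to proving that $\stratset^n_i$ is closed under grafting one admissible strategy into another along a history --- this is precisely the content of the paper's separate ``shifting'' lemma (and its companion lemma about $\sigma\circ h$), whose proof is not a one-liner. Your proposed induction (``a putative dominator of $\tilde\sigma_i$ would decompose cleanly into a dominator of $\sigma_i$ outside or of $\sigma_i^\star$ inside'') does not work as stated: a strategy $\sigma'_i$ that strictly dominates $\tilde\sigma_i$ in $\stratset^{n-1}$ need not respect the $h\cdot v^\star$ boundary at all, so there is no canonical decomposition. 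The same closure-under-shifting issue also silently enters your step~(ii): to conclude that $\tilde\sigma_i$ wins after reaching $h$ you need that the continuation of $\tau$ past $h$ is itself (the behaviour of) a profile in $\stratset^n_{-i}$, which again is shifting.

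Finally, your handling of the ``$\sigma_i$ already deviates from $h$ at some $h'\prec h$'' case does not close. At such an $h'$ there is no reason $\sigma_i$ decreases its value, so iterating the argument there yields nothing. In fact, for histories $h$ not compatible with $\sigma_i$ (and some $\tau\in\stratset^n_{-i}$), the move $\sigma_i(h)$ never appears in any outcome relevant to dominance, so admissibility places no constraint on it; read literally for arbitrary $h$, the equality can fail. The paper only ever applies the lemma along histories that are prefixes of outcomes in $\outcome(\stratset^n)$, and its case-analysis proof is phrased accordingly.
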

Hence a player that plays according to an admissible strategy cannot go to a state that changes the value of the current history.
This condition is not always sufficient, but in the following sections we characterize runs of admissible strategies relying on this notion of value.

\section{Safety objectives}
\label{sec:safety}


The main result of this section is a \PSPACE~algorithm for the winning coalition problem in safety games.
This is based on a notion of dominance for transitions.
We show that by iteratively removing dominated transitions of the game, we describe exactly the set of admissible strategies.

\subsection{Making explicit the losing players}

Let $h$ be an history, the players \newdef{losing on $h$} are the players in \[\lost(h)= \{i \in \Agt \mid \exists k < |h|,\ h_k \in \bad_i \}.\]

\begin{proposition}\label{prop:locallost}
  For safety winning conditions, the value of a history $h$ only depends on $\lost(h)$ and $\last(h)$.
\end{proposition}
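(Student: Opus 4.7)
The plan is to proceed by induction on the elimination index~$n$, showing that $\val^n_i(h)$ depends on~$h$ only through the pair $(\lost(h), \last(h))$. The crucial structural ingredient specific to safety is the following decomposition, which I would establish at the outset: for every infinite continuation~$\rho$ starting at $\last(h)$,
\[
  h_{<|h|-1} \cdot \rho \in \win{i} \iff i \notin \lost(h) \text{ and } \rho \text{ avoids } \bad_i.
\]
This follows by a short case analysis: any visit to $\bad_i$ strictly inside $h$ is recorded by $i \in \lost(h)$, while a visit to $\bad_i$ at $\last(h) = \rho_0$ is simultaneously a visit of $\rho$ to $\bad_i$. Thus the ``winning after $h$'' predicate splits cleanly into one piece depending only on $\lost(h)$ and one depending only on $\rho$.

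The base case $n = -1$ is immediate from the convention $\val^{-1}_i \equiv 0$. For the inductive step, assuming that $\val^{n-1}_j$ depends only on $(\lost, \last)$ for every player~$j$, it suffices to show that the set of continuations $\rho$ from $\last(h)$ arising as outcomes of profiles in $\stratset^n$ depends on~$h$ only through $\last(h)$; combining this with the decomposition above then immediately yields the claim for $\val^n_i(h)$. This state-locality is where the remark at the end of Section~\ref{sec:background} plays its role: viewing a strategy as a tuple of per-state strategies, admissibility at step~$n$ is characterised via dominance against $\stratset^{n-1}_{-i}$, which is in turn governed by values $\val^{n-1}$; by induction these depend only on $(\lost, \last)$, so admissibility can be checked independently at each state.

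The hard part will be making this state-locality of $\stratset^n$ rigorous: strategies are formally global functions of the history, so it is not a priori clear that membership in $\stratset^n_i$ is insensitive to what a strategy does away from $\last(h)$. I would address this via an auxiliary stability lemma asserting that modifying a strategy in $\stratset^n_i$ on histories not having $\last(h)$ as a prefix yields another strategy in $\stratset^n_i$. Once this is established (again by induction, piggy-backing on the ``tuple-of-strategies'' view), the effective set of strategies available from $\last(h)$ depends only on $\last(h)$, and the induction closes as described.
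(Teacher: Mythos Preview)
Your plan would work, but it is considerably more elaborate than needed, and the ``hard part'' you identify is not actually a difficulty. Once you have the decomposition
\[
  h_{<|h|-1} \cdot \rho \in \win{i} \iff i \notin \lost(h) \text{ and } \rho \text{ avoids } \bad_i,
\]
the proposition follows directly from the definitions, with no induction on~$n$ and no stability lemma. The set $\stratset^n$ is defined globally for the whole game and does not depend on~$h$; and by definition $\outcome_s(\sigma_\Agt)$ depends only on the starting state~$s$. Hence the set of continuations $\{\outcome_{\last(h)}(\sigma_\Agt) : \sigma_\Agt \in \stratset^n\}$ (and likewise the two-quantifier condition for value~$1$) is determined by $\last(h)$ alone, trivially---there is nothing to prove here, and in particular no need to know anything about $\val^{n-1}$ or about how $\stratset^n$ decomposes across states. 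The only place $h$ enters beyond $\last(h)$ is the winning check on $h_{<|h|-1}\cdot\rho$, and your decomposition reduces that to $\lost(h)$. So the whole argument is one paragraph, not an induction.

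The paper takes a different route: it observes that encoding $\lost(h)$ into the state (product with $2^\Agt$) turns the safety objective into a prefix-independent one, and then invokes Proposition~\ref{prop:valuestate}, which says that for prefix-independent objectives the value depends only on the last state. Your direct approach (simplified as above) is self-contained and arguably more transparent for safety specifically; the paper's reduction is shorter on the page because it defers the work to the prefix-independent case treated later.
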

\begin{collect}{appendix-safety}{\subsubsection{Proof of Proposition~\ref{prop:locallost}}}{}
\begin{proof}
  Since a safety objective can be transformed into a prefix-independent one by remembering which player have already lost, this is a consequence of the fact that for prefix-independent objectives the value depends only on the last state of the history, as we will see in Proposition~\ref{prop:valuestate}.
\end{proof}
\end{collect}
We can therefore write $\val^n_i(\lost(h),s)$\JFR{this \dots} for $\val^n_i(h)$, when $\last(h) = s$.
We encode the set $\lost(h)$ of losing players in the state of the game, at the price of an exponential blowup (in the number of players).
The new game has states in $2^{\Agt} \times V$ and set of transitions $(\lost, s) \rightarrow (\lost\cup \{i \mid s' \in \bad_i\} , s')$ for any $\lost\subseteq \Agt$, if $s \rightarrow s'$.
In this partially unfolded game, the value depends only on the current state, hence is written $\val^n_i(s)$.
For example, the game of \figurename~\ref{fig:exsafety} is unfolded as the game of \figurename~\ref{fig:after-elim}; states $q_0', \ldots, q_4'$ are states where \player{1} has already lost.
Now, let us assume for the remainder of this section that the losing players in the game $\G$ are explicit.

\subsection{Dominance of transitions}

In the case of safety winning condition, the necessary condition of Lemma~\ref{lem:necessaryAdm} becomes sufficient, as shown below.\JFR{explain here what is this notion and intuition}
This yields a local notion of dominance, that can be expressed directly on transitions:

\begin{definition}
  We write $T_i^n$ for the set of transitions $s \rightarrow s' \in E$, such that $s$ is controlled by \player{i} and $\val^n_i(s) > \val^n_i(s')$.
  Such transitions are said to be \newdef{dominated} after the $n$-th step of elimination.
  We write $T^n$ for the union of all $T_i^n$.
\end{definition}

\begin{definition}[Subgame]
  Let $\G = \left\langle \Agt,V,E,\win{\Agt}\right\rangle$ be a game and $T \subseteq E$ a set of transitions.
  If each state $s\in V$ has at least one successor by $E\setminus T$, the game $\G\setminus T = \left\langle \Agt,V,E\setminus T,\win{\Agt}\right\rangle$ is called a \newdef{subgame} of~$\G$.
We write $\stratset_i(\G\setminus T)$ the set of strategies $\sigma_i\in \stratset_i(\G)$ such that for all history $h$ of $\G\setminus T$, if $\last(h)\in V_i$ then $(\last(h),\sigma_i(h)) \not\in T$.
\end{definition}


This 
notion yields a polynomial procedure in the size of the game where losing players are explicit, to compute the set of all iteratively admissible strategies, described in Algorithm~\ref{algo:safety}.
The loop is executed at most $|E|$ times, where $|E|$ is the number of transitions in the partially unfolded game.
\begin{algorithm}[htb]
  $n := 0$ ; $T_i^{-1} := \emptyset$ \;
      \Repeat{$\forall i \in \Agt.\ T_i^{n} = T_i^{n-1}$}
      {
        \ForAll{$s \in V$}
               {
               \lIf{there is a winning strategy for \player{i} from $s$ in $\G \setminus T^{n-1}$}
                     {$\val^n_i(s) := 1$}
                     {
                       \lElseIf{there is no winning run for \player{i} from $s$ in  $\G \setminus T^{n-1}$}
                              {$\val^n_i(s) := -1$}
                              \lElse {$\val^n_i(s) := 0$}
                     }
               }
       \ForAll{$i\in \Agt$}
       {$T_i^{n} := T_i^{n-1} \cup \{ (s, s') \in E \mid s\in V_i \land \val^n_i(s) > \val^n_i(s')\}$\;
       }
        $n := n+1$ \;
      }       

      \caption{Computing the set of iteratively admissible strategies}
      \label{algo:safety}
\end{algorithm}

However, this procedure assumes that the information of which players have already violated their safety condition is encoded in the state.
So in the general case, the procedure has a complexity which is exponential in the number of players and polynomial in the number of states of the game.
In the case of the \emph{winning coalition problem}, we can however reduce this complexity to $\PSPACE$.

We now show the correctness of the procedure.
We first prove a link between the notions of dominance for strategies and for transitions.
Note that since all states~$s$ have at least one successor with a value greater or equal to that of $s$, removing transitions of $T_i^n$ yield what we call a subgame.

\begin{proposition}\label{prop:correctness}
  All admissible strategies w.r.t. $\stratset^n$ of \player{i} are strategies of $\stratset_i(\G \setminus T_i^n)$.
\end{proposition}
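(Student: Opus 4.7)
The plan is to deduce this directly from Lemma~\ref{lem:necessaryAdm}, after using Proposition~\ref{prop:locallost} to replace values of histories by values of states. Recall that we have assumed throughout this subsection that the losing players are encoded into the state via the partial unfolding; Proposition~\ref{prop:locallost} then allows us to write $\val^n_i(s)$ unambiguously for the value of any history whose last state is $s$, which is exactly the convention used in the definition of $T_i^n$.

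The key steps are as follows. First, I would pick an arbitrary $\sigma_i$ admissible with respect to $\stratset^n$; by the definition of the iterative elimination, this is exactly a strategy of $\stratset^{n+1}_i$. Second, I would take an arbitrary history $h$ of $\G\setminus T_i^n$ with $\last(h)=s\in V_i$, and set $s'=\sigma_i(h)$; showing $(s,s')\notin T_i^n$ is enough to conclude $\sigma_i \in \stratset_i(\G\setminus T_i^n)$ by the definition of the subgame-restricted strategy set. Third, applying Lemma~\ref{lem:necessaryAdm} at stage $n$ yields $\val^n_i(h)=\val^n_i(h\cdot s')$. Fourth, by the state-dependence of values established above, both sides equal values of single states, giving $\val^n_i(s)=\val^n_i(s')$. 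Since $T_i^n$ is defined as those transitions $s\to s'$ controlled by \player{i} for which $\val^n_i(s)>\val^n_i(s')$, this equality rules out $(s,s')\in T_i^n$.

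I do not expect a genuine obstacle in this argument: the real content lies in Lemma~\ref{lem:necessaryAdm}, which already says that an admissible strategy cannot pick a move that strictly changes the current value. The points requiring care are purely bookkeeping: matching the index ``admissible w.r.t. $\stratset^n$'' with ``$\stratset^{n+1}_i$'' as used in the lemma, and observing that restricting to histories of $\G\setminus T_i^n$ in the definition of $\stratset_i(\G\setminus T_i^n)$ is harmless because every such history is in particular a history of $\G$, on which the lemma applies. Since the conclusion $(s,s')\notin T_i^n$ is obtained at every such $h$, we get $\sigma_i \in \stratset_i(\G\setminus T_i^n)$, as required.
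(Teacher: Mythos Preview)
Your proof is correct. Both your argument and the paper's rest on the same fact---an admissible strategy never strictly decreases its owner's value---but you package it differently: you invoke Lemma~\ref{lem:necessaryAdm} directly (together with Proposition~\ref{prop:locallost} and the unfolding to turn history-values into state-values), whereas the paper's printed proof inlines the argument by doing a small case analysis on $\val^n_i(s)\in\{1,0,-1\}$ and showing in each case that $\val^n_i(\sigma_i(h))$ cannot drop. Your route is more modular and slightly stronger (you get equality of values rather than just non-decrease, which is all that is needed to exclude $(s,s')\in T_i^n$); the paper's route is self-contained in that it does not rely on Lemma~\ref{lem:necessaryAdm}, which is stated in Section~\ref{sec:background} without proof. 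It is worth noting that the paper itself contains, commented out, exactly your proof via the prefix-independent version Lemma~\ref{lem:admvaleq}, so the authors clearly considered this route as well.
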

\begin{proof}
  We show that if \player{i} plays an strategy~$\sigma_i$ admissible w.r.t. $\stratset^n$, \ie $\sigma_i \in \stratset^{n+1}_i$, then the value cannot decrease on a transition controled by \player{i}.
  Let $\rho \in \outcome(\sigma_i,\sigma_{-i})$ with $\sigma_{-i} \in \stratset^n_{-i}$ and $\sigma_i \in \stratset_i^{n+1}$, and $\rho_k \in V_i$.
  Let $s' = \sigma_i(\rho_{\le k})$:
  \begin{itemize}
  \item If $\val^n_i(\rho_k)=1$, then $\sigma_i$ has to be a winning against all strategy of $\stratset^n_{-i}$, otherwise it would be weakly dominated by such a strategy.
    Since there is no such strategy from a state with value $\val^n_i \leq 0$, $\val^n_i(s')=1$.
  \item If $\val^n_i(s)=0$, then there is a profile $\sigma_{-i} \in \stratset^n_{-i}$ such that $\rho = \outcome(\sigma_i,\sigma_{-i}) \in \win{i}$.
    Note that $h \cdot s \cdot s' \prefix \rho$.
    If $\val^n_i(s')=-1$, there can be no such profile, thus $\val^n_i(s')\ge 0$.
  \item If $\val^n_i(s)=-1$, the value cannot decrease.\qedhere
  \end{itemize}
\end{proof}

\begin{example}
  In \figurename~\ref{fig:exsafety}, initially, $q_4$ has value $-1$ for \player{1}, but $q_0$ has value~$0$ since it is possible to loop in $q_1$ and $q_0$ (if \player{2} helps).
  So, the transition to state $q_4$ is dominated and removed at the first iteration.
  Then, \player{2} has a winning strategy from $q_1$, by always going back to $q_0$, whereas the state $q_2'$ has value $0$ for him.
  Hence $q_1 \rightarrow q_2'$ is removed after this iteration.
  The fix-point is obtained at that step, it is represented in \figurename~\ref{fig:after-elim}.
\end{example}

We have seen that removing dominated transitions only removes strictly dominated strategies.
The converse is also true, all strategies that remain are not dominated:

\begin{proposition}\label{prop:completeness}
  All strategies of $\stratset^n_i \cap \stratset_i(\G \setminus T_i^n)$ are admissible with respect to $\stratset^n$.
\end{proposition}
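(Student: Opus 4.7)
The plan is by contradiction. Suppose some $\sigma'_i \in \stratset^n_i$ strictly dominates $\sigma_i$, so that by definition there exist $s^* \in V$ and $\tau^* \in \stratset^n_{-i}$ with $\sigma'_i$ winning from $s^*$ against $\tau^*$ while $\sigma_i$ loses.

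I would first prove a key invariance lemma: any $\sigma_i \in \stratset^n_i \cap \stratset_i(\G \setminus T_i^n)$ is winning from every state $s$ with $\val^n_i(s) = 1$ against every $\tau \in \stratset^n_{-i}$. The proof shows that the condition $\val^n_i = 1$ is preserved along the outcome. At an $i$-turn the value does not decrease because $\sigma_i$ avoids $T_i^n$-transitions. At an opponent turn from a value-$1$ state, the witness strategy $\sigma_i^\star \in \stratset^n_i$ realizing $\val^n_i(s) = 1$ continues to win against every $\stratset^n_{-i}$-continuation after the move, so the successor also has value~$1$. Bad states for $i$ have value $-1$, hence the outcome never enters one and $\sigma_i$ wins.

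Applying this lemma to $s^*$ rules out $\val^n_i(s^*) = 1$ (it would force $\sigma_i$ to win) and $\val^n_i(s^*) = -1$ (it would forbid $\sigma'_i$ from winning), leaving $\val^n_i(s^*) = 0$. I then locate the first history $h^*$ on $\outcome_{s^*}(\sigma_i,\tau^*)$ where $\sigma_i$ and $\sigma'_i$ diverge; necessarily $\last(h^*) \in V_i$. A parallel value analysis yields $\val^n_i(\last(h^*)) = \val^n_i(\sigma_i(h^*)) = \val^n_i(\sigma'_i(h^*)) = 0$, using both the invariance lemma and the observation that a successor with value~$1$ could be composed with a universal winner to contradict $\val^n_i(\last(h^*)) = 0$.

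The main obstacle is exhibiting an opponent profile $\tau^{**} \in \stratset^n_{-i}$ against which $\sigma_i$ wins from $s^*$ but $\sigma'_i$ loses, which contradicts $\sigma'_i \dom{\stratset^n} \sigma_i$. I would build $\tau^{**}$ by having it coincide with $\tau^*$ up to $h^*$ and then case-split on $i$'s next move: after $\sigma_i(h^*)$, play a cooperative $\tau \in \stratset^n_{-i}$ witnessing a winning outcome from that successor (such a $\tau$ exists because a value-$0$ state admits at least one winning combination); after $\sigma'_i(h^*)$, play an adversarial $\tau \in \stratset^n_{-i}$ beating $\sigma'_i$ from that successor (possible because no strategy in $\stratset^n_i$, in particular $\sigma'_i$, is universally winning from a value-$0$ state). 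The delicate point is verifying that this piecewise $\tau^{**}$ belongs to $\stratset^n_{-i}$; this relies on Proposition~\ref{prop:locallost}, which makes the value state-dependent in the partially unfolded game and thereby allows concatenating admissible strategies at histories sharing the same last state without breaking admissibility.
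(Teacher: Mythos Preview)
Your plan is essentially the paper's: find the first divergence point, argue both successors have value~$0$, and splice together an opponent profile that helps $\sigma_i$ after its move while hurting $\sigma'_i$ after its move. The paper phrases this as a case split on the value at the divergence point (handling the value-$1$ case via exactly the invariance argument you isolate as a separate lemma), but the content is identical.

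There is, however, a genuine gap in your construction of $\tau^{**}$. You justify the existence of a cooperative $\tau$ after $\sigma_i(h^*)$ by saying ``a value-$0$ state admits at least one winning combination.'' But value~$0$ at $s_2 = \sigma_i(h^*)$ only guarantees \emph{some} profile $(\sigma_i^w,\tau^w)\in\stratset^n$ with a winning outcome from $s_2$; it does not say that \emph{your fixed} $\sigma_i$ can be made to win from $s_2$ against a suitable $\tau\in\stratset^n_{-i}$. This is precisely where the safety structure is used, and the paper makes it explicit: take $\tau$ to be any opponent profile in $\stratset^n_{-i}$ that never steps from a state of value $\geq 0$ to one of value $-1$ (such a successor always exists, otherwise the state would itself have value $-1$). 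Since $\sigma_i$ avoids $T_i^n$, it never decreases its own value either; hence the outcome of $(\sigma_i,\tau)$ from $s_2$ stays at value $\geq 0$ forever, so it never enters $\bad_i$, so it is winning. Without this safety-specific argument your step does not follow; with it, the rest of your proof goes through.

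A minor stylistic difference: you allow $\sigma'_i$ to be an arbitrary strategy in $\stratset^n_i$ and argue directly that $\val^n_i(\sigma'_i(h^*))\leq 0$ (a value-$1$ successor would force $\val^n_i(\last(h^*))=1$). The paper instead assumes $\sigma'_i\in\stratset_i(\G\setminus T_i^n)$ as well, which tacitly uses that any dominated strategy is dominated by an admissible one. Your route is slightly cleaner on this point.
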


\begin{proof}
\RB{j'ai chang\'e pas mal de choses dans la preuve}
Let $\sigma_i,\sigma_i' \in \stratset^n_i \cap \stratset_i(\G \setminus T_i^n)$\RB{on devrait mettre $\sigma_i$ pour faire comme les sections d'apres} and assume $\sigma_i' \domstr{\stratset^n} \sigma_i$.
Then there is a state~$s$ and strategy profile $\sigma_{-i} \in \stratset_{-i}^n$ such that $\win{i}^s(\sigma_i',\sigma_{-i}) \wedge \neg \win{i}^s(\sigma_i,\sigma_{-i})$.
Let $\rho = \outcome_s(\sigma_i,\sigma_{-i})$ and $\rho' = \outcome_s(\sigma_i',\sigma_{-i})$.
Consider the first position where these runs differ: write $\rho = w \cdot s' \cdot s_2 \cdot w'$ and $\rho' = w \cdot s' \cdot s_1 \cdot w''$.
Note that $s'$ belongs to \player{i}.

First remark that since $\win{i}(\sigma_i',\sigma_{-i})$, it is clear that $\val_i^n(s_1) \geq 0$.
Moreover, since $s' \rightarrow s_1$ and $s\rightarrow s_2$ do not belong to $T^n_i$, states $s'$, $s_1$ and $s_2$ must have the same value.

\smallskip
Assume $\val_i^n(s') = 0$.
  We show that there is a profile\footnote{Although the definition of the value yields the existence of a profile winning for $i$, it remains to be shown that there is such profile where $i$ plays strategy $\sigma_i$.} $\sigma_{-i}^2 \in \stratset^n_{-i}$ such that $\win{i}(\sigma_i,\sigma_{-i}^2)$ from $s_2$.
  %
  Let $h$ be a history such that $\last(h)\notin V_i$, if for all $\sigma^2_{-i} \in \stratset^n_{-i}$, $\val_i^n(\sigma^2_{-i}(h)) = -1$ then $\val_i^n(\last(h))=-1$.
  Therefore it is possible to define a strategy profile $\sigma_{-i}^2\in\stratset^n_{-i}$ that never decreases the value from $0$ or $1$ to $-1$.
  The strategy $\sigma_i$ itself does not decrease the value of \player{i} because it does not take transitions of $T^n_i$.
  So the outcome of $(\sigma_i,\sigma_{-i}^2)$ never reaches a state of value $-1$.
  Hence it never reaches a state in $\bad_i$ and therefore it is winning for \player{i}.
Now, $\val_i^n(s_1) = 0$ so there is no winning strategy for \player{i} from $s_1$ against all strategies of $\stratset^n_{-i}$.
Then there exists a strategy profile $\sigma_{-i}^1 \in \stratset^n_{-i}$ such that $\sigma_i'$ loses from $s_1$.
Now consider strategy profile $\sigma_{-i}'$ that plays like $\sigma_{-i}$ if the play does not start with $w$, then $\sigma_{-i}^1$ after $s_1$ and $\sigma_{-i}^2$ after $s_2$.
Given a history $h$:
\[
\sigma_{-i}'(h) = \left\{\begin{array}{l l}
\sigma_{-i}^1( h') ~ ~ & \textrm{if } w \cdot s_1 \prefix h \text{ and } w \cdot s_1 \cdot h' = h\\
\sigma_{-i}^2( h') & \textrm{if } w \cdot s_2 \prefix h \text{ and } w \cdot s_2 \cdot h' = h\\
\sigma_{-i}(h) & \textrm{otherwise}
\end{array}\right.
\]
Clearly we have $\win{i}^s(\sigma_i,\sigma_{-i}') \wedge \neg \win{i}^s(\sigma_i',\sigma_{-i}')$, which contradicts $\sigma_i' \dom{\stratset^n} \sigma_i$.

\smallskip

  Now assume $\val_i^n(s_2) = 1$.
  Since $\neg \win{i}(\sigma_i,\sigma_{-i})$, the produced outcome~$\rho$ reaches a state of $\bad_{i}$, hence the value of states along~$\rho$ is $-1$ after some point.
  Consider the first state $\rho_k$ which has value smaller or equal to $0$: $k = \min_{k'} \{ \rho_{k'} \mid \val_i^n(\rho_{k'}) \le 0 \}$.
  The state $\rho_{k-1}$ has value $1$, it is necessarily controlled by a \player{j} different from \player{i}, since transitions of $T^n_i$ cannot be taken by $\sigma_i$.
  Since there exists a winning strategy $\sigma_i \in \stratset^n_i$ from $\rho_{k-1}$ against strategies of $\stratset^n_{-i}$, then this strategy is still winning at $\rho_k$.
  Therefore $\val^n_i(\rho_k)=1$, which is a contradiction.
\end{proof}


\subsection{The winning coalition problem for safety objectives}

\begin{theorem}\label{th:safety-pspace}
The winning coalition problem with safety winning conditions is \PSPACE-complete.
However, if the number of players is fixed, the problem becomes \P-complete.
\end{theorem}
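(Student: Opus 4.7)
The plan is to establish four claims in turn: PSPACE membership, PSPACE-hardness, P membership (fixed player count), and P-hardness. The starting observation is that by Propositions~\ref{prop:correctness} and~\ref{prop:completeness}, once we work in the partially unfolded game (states in $V \times 2^\Agt$ encoding the current set $\lost$), the set $\stratset^*$ coincides with $\stratset(\G\setminus T^*)$, where $T^* = \bigcup_{n} T^n$ is the fixpoint of the transition-elimination operator. Since $\stratset^*$ is rectangular, the winning coalition question reduces to deciding whether there exists an infinite path in $\G\setminus T^*$ from the initial state $(\emptyset,s_0)$ whose set of losing players (which grows monotonically along the path) eventually covers $L$ but never meets $W$.

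For the PSPACE upper bound, the naive route is to materialize the partially unfolded game (of size $|V|\cdot 2^{|\Agt|}$) and run Algorithm~\ref{algo:safety} in polynomial time on it. This uses exponential space, so instead I will argue by an alternating polynomial-time procedure. The procedure existentially guesses an ultimately periodic candidate path $h \cdot c^\omega$ (a stem plus a lasso of polynomial length in the bit-representation of unfolded states). For each traversed transition $(s,t) \in V_i \times V$, it verifies on the fly that $(s,t)\notin T^*_i$, i.e., $\val^*_i(s)\le \val^*_i(t)$. Each value is itself computed by solving a two-player safety game restricted to $\G \setminus T^*$, whose transition queries recurse. The nesting of existential (choice of strategy witness) and universal (check against opposing profiles) quantifiers, combined with the fact that the iteration stabilizes after a bounded number of levels that can be indexed by a polynomial-size counter, lets the entire procedure run in APTIME $=$ PSPACE. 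A cleaner packaging is to recast the question as a two-player game with a circuit-based winning condition and invoke the PSPACE bound of~\cite{hunter07}.

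For PSPACE-hardness, I will reduce from QBF. Each quantified Boolean variable is controlled by a dedicated player; its safety objective is set so that an admissible strategy corresponds to a choice of Boolean value, and the iterated elimination of dominated strategies mimics the alternation of quantifiers (higher-level quantifiers are resolved in later elimination rounds). The final matrix of the QBF instance is encoded via the losing states that determine the safety objectives, and the target coalition $(W,L)$ captures satisfaction of the formula. The key design is arranging players so that at each iteration, the admissibility criterion forces exactly the desired quantifier semantics. For the fixed-players case, when $|\Agt|$ is constant the partially unfolded game has polynomial size, Algorithm~\ref{algo:safety} terminates in polynomial time and produces $T^*$ explicitly, and the coalition query becomes a ``reach a state with $L\subseteq \lost$ and then stay in states with $\lost \cap W = \emptyset$'' problem in $\G\setminus T^*$, which is solvable in polynomial time via standard attractor computation; P-hardness follows from a direct reduction of two-player safety (or reachability) games, which are known to be P-complete, by taking $\Agt=\{1,2\}$, $W=\{1\}$, $L=\emptyset$, and ensuring the iteration collapses trivially on the reduced instance.

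The main obstacle is the PSPACE membership argument: the fixpoint $T^*$ lives in an exponentially large game and its definition is self-referential, so one cannot simply iterate Algorithm~\ref{algo:safety} within polynomial space. The crux is therefore to orchestrate the alternating computation so that transition-membership queries and value computations are interleaved in a way whose alternation depth and working tape remain polynomial, which is what permits the APTIME characterization.
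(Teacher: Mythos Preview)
Your \PSPACE\ membership sketch has the right shape but is missing the load-bearing observation. You assert that the iteration ``stabilizes after a bounded number of levels that can be indexed by a polynomial-size counter'', but you never say why: the unfolded game has exponentially many states, so a priori the fixpoint could take exponentially many rounds and your recursion could have exponential depth. The paper's argument rests on the fact that along any path in the unfolded game the $\lost$-component is monotone and can strictly increase at most $|\Agt|$ times; this simultaneously bounds the length of any witness lasso by $|\Agt|\cdot|V|$ and the number of elimination phases by $|\Agt|\cdot|E|$. With that in hand, the on-the-fly value checks you describe have polynomial recursion depth and the \PSPACE\ argument goes through. Your alternative of ``recasting as a two-player game with a circuit condition and invoking~\cite{hunter07}'' is not justified: the winning-coalition question is not a two-player zero-sum game, and you give no reduction making it one.

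Your \PSPACE-hardness sketch takes a genuinely different route from the paper and, as stated, has a gap. You propose one player per variable, with successive elimination rounds resolving successive quantifiers. But you do not explain how to design safety objectives so that, say, an existential player's admissible set collapses to exactly the correct Boolean choice \emph{after} the inner quantifiers have been resolved; this is delicate because admissibility at round $n$ is defined against all profiles in $\stratset^n_{-i}$, not just the eventually-surviving ones. The paper sidesteps this entirely: it keeps two distinguished players \Eve and \Adam who make all quantifier choices in the ordinary game-tree fashion (no elimination is used for the quantifier prefix), and adds one player per \emph{literal} whose only move, in the propositional part, is to either lose himself or send the play to \Eve's bad state. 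Round~1 forces every literal player who has not already lost to make \Eve lose; round~2 then forces \Eve to select, in each clause, a literal that was made true by the valuation. Only two iterations are needed, independent of quantifier depth---the paper in fact records as a corollary that hardness already holds at $\stratset^2$.

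Your treatment of the fixed-player case (polynomial-size unfolding plus Algorithm~\ref{algo:safety} for membership, two-player safety games for hardness) is correct and matches the paper.
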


\ifShort
\begin{proof}[Proof sketch]
To decide the winning coalition problem, only the existence of a particular profile is required and the explicit construction of the unfolded graph is not necessary.
By guessing a lasso path produced by such a profile, and checking recursively that it does not contain dominated transition, we get \PSPACE~membership.

\medskip

The hardness proof is done by encoding instances of \QSAT.
Instead of detailing the whole construction, we illustrate it on an example in \figurename~\ref{fig:exsafetyhardness} for the following formula $\mu = \exists x_1 \forall x_2 \exists x_3 (x_1 \lor x_2 \lor \lnot x_3) \land (\lnot x_1 \lor x_2 \lor x_3)$.
  There are two players $x$ and $\lnot x$ for each variable $x$, plus two players \Eve and \Adam.
  The moves of \Eve and \Adam in the left part of the game determine a valuation: $x_i$ is said to be true if $\bad_i$ was reached.
  If a player $x_i$ has not yet lost, in the right part of the game, it is better for this player to visit the losing state of \Eve than its own.
  Hence, at the first step of elimination, the edges removed in the unfolded game correspond to the ones going to a state $\bad_{x_i}$ if $x_i$ is false (and $\lnot x_i$ if $x_i$ is true).
  At the second step of elimination, \Eve should avoid whenever possible, states corresponding to a literal whose valuation is not true, since those states will necessarily lead to $\bad_\shortEve$.
  If the valuation satisfies each clause, then she has the possibility to do so, and one admissible profile is winning for her:
  so $\mu$ is true if, and only if, there is a admissible profile where \Eve is winning.
\end{proof}
\fi

\begin{collect}{appendix-safety}{\subsection{Proof of Theorem~\ref{th:safety-pspace}}}{}
This theorem is proved in the following two lemmata.

\begin{proposition}\label{lem:pspace-easy}
The winning coalition problem is in \PSPACE.
\end{proposition}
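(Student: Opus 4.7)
My plan is to exhibit an alternating polynomial-time algorithm, which gives $\PSPACE$ membership since alternating polynomial time equals $\PSPACE$. First, I would appeal to Propositions~\ref{prop:correctness} and~\ref{prop:completeness}: in the partially unfolded game $\G^u$ (whose states are pairs $(\lost, v) \in 2^{\Agt} \times V$, each of polynomial size), iteratively admissible profiles are exactly those using only transitions outside $T^* = \bigcup_n T^n$, the fix-point of Algorithm~\ref{algo:safety}. A positive instance of the winning coalition problem is thus witnessed by an infinite path in $\G^u$ that avoids $T^*$, satisfies $\win{i}$ for every $i \in W$, and violates $\win{i}$ for every $i \in L$. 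A standard pumping argument ensures that such a path may be chosen of lasso shape $h \cdot c^\omega$ with $|h|, |c| \le |V| \cdot 2^{|\Agt|}$.

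Second, I would nondeterministically guess this lasso state by state. Each visited state $(\lost, v)$ occupies polynomial space; I additionally retain a guessed loop-start state and a polynomial-bit counter bounding the remaining length. The obligations ``$i \in W \Rightarrow i \notin \lost$'' and ``$i \in L$ is eventually covered by $\lost$'' are checked incrementally from the running $\lost$-component. Since only the current state and a constant amount of bookkeeping is retained, this part uses polynomial space.

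Third, the critical step is, after guessing each transition $(s, s')$ with $s \in V_i$, to verify that $(s, s') \notin T^*_i$, i.e.\ $\val^*_i(s) \le \val^*_i(s')$, where $\val^*$ denotes values in the subgame $\G^u \setminus T^*$. I would handle the joint fix-point $(T^*, \val^*)$ with an alternating subroutine: existentially guess both values in $\{-1, 0, 1\}$, then certify each one by an attractor-style two-player safety computation on the succinctly represented $\G^u$. Whenever that computation needs to decide whether an edge $(t, t')$ belongs to $E \setminus T^*_{j(t)}$ (with $j(t)$ the controller of $t$), it recursively invokes the same value comparison for that edge rather than materialising $T^*$.

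The principal obstacle is bounding the alternation depth of this verification, since a naive level-by-level unfolding of Algorithm~\ref{algo:safety} would give exponentially many alternations. I would circumvent this by not iterating over $n$ at all; instead, I would argue that $(T^*, \val^*)$ is uniquely determined by a system of local equations over $\G^u$, whose verification reduces to alternating attractor computations on a succinct safety game. Combining this oracle for $T^*$-membership with the outer lasso guess yields an alternating polynomial-time algorithm, placing the winning coalition problem in $\PSPACE$.
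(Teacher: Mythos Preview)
Your proposal has a genuine gap at exactly the point you yourself flag as the ``principal obstacle.'' The recursive scheme you describe for deciding whether an edge lies in $T^*$ has no termination argument: to certify $\val^*_i(s)$ you run an attractor computation on $\G^u\setminus T^*$, and every time this computation needs to know whether an edge $(t,t')$ is in $T^*$ you recurse into the \emph{same} value problem at $(t,t')$. Nothing prevents $(t,t')$ from leading back to a query about $(s,s')$, so the recursion is not well-founded. Your proposed fix---that $(T^*,\val^*)$ is ``uniquely determined by a system of local equations'' and that ``verification reduces to alternating attractor computations''---is asserted but not argued: you neither show that the fixpoint equations have a unique solution (they need not, in general), nor explain how to single out the intended least solution in polynomial alternation depth.

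The paper's argument supplies precisely the missing well-founded measure. It observes that along any path in the unfolded game the $\lost$-component is monotone non-decreasing and can strictly increase at most $|\Agt|$ times. This has two consequences you miss. First, the witnessing lasso can be taken of length $O(|\Agt|\cdot|\G|)$, polynomial in the input, whereas your bound $|V|\cdot 2^{|\Agt|}$ is exponential and is therefore incompatible with your stated APTIME framing (guessing an exponentially long path is not polynomial \emph{time}; you would have to retreat to NPSPACE for the outer loop). Second, and more importantly, the value $\val^n_i(\lost,s)$ depends only on the subgame reachable from $(\lost,s)$, and within that subgame the recursion can be organised so that every recursive call either goes to a strictly larger $\lost'$ or to a strictly smaller iteration index; this bounds the recursion stack polynomially. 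That monotonicity observation is the crux of the $\PSPACE$ upper bound, and your proposal does not contain it or any substitute for it.
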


\begin{proof}
First remark that although there is an exponential number of copies of the game over the graph $(V,E)$ that need to be considered with respect to which players have already lost, states can be ordered the following way: we say that $(\lost,s) \le (\lost',s')$ if $\lost\subseteq \lost'$.
Along any path the states are increasing for this order, it can increase strictly at most $|\Agt|$ times, and there are at most $|V|$ equivalent states.
In addition, the value, hence the elimination of transitions, only depends on the values of greater states, so the iterations stops after at most $|\Agt|\cdot|E|$ phases.

Therefore a procedure to find an iteratively admissible strategy winning at least for players of $W$ and losing at least for players of $L$ consists in guessing a lasso path $\rho$ that ends in a copy where $W$ has not lost and $L$ has.
This path has length bounded by $|\Agt| \cdot |\G|$.

However the algorithm needs to check that each transition taken by $\rho$ has indeed survived the elimination of transitions: this transition should not be dominated by any other.
This is done by recursively checking that a transition has survived the $j$-th elimination phase (recall that there can be at most $|\Agt|\cdot|E|$ such phases).
For a transition $\rho_k \rightarrow \rho_{k+1}$ to survive the $j$-th phase, the value of $\rho_{k+1}$ needs to be the same than that of $\rho_k$ for the player controlling~$s$.
 
To check $\val^n_i(\rho_k)$, we use the following procedure:
\begin{itemize}
\item if we fail to guess a lasso which does not intersect with $\bad_i$ in $\G\setminus T^n$ from state $\rho_k$, then $\val^n_i(\rho_k) = -1$.
  Note that looking for a path in $\G\setminus T^n$ implies recursively computing some values of iteration $j-1$;
\item if there is a winning strategy for \player{i} in the safety game $\G\setminus T^n$ with target $\bad_i$,\RB{comment on fait \c{c}a en \PSPACE?}
  then $\val^n_i(\rho_k) = 1$; note that this can be done by finding a strategy that either never visits a new set $\bad_j$ (hence not increasing for $\le$) or visiting a new set $\bad_j$ through a state of value $1$ for $i$ (this value being computed recursively, for details, see the more general proof of Theorem~\ref{thm:weakobj}).
\item in the other cases $\val^n_i(\rho_k) = 0$.
\end{itemize}
In all cases, the recursive calls can stack up to $|\Agt| \cdot |\G|$\RB{a verifier}, since they always traverse the set of states upwards (with respect to $\le$).
\qedhere
\end{proof}

\begin{proposition}\label{lem:pspace-hard}
  The winning coalition problem is \PSPACE-hard, even for sets of players $W,L \subseteq \Agt$ such that $|W|=1$ and $L=\emptyset$.
\end{proposition}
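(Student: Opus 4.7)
The plan is to reduce from \QSAT. Given a quantified Boolean formula
$\mu = Q_1 x_1\, Q_2 x_2 \cdots Q_n x_n\, \phi$ with $\phi$ in CNF, I would build in polynomial time a safety game $\G_\mu$ together with one designated player~\Eve such that $\mu$ is valid if and only if there is an iteratively admissible profile in $\G_\mu$ in which \Eve wins, i.e.\ the instance with $W = \{\Eve\}$ and $L = \emptyset$ is positive. Since \QSAT is \PSPACE-complete, this will give the matching lower bound to Proposition~\ref{lem:pspace-easy}.

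The game $\G_\mu$ uses $2n+2$ players: the two players \Eve and \Adam, plus one player per literal $x_i$ and $\lnot x_i$. It is organised in two successive gadgets, following the pattern illustrated by the authors in \figurename~\ref{fig:exsafetyhardness}. The \emph{valuation gadget} on the left is a linear chain in which \Eve controls the states associated with existentially quantified variables and \Adam the states associated with universal ones; at level~$i$ the controlling player picks one of two transitions, one going through $\bad_{x_i}$ (encoding ``$x_i$ is true'') and the other through $\bad_{\lnot x_i}$ (encoding ``$x_i$ is false''). After level~$n$ the play enters the \emph{verification gadget}, a small subgraph traversed once per clause of~$\phi$: for each clause $C$, a state controlled by \Adam offers a transition to each of the literal players appearing in $C$, and the literal player~$\ell$ can either go to its own $\bad_\ell$ or to $\bad_\shortEve$ before looping back to the next clause.

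The correctness argument proceeds in two rounds of elimination, mirroring the discussion after the example. In round one, each literal player~$\ell$ who has not yet lost strictly prefers to push \Eve to $\bad_\shortEve$ rather than to visit $\bad_\ell$; hence the transitions from $\ell$'s verification states that lead to $\bad_\ell$ are dominated, exactly when $\ell$ is false under the valuation produced by the valuation gadget. Consequently, after round one an \Adam-choice in a clause $C$ forces \Eve through $\bad_\shortEve$ as soon as he selects a literal that is false under the valuation. In round two, \Eve will try, in each clause state she can reach, to steer the play towards a literal that survived round one, since any other choice becomes strictly dominated; such a literal exists in every clause exactly when the valuation satisfies $\phi$. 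Putting these rounds together, the admissible profiles of $\G_\mu$ are in bijection with pairs (\Eve-policy, \Adam-policy) in the valuation gadget together with a choice of satisfied literal per clause, and \Eve wins in such a profile against every \Adam-policy if and only if \Eve has a winning strategy for~$\mu$.

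The main obstacle I anticipate is not the high-level intuition above, which follows the authors' sketch rather directly, but the careful bookkeeping needed to ensure that (i) no \emph{extra} dominance collapses happen that would allow \Eve to win even when $\mu$ is false (in particular that literal players really are indifferent at round zero between their two verification exits, so that no transition is eliminated prematurely), and (ii) the gadget loops round through all clauses infinitely often without opening unintended survival possibilities for \Eve in $\bad_\shortEve$. Both points can be addressed by sizing the $\bad$-sets so that every player's local preferences depend only on their own $\bad$-state and on $\bad_\shortEve$; once this is in place the reduction runs in polynomial time and the \PSPACE lower bound follows, even in the restricted case $|W|=1$, $L = \emptyset$ claimed in the statement.
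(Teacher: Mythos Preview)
Your overall architecture---a \QSAT reduction with $2n{+}2$ players, a valuation gadget followed by a per-clause verification gadget, and a two-round elimination argument---is precisely the paper's approach, and your account of round one (a literal player who has not yet lost prefers $\bad_\shortEve$ over its own $\bad_\ell$) is correct.

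There is, however, a genuine error in the gadget: you give the clause states to \Adam, whereas they must belong to \Eve (this is also what the figure you cite shows). With your ownership, after round one \Adam will in every clause pick a literal that is \emph{false} under the current valuation, since that literal's player is still alive and will send the play to $\bad_\shortEve$; hence \Eve survives the verification gadget only when \emph{every} literal of \emph{every} clause is true, which is not the CNF semantics and breaks the reduction. Your own round-two analysis already reveals the slip: ``\Eve will try, in each clause state she can reach, to steer the play towards a literal that survived round one'' only makes sense if \Eve owns those states. In the paper's construction the disjunction inside a clause is an \Eve-choice, the clauses are put in sequence (the conjunction), and after the last clause the play ends in a sink $\bad_\shortAdam$ rather than looping; with that ownership and that sink the concerns you list under (i) and (ii) disappear and the remainder of your argument goes through unchanged.
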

\end{collect}

\begin{collect}{appendix-safety}{\subsubsection{Proof of Proposition~\ref{lem:pspace-hard}}}{}
\begin{proof}
We encode a instance of \QSAT  into a game in which there is an
admissible strategy profile which is winning for \Eve if, and only if,
the formula is valid.

Given a formula~$\phi = \exists x_1 \forall x_2 \dots \psi$
we associate a game $\G_\phi$ 
in which there are one player for each literal $x_i$ or $\lnot x_i$
and two players \Eve and \Adam.
The construction is recursive separately over the quantifiers and over the propositional part.
If $\psi$ is a propositional formula:
\begin{itemize}
\item If $\psi = x_i$ then we define the module $M_\psi$ in which
player $x_i$ has a choice between making \Eve lose or lose himself and
let the game continue, this is represented in \figurename~\ref{fig:Mx}.
\item If $\psi = \lnot x_i$ then the construction is similar, with
player $\lnot x_i$ replacing $x_i$, see \figurename~\ref{fig:Mnx}.
\item If $\psi = \psi_1 \land \psi_2$ then we put the modules
  $M_{\psi_1}$ and $M_{\psi_2}$ in sequence, see \figurename~\ref{fig:Mand}.
\item If $\psi = \bigvee_i \psi_i$ then \Eve has the choice between
  all modules $M_{\psi_i}$, see \figurename~\ref{fig:Mor}.
\end{itemize}
If $\phi$ is a quantified formula:
\begin{itemize}
\item If $\phi = \exists x_i.\ \phi_1$ then \Eve has the choice
  between making $x_i$ or $\lnot x_i$ lose before continuing to
  $M_{\phi_1}$, see \figurename~\ref{fig:Mexists}. 
\item If $\phi = \forall x_i.\ \phi_1$ is similar but \Adam controls
  the choice, see \figurename~\ref{fig:Mforall}. 
\end{itemize}
Finally $\G_\phi$ is obtained by directing the remaining outgoing transitions of $M_\phi$ to a state losing for \Adam, see \figurename~\ref{fig:G}.
A full example of $\G_{\mu}$ with $\mu = \exists x_1 \forall x_2 \exists x_3 (x_1 \lor x_2 \lor \lnot x_3) \land (\lnot x_1 \lor x_2 \lor x_3)$ is given in \figurename~\ref{fig:exsafetyhardness}.
Note that any run in $\G_\phi$ winning for \Eve is losing for \Adam, and \emph{vice versa}.

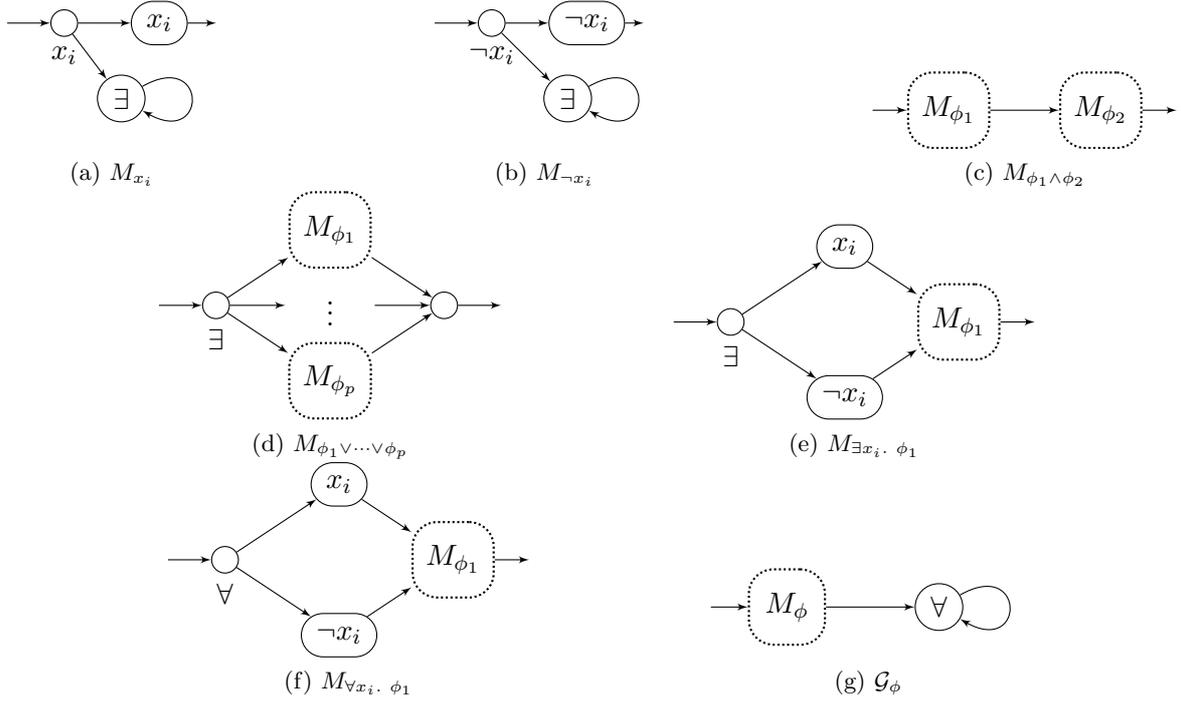
\begin{figure}
\centering
\tikzstyle{every state}+=[rounded rectangle,inner sep=5pt,minimum size=8pt]
\tikzstyle{module}+=[state,rectangle,draw,thick,densely dotted,minimum size=1cm,rounded corners=3.25mm]
\tikzstyle{every picture}+=[>=latex']
\subfloat[$M_{x_i}$]{\label{fig:Mx}
  \begin{tikzpicture}
  \draw (2,2) node[state,label=below:$x_i$] (L11) {};
  \draw (3.25,2) node[state] (M11) {$x_i$};

  \draw (2.75,1) node[state] (E) {\shortEve};

  \draw[->] (1.25,2) -- (L11);
  \draw[->] (L11) edge (M11);
  \draw[->] (L11) edge (E);
  \draw[->] (M11) edge (4,2);
  \draw[->] (E) edge[bigloop right] (E);
  \end{tikzpicture}
}
\hfill
\subfloat[$M_{\lnot x_i}$]{\label{fig:Mnx}
  \begin{tikzpicture}
  \draw (2,2) node[state,label=below:$\lnot x_i$] (L11) {};
  \draw (3.25,2) node[state] (M11) {$\lnot x_i$};

  \draw (3,1) node[state] (E) {\shortEve};

  \draw[->] (1.25,2) -- (L11);
  \draw[->] (L11) edge (M11);
  \draw[->] (L11) edge (E);
  \draw[->] (M11) edge (4,2);
  \draw[->] (E) edge[bigloop right] (E);
  \end{tikzpicture}
}
\hfill
\subfloat[$M_{\phi_1 \land \phi_2}$]{\label{fig:Mand}
  \begin{tikzpicture}
  \draw (1,2) node[module] (M1) {$M_{\phi_1}$};
  \draw (3,2) node[module] (M2) {$M_{\phi_2}$};

  \draw[->] (0,2) -- (M1);
  \draw[->] (M1) edge (M2);
  \draw[->] (M2) edge (4,2);
  \end{tikzpicture}
}

\hfill~
\subfloat[$M_{\phi_1 \lor \dots \lor \phi_p}$]{\label{fig:Mor}
  \begin{tikzpicture}
  \draw (0,1) node[state,label=below:\shortEve] (I) {};
  \draw (1.5,2) node[module] (M1) {$M_{\phi_1}$};
  \draw (1.5,1) node[inner xsep=15pt] (M2) {$\vdots$};
  \draw (1.5,0) node[module] (MP) {$M_{\phi_p}$};
  \draw (3,1) node[state] (F) {};
  
  \draw[->] (-0.75,1) -- (I);
  \draw[->] (I) edge (M1);
  \draw[->] (I) edge (M2);
  \draw[->] (I) edge (MP);
  \draw[->] (M1) edge (F);
  \draw[->] (M2) edge (F);
  \draw[->] (MP) edge (F);
  \draw[->] (F) edge (3.75,1);
  \end{tikzpicture}
}
\hfill~
\subfloat[$M_{\exists x_i.\ \phi_1}$]{\label{fig:Mexists}
  \begin{tikzpicture}
  \draw (0,1) node[state,label=below:\shortEve] (I) {};
  \draw (1.5,2) node[state] (X) {$x_i$};
  \draw (1.5,0) node[state] (NX) {$\lnot x_i$};
  \draw (3,1) node[module] (M1) {$M_{\phi_1}$};
  
  \draw[->] (-0.75,1) -- (I);
  \draw[->] (I) edge (X);
  \draw[->] (I) edge (NX);
  \draw[->] (X) edge (M1);
  \draw[->] (NX) edge (M1);
  \draw[->] (M1) edge (4,1);

  \end{tikzpicture}
}
\hfill~

\hfill~
\subfloat[$M_{\forall x_i.\ \phi_1}$]{\label{fig:Mforall}
  \begin{tikzpicture}
  \draw (0,1) node[state,label=below:\shortAdam] (I) {};
  \draw (1.5,2) node[state] (X) {$x_i$};
  \draw (1.5,0) node[state] (NX) {$\lnot x_i$}; 
  \draw (3,1) node[module] (M1) {$M_{\phi_1}$};

  \draw[->] (-0.75,1) -- (I);
  \draw[->] (I) edge (X);
  \draw[->] (I) edge (NX);
  \draw[->] (X) edge (M1);
  \draw[->] (NX) edge (M1);
  \draw[->] (M1) edge (4,1);
  \end{tikzpicture}
}
\hfill~
\subfloat[$\G_{\phi}$]{\label{fig:G}
  \begin{tikzpicture}
    \draw (3,1) node[module] (M1) {$M_{\phi}$};
    \draw (5,1) node[state] (F) {\shortAdam};

    \draw[->] (2,1) -- (M1);
    \draw[->] (M1) edge (F);
    \draw[->] (F) edge[bigloop right] (F);
  \end{tikzpicture}
}
\hfill~
\caption{Modules for the definition of the game $\G_\phi$.}
\end{figure}

Given a history of the game, 
we write $\lost(h)=\{ p \mid \exists i.\ h_i \in \bad_p \}$ 
the set of player who already lost on that path.
We associate to such a set of players~$\lost$, a partial
valuation~$v_\lost$ such that:
\[
v_\lost(x_i) = \left\{\begin{array}{ll}
1 & \textrm{if } x_i \in \lost \\
0 & \textrm{if } \lnot x_i \in \lost \wedge x_i \notin \lost \\
\textrm{undefined} & \textrm{otherwise} \\
\end{array}\right.
\]
Note that upon entering $M_\psi$, the module for the propositional part of $\phi$, $v_\lost$ is a valuation over variables $\{x_i\}_i$: each variable had exactly one literal visited.
We write $v_{\exists\forall}$ this valuation.

\medskip

Let $\ell$ be a literal.
Let $S_\ell^1$ be the set of strategies of player $\ell$ admissible with respect to all the other strategies, \emph{i.e.} $S_\ell^1$ is the set of admissible strategies after one elimination phase.
We claim that $S_\ell^1$ is exactly the set of strategies that, given a history $h$ such that $\ell \notin \lost(h)$ and $\last(h) \in V_\ell$, do not take the transition to state labeled $\ell$.

Indeed, if such a transition is taken by strategy $\sigma_i$, player $x_i$ loses while the strategy $\sigma_i'$ that mimics $\sigma_i$ up to this $M_{x_i}$ module then chooses the other transition is winning in this case and preforming in the same way otherwise, thus $\sigma_i' \domstr{\stratset} \sigma_i$.

Now if $\sigma_i$ is a strategy that never chooses the transition to states labeled $\ell$ unless $\ell \in \lost(h)$.
Assume $\sigma_i' \domstr{\stratset} \sigma_i$.
Let $\sigma_{-i}$ be a profile for the other players such that $\win{\ell}(\sigma_i',\sigma_{-i})$ and $\neg \win{\ell}(\sigma_i,\sigma_{-i})$.
Consider the first time $\sigma_i$ and $\sigma_i'$ diverged in these plays, after a history $h$.
If $\ell \in \lost(h)$, both plays are losing, hence it is the case that $\ell \notin \lost(h)$.
According to its constraints, $\sigma_i$ chooses to go state losing for \Eve, and thus wins since no other state, hence no state labeled $\ell$, is ever visited, which is a contradiction.

Remark that now all the choices that remain for player $\ell$ is when he has already lost.
Thus the set of admissible strategies for player $\ell$ is stabilized for these players after one iteration.

The set of strategies $S_\Eve^1$ and $S_\Adam^1$ of admissible strategies for the first iterations for players \Eve and \Adam, respectively, are identical to the initial sets of strategies for these players.
Indeed, \Eve can be made to lose or win by the coalition of other players regardless of her choices.
Since the objectives of \Adam and \Eve are opposite, this is also the case for \Adam.
For any history that is not already in a sink state where \Eve has lost (then nobody has choices), other players can play the two following profiles:
\begin{enumerate}
\item To make \Eve win:
\begin{itemize}
\item when in a module $M_\ell$, player $\ell$ always chooses to go to the state labeled $\ell$ (even though he loses);
\item what \Adam plays is irrelevant.
\end{itemize}
This ensures progress toward the end state where \Eve wins (and \Adam loses).
\item To make \Eve lose:
\begin{itemize}
\item when in a module $M_\ell$, player $\ell$ always chooses to go to the state labeled $\exists$;
\item what \Adam plays is irrelevant.
\end{itemize}
Since these modules are encountered for every path \Eve may choose, she loses\footnote{Unless $\phi$ is the empty formula, in which case the game is only the sink state, thus there are no real strategies.} (and \Adam wins).
\end{enumerate}

\subparagraph{First, assume that $\phi$ is satisfiable.}
We then show that the admissible strategies of \Eve in the second phase of strategy elimination, $S_\Eve^2$, are the ones corresponding to a satisfaction of $\phi$.
Namely, the choices of \Eve in choosing the value of existentially quantified variables should yield a true propositional formula $\psi$, which truth is proved by \Eve choosing the proper clause in the case of disjunctions.
Indeed, if \Eve follows such a strategy the only modules $M_\ell$ entered are ones where literal $\ell$ has been set as true by valuation $v_{\exists\forall}$.
So player $\ell$ has already lost in the history of the play, hence transitions to states labeled by $\ell$ are possible.
Since it is still possible to make \Eve lose from these states, one can still devise strategy profiles of other players that cooperate only with a given strategy of \Eve, thus ensuring that it is not dominated.

On the other hand, if a false literal $\ell$ is encountered, which will occur if the valuation $v_{\exists\forall}$ does not satisfy $\psi$ or if \Eve tries to validate a false part of a disjunction, then the player $\ell$ has no choice but to make \Eve lose, since he must play a strategy of $S_\ell^1$.
So all strategies that do not correspond to the validation of $\phi$ are losing even with cooperation of other players, so they are not admissible\footnote{Since some admissible strategy exists.}.

Similarly, admissible strategies of \Adam, $S_\Adam^2$, require him to choose, if possible (that means if \Eve picked the wrong value for a variable), values for universally quantified variables that makes $\psi$ false.\MS{Dans ce cas c'est dr\^ole mais les admissibles ressemblent \`a des subgame perfects. C'est exactement elles? Y a-t-il des liens?}

The iteration of admissible strategies stops after these two steps.
Indeed, the only difference from \Eve's point of view is the strategies of \Adam, which cannot change the result provided \Eve plays to satisfy $\phi$.
Conversely, \Adam has no better choice than to try to falsify the formula, even if this case does not arise anymore since strategies of $S_\Eve^2$ do not make such mistakes.

Therefore the iteratively admissible strategies of $\G_\phi$ are the ones in $S_\Eve^2 \times S_\Adam^2 \times S_{x_1}^1 \times S_{\lnot x_1}^1\times\cdots$.
And for any strategy of $S_\Eve^2$, the strategy profile such that all literal players chose to go to $\ell$ whenever possible (\emph{i.e.} whenever they have already lost) is winning for \Eve.

\subparagraph{Now assume that $\phi$ is not satisfiable.}
%
On the other hand, strategies for \Adam are not dominated, $S_\Adam^2$, are actually winning: \Adam can choose to play a valuation such that $v_{\exists\forall} \nvDash \psi$.
With such a valuation, for any choice of \Eve in the disjunctions, a module $M_\ell$ is eventually encountered where $v_{\exists\forall}(\ell)=0$, so player $\ell$ has no choice but to make \Eve lose, or equivalently to make \Adam win.

Any further elimination is thus useless: any iteratively admissible profile has a strategy for \Adam that makes him win, so \Eve loses.
\qedhere
\end{proof}
\end{collect}

\begin{figure*}[htbp]
\centering
\begin{tikzpicture}[yscale=1,xscale=1.2]
\tikzstyle{every state}+=[rounded rectangle,inner sep=5pt,minimum size=8pt]
\tikzstyle{module}+=[state,rectangle,draw,thick,densely dotted,minimum size=1cm,rounded corners=3.25mm]
\tikzstyle{every picture}+=[>=latex']

\begin{scope}[rotate=90]

\node[state,label=below:\shortEve] (set1) at (0,-2.1) {};
\node[state] (t1) at (1.2,-2.75) {$x_1$};
\node[state] (f1) at (-1.2,-2.75) {$\lnot x_1$};

\node[state,label=below:\shortAdam] (set2) at (0,-3.4) {};
\node[state] (t2) at (1.2,-4.05) {$x_2$};
\node[state] (f2) at (-1.2,-4.05) {$\lnot x_2$};

\node[state,label=below:\shortEve] (set3) at (0,-4.7) {};
\node[state] (t3) at (1.2,-5.35) {$x_3$};
\node[state] (f3) at (-1.2,-5.35) {$\lnot x_3$};

\node[state,label=below:\shortEve] (ch1) at (0,-6) {};
\node[state,label=below:$x_1$] (ch1a1) at (2,-7) {};
\node[state,label=below:$x_2$] (ch1a2) at (0,-7) {};
\node[state,label=below:$\lnot x_3$] (ch1a3) at (-2,-7) {};

\node[state] (l11) at (2.75,-8.25) {$x_1$};
\node[state] (l1e1) at (1.75,-8.25) {$\shortEve$};
\node[state] (l12) at (0.75,-8.25) {$x_2$};
\node[state] (l1e2) at (-0.75,-8.25) {$\shortEve$};
\node[state] (l13) at (-1.75,-8.25) {$\lnot x_3$};
\node[state] (l1e3) at (-2.75,-8.25) {$\shortEve$};

\node[state,label=below left:\shortEve] (ch2) at (0,-9.5) {};
\node[state,label=below:$\ \ \lnot x_1$] (ch2a1) at (2,-10.5) {};
\node[state,label=below:$x_2$] (ch2a2) at (0,-10.5) {};
\node[state,label=below:$x_3$] (ch2a3) at (-2,-10.5) {};

\node[state] (l21) at (2.75,-11.75) {$\lnot x_1$};
\node[state] (l2e1) at (1.75,-11.75) {$\shortEve$};
\node[state] (l22) at (0.75,-11.75) {$x_2$};
\node[state] (l2e2) at (-0.75,-11.75) {$\shortEve$};
\node[state] (l23) at (-1.75,-11.75) {$x_3$};
\node[state] (l2e3) at (-2.75,-11.75) {$\shortEve$};

\node[state] (ad) at (0,-13) {$\shortAdam$};

\draw[-latex'] (0,-1.6) -- (set1);
\draw[-latex'] (set1) edge (t1);
\draw[-latex'] (t1) edge (set2);
\draw[-latex'] (set1) edge (f1);
\draw[-latex'] (f1) edge (set2);

\draw[-latex'] (set2) edge (t2);
\draw[-latex'] (t2) edge (set3);
\draw[-latex'] (set2) edge (f2);
\draw[-latex'] (f2) edge (set3);

\draw[-latex'] (set3) edge (t3);
\draw[-latex'] (t3) edge (ch1);
\draw[-latex'] (set3) edge (f3);
\draw[-latex'] (f3) edge (ch1);

\draw[-latex'] (ch1) edge (ch1a1);
\draw[-latex'] (ch1) edge (ch1a2);
\draw[-latex'] (ch1) edge (ch1a3);

\end{scope}
\begin{scope}
\tikzstyle{loop below}=[loop right]

\draw[-latex'] (ch1a1) edge (l11);
\draw[-latex'] (ch1a1) edge (l1e1);
\draw[-latex'] (l1e1) edge[loop below] (l1e1);

\draw[-latex'] (ch1a2) edge (l12);
\draw[-latex'] (ch1a2) edge (l1e2);
\draw[-latex'] (l1e2) edge[loop below] (l1e2);

\draw[-latex'] (ch1a3) edge (l13);
\draw[-latex'] (ch1a3) edge (l1e3);
\draw[-latex'] (l1e3) edge[loop below] (l1e3);

\draw[-latex'] (l11) edge[bend left=20] (ch2);
\draw[-latex'] (l12) edge (ch2);
\draw[-latex'] (l13) edge[bend right=20] (ch2);

\draw[-latex'] (ch2) edge (ch2a1);
\draw[-latex'] (ch2) edge (ch2a2);
\draw[-latex'] (ch2) edge (ch2a3);

\draw[-latex'] (ch2a1) edge (l21);
\draw[-latex'] (ch2a1) edge (l2e1);
\draw[-latex'] (l2e1) edge[loop below] (l2e1);

\draw[-latex'] (ch2a2) edge (l22);
\draw[-latex'] (ch2a2) edge (l2e2);
\draw[-latex'] (l2e2) edge[loop below] (l2e2);

\draw[-latex'] (ch2a3) edge (l23);
\draw[-latex'] (ch2a3) edge (l2e3);
\draw[-latex'] (l2e3) edge[loop below] (l2e3);

\draw[-latex'] (l21) edge[bend left=20] (ad);
\draw[-latex'] (l22) edge (ad);
\draw[-latex'] (l23) edge[bend right=20] (ad);
\draw[-latex'] (ad) edge[loop below] (ad);

\end{scope}
\end{tikzpicture}
\caption{Game $\G_{\mu}$ with $\mu = \exists x_1 \forall x_2 \exists x_3 (x_1 \lor x_2 \lor \lnot x_3) \land (\lnot x_1 \lor x_2 \lor x_3)$. A label $y$ \emph{inside} a state $s$ denotes that $s\in\bad_y$; a label $y$ \emph{below} a state $s$ denotes that $s\in V_y$. Note that \eve is abbreviated to \shortEve\ and \adam is abbreviated to \shortAdam.}
\label{fig:exsafetyhardness}
\end{figure*}

\begin{collect}{appendix-safety}{}{}
As shown in the above proof, the complexity stems from the number of players rather that from the number of necessary iteration needed to reach the set of iteratively admissible strategies:
\begin{corollary}\MS{C'est en fait un scholie :-)}
Deciding whether there exists a profile $\profile$ of strategies admissible after $2$ iterations such that $\states{\outcome(\profile)} \cap \bad_i = \emptyset$ is \PSPACE-hard.
\end{corollary}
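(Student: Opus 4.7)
The plan is to reuse the \QSAT{} reduction from the proof of Proposition~\ref{lem:pspace-hard} essentially verbatim, taking $i = \shortEve$ in the statement of the corollary. The only thing to be added on top of that proof is the observation that in the game~$\G_\phi$ constructed there, the iteration of admissible strategies stabilises after exactly two rounds, i.e.\ $\stratset^2 = \stratset^*$. Once this is established, the set of profiles admissible after $2$~iterations coincides with the set of iteratively admissible profiles, and since $\win{\shortEve} = (V\setminus \bad_\shortEve)^\omega$ the condition $\states{\outcome(\profile)} \cap \bad_\shortEve = \emptyset$ is exactly ``\Eve wins along~$\outcome(\profile)$''. The decision problem of the corollary, restricted to $\G_\phi$, is then the very question ``does \Eve have a winning iteratively admissible profile in $\G_\phi$?'', which is already shown to be equivalent to the validity of~$\phi$.

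First I would isolate the three stabilisation facts already implicit in the earlier case analysis: literal players reach their admissible set after one round (their only remaining freedom is at states where they have already lost, and choices there are payoff-irrelevant for them); \Eve's and \Adam's admissible sets are unchanged at round~$1$ (other players' profiles can still force either outcome for them regardless of their play); and at round~$2$, \Eve's strategies that try to validate a false disjunct or a false literal become strictly dominated, and symmetrically for \Adam. Then I would argue that round~$3$ removes nothing, using the cooperation profiles exhibited in the proof of Proposition~\ref{lem:pspace-hard} to witness that every surviving strategy of \Eve (resp.\ \Adam, resp.\ a literal player) is still non-dominated with respect to $\stratset^2$.

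The main obstacle is precisely this last step, showing that the elimination really does terminate at round~$2$. However it is rather a careful packaging than a new argument: the characterisations of $S^1_\ell$, $S^2_\shortEve$, $S^2_\shortAdam$ given in the earlier proof are already tight, and the cooperative profiles constructed there (literal players going to their own ``$\ell$''-state, respectively to the \Eve-losing loop) provide, for any remaining strategy, a surviving profile under which it is winning/losing as needed. Combined with the \PSPACE-hardness of \QSAT, this yields the claimed \PSPACE-hardness of the bounded-iteration variant of the problem.
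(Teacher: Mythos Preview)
Your proposal is correct and is exactly the paper's approach: the corollary is stated right after the proof of Proposition~\ref{lem:pspace-hard} as an immediate consequence of that proof, the point being that in the game~$\G_\phi$ the elimination process stabilises after two rounds, so $\stratset^2=\stratset^*$ and the question for $i=\shortEve$ coincides with the instance of the winning coalition problem already shown equivalent to \QSAT. Your sketch just makes explicit the stabilisation argument that the paper leaves embedded in the preceding proof.
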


\end{collect}

\section{Prefix-independent objectives}\label{sec:prefix-independent}

Our main result for this section is stated in the following theorem:
\begin{theorem}\label{thm:circuitPspace}
The winning coalition problem with a Muller condition for each player is \PSPACE-complete.
The problem is \PSPACE-hard even when restricted to two players.
\end{theorem}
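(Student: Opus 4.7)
The plan is to extend the approach from the safety case to prefix-independent (Muller) objectives, exploiting that for such objectives the value becomes a function of the current state only (Proposition~\ref{prop:valuestate}), which avoids the exponential blowup needed in the safety case to record the set of losing players.

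For the \PSPACE\ upper bound I follow the template of the safety algorithm: nondeterministically guess a lasso-shaped run realizing the winning coalition constraints---the loop part of the lasso is exactly $\inff{\rho}$, so each player's Muller circuit can be evaluated on this set in polynomial time---then verify that every transition along the lasso survives all elimination phases. Since each phase strictly removes at least one transition from $E$, the iteration stops after at most $|E|$ phases; for each transition $(s,s')$ controlled by player~$i$ along the guessed lasso, the non-domination check at iteration $n$ amounts to $\val^n_i(s)=\val^n_i(s')$ by Lemma~\ref{lem:necessaryAdm} together with its converse for prefix-independent objectives. Computing $\val^n_i(s)$ itself reduces to solving a two-player Muller game in the subgame $\G\setminus T^{n-1}$, which is in \PSPACE\ by~\cite{hunter07}; the recursion over $n$ nests these calls to polynomial depth and each frame uses polynomial space, so the whole procedure runs in \PSPACE.

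For the \PSPACE-hardness with only two players, I reduce from the two-player zero-sum Muller game with a circuit-based winning condition, itself \PSPACE-complete~\cite{hunter07}. Given such a game $\G_0$ with a winning condition $W_1$ for player~$1$, I view it as a two-player non-zero-sum game in which player~$2$'s objective is $V^\omega\setminus W_1$, and I ask the winning coalition problem with $W=\{1\}$ and $L=\emptyset$. If player~$1$ has a uniformly winning strategy in $\G_0$, then this strategy cannot be strictly dominated (winning from every winnable state is already optimal), so it remains iteratively admissible and, combined with any admissible strategy of player~$2$, it yields an admissible profile in which player~$1$ wins. Conversely, by Muller determinacy, if player~$1$ loses $\G_0$ from the initial state then player~$2$ has a uniformly winning strategy that strictly dominates every non-winning one, so every iteratively admissible strategy of player~$2$ is winning from the initial state and player~$1$ loses in every admissible profile.

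The main technical obstacle I expect is establishing the converse of Lemma~\ref{lem:necessaryAdm}, namely that for prefix-independent objectives a strategy preserving $\val^n_i$ at every controlled transition is in fact admissible. Unlike the local surgery used for safety in Proposition~\ref{prop:completeness}, witness strategy profiles here must be assembled from winning strategies across distinct value regions over arbitrary suffixes, so I would expect to rely on the general machinery developed in the remainder of Section~\ref{sec:prefix-independent} to make this step precise.
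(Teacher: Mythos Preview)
Your hardness reduction for two players is correct and matches the paper's one-line appeal to~\cite{hunter07}: in a zero-sum Muller game the uniformly winning strategy of the winner is iteratively admissible and strictly dominates every non-winning strategy of that player, so the winning coalition problem with $W=\{1\}$, $L=\emptyset$ decides the zero-sum winner.

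The upper-bound argument, however, has a genuine gap. The converse of Lemma~\ref{lem:necessaryAdm} that you rely on---``preserving $\val^n_i$ at every controlled transition implies admissibility''---is \emph{false} for prefix-independent objectives. The paper's Example~\ref{ex:dj} (\figurename~\ref{fig:first-reach-game}) is a direct counterexample: the strategy of \player{1} that loops forever in $s_0\cdot s_1$ keeps value~$0$ at every step, yet it is weakly dominated by any strategy that eventually moves to $s_2$, since the loop can never win whereas leaving it may. The correct characterization (Lemma~\ref{lem:dj}) adds a global B\"uchi-type requirement: a value-$0$ run is the outcome of an admissible strategy for \player{i} only if it is winning for~$i$ or visits a ``Help!''-state of $H^n_i$ infinitely often. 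This is not a local transition property and cannot be captured by edge removal.

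Both halves of your algorithm break as a consequence. The lasso check would accept runs such as $(s_0 s_1)^\omega$ that are not outcomes of any admissible profile, so the procedure is unsound. And the value computation is wrong because $\stratset^n_i \subsetneq \stratset_i(\G\setminus T^{n-1})$ in general, so ``solving a Muller game in the subgame $\G\setminus T^{n-1}$'' does not compute $\val^n_i$. The paper instead encodes membership in $\outcome(\stratset^n)$ as a Muller acceptance condition via the automata $\A^n_i$ of Section~\ref{sec:automataDef}, then folds this condition into the objectives $\Theta^n_i$ and $\Psi^n_i$ (Section~\ref{sec:valcomput}) used to compute values; the key point is that these composite conditions remain polynomial-size circuits, so each value query is still a \PSPACE\ Muller game. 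Your final paragraph anticipates needing ``the general machinery'' of Section~\ref{sec:prefix-independent}, but that machinery is not a way to \emph{prove} the converse you state---it is a replacement for it.
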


\PSPACE-hardness follows from \PSPACE-hardness of two-player games with Muller conditions~\cite{dawar13}.

The idea of the algorithm 
is to construct 
a graph representation of the outcomes of admissible profiles.
While the construction also relies on the notion of value, it is more involved than for safety conditions.



\ifShort\relax\else
After proving Theorem~\ref{thm:circuitPspace}, we investigate the special case of B\"uchi winning conditions, for which we obtain a better complexity:
\begin{theorem}\label{thm:Buchi}
The winning coalition problem with B\"uchi objectives is in $\NP\cap\coNP$. 
Moreover, if there exists a polynomial algorithm for solving two-player parity games, then winning coalition problem with B\"uchi objectives is in~$\P$.
\end{theorem}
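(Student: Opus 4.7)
The strategy is to specialize the general construction of Section~\ref{sec:prefix-independent} to B\"uchi objectives and show that every non-trivial subproblem reduces to a two-player parity game of polynomial size and polynomial index. Since parity games lie in $\UP \cap \coUP \subseteq \NP \cap \coNP$~\cite{Jur98}, and fall into $\P$ conditionally on a polynomial-time algorithm for parity, this will immediately yield both claims.

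First I would show that each value $\val^n_i(s)$ is the winner of a two-player zero-sum parity game of polynomial size whose index grows by only a bounded amount as $n$ increases. For $n = 0$, deciding $\val^0_i(s) = 1$ is a B\"uchi game and deciding $\val^0_i(s) = -1$ is a co-B\"uchi game, both special cases of parity. For the inductive step, the restriction $\sigma_{-i} \in \stratset^n_{-i}$ is captured by the value characterization of admissibility: by Lemma~\ref{lem:necessaryAdm} and the refinements developed in Section~\ref{sec:prefix-independent}, admissible strategies preserve their owner's value along the outcome, a property one can encode by an extra priority layer on top of the B\"uchi condition. Combined with the fact that the iteration stabilizes after polynomially many rounds (values live in $\{-1,0,1\}$ over a finite state space, so the partition into value classes can only coarsen polynomially often), this yields polynomially many parity-game instances, each of polynomial size and index.

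Once the values $\val^*_i$ at every state have been computed via the parity oracle, the winning coalition problem reduces, following the edge-dominance characterization generalized from Section~\ref{sec:safety} to the prefix-independent setting of Section~\ref{sec:prefix-independent}, to the existence of a lasso in the game graph that (i)~uses only edges compatible with the values and (ii)~satisfies the B\"uchi objective of every player in $W$ while violating that of every player in $L$. This is a reachability-plus-recurrence problem on a product graph of polynomial size, solvable in polynomial time by SCC analysis. The main obstacle is the first step: carefully controlling the index of the parity games encoding $\stratset^n_{-i}$ so that the reduction genuinely improves on the $\PSPACE$ bound of the general Muller case and collapses to polynomial time exactly when parity games do.
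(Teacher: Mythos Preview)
Your high-level outline is the right one and matches the paper: reduce the value computation at each iteration to a two-player parity game of polynomial size and index, bound the number of iterations by $|\Agt|\cdot|V|$, and finish with a polynomial-time emptiness check. However, two of your steps hide the actual work, and as written they would not go through.

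First, your inductive step is incorrect as stated. You write that admissibility is ``captured by the value characterization'' (Lemma~\ref{lem:necessaryAdm}) and that this can be encoded by ``an extra priority layer''. But for prefix-independent objectives, value preservation along edges is only \emph{necessary}, not sufficient: this is exactly the content of Lemma~\ref{lem:dj} and Example~\ref{ex:dj}. An admissible strategy whose outcome has value sequence $0^\omega$ must, in addition, either win or visit infinitely often a ``Help!''-state of $H^{n-1}_i$. So the automaton $\A^n_i$ for $\outcome(\stratset^n_i)$ has a genuine B\"uchi acceptance component $K^n_i$, not just an edge restriction. Your later appeal to ``the edge-dominance characterization generalized from Section~\ref{sec:safety}'' is therefore unfounded: the safety-style edge removal does \emph{not} characterize admissibility here, and a lasso search over value-preserving edges alone would accept outcomes of dominated strategies.

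Second, and more seriously, you do not explain why $\Theta^n_i$ is a \emph{parity} condition rather than an arbitrary Boolean combination of B\"uchi conditions (which would only give Muller, hence \PSPACE). Unfolding the recurrence of Lemma~\ref{lem:admNotBCrec} gives
\[
\Gamma^n_i = \bigcap_{m<n}\bigl(\outcome(\stratset^m)\Rightarrow \L(\A^{m+1}_i)\cap\outcome(\stratset^m)\bigr)\ \cap\ \bigl(\outcome(\stratset^n)\Rightarrow\win{i}\bigr),
\]
a conjunction of $n{+}1$ implications between B\"uchi conditions. The reason this collapses to a parity condition with $2n{+}3$ colors is the \emph{chain of inclusions}
\[
\outcome(\stratset^m)\cap\L(\A^{m+1}_i)\ \subseteq\ \outcome(\stratset^m)\ \subseteq\ \outcome(\stratset^{m-1})\cap\L(\A^m_i),
\]
which lets one assign color $2m{+}1$ to the antecedent B\"uchi set and $2m{+}2$ to the consequent, so that the maximal color seen infinitely often decides all implications simultaneously (Lemma~\ref{lem:toparity}). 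This structural observation is the crux of the B\"uchi case and is entirely absent from your sketch; ``an extra priority layer'' does not account for it. You should make this chain explicit and prove the parity encoding, after which the rest of your argument (including the final emptiness test, which is indeed a B\"uchi~$\wedge$~coB\"uchi, hence $3$-color parity, condition) goes through.
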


We also use the fact that this construction provides automata representing the outcomes of all iteratively admissible profiles to solve the model-checking under admissibility problem with tight bounds:
\begin{theorem}\label{thm:LTLmcAdm}
The model-checking under admissibility problem is \PSPACE-complete.
\end{theorem}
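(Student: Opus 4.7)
The plan is to prove both directions of PSPACE-completeness, leveraging the automaton construction promised as a byproduct of Theorem~\ref{thm:circuitPspace}.

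For the upper bound, the constructions of Section~\ref{sec:prefix-independent} yield (at least implicitly) a word automaton $\mathcal{A}_\G$ whose language is exactly the set of outcomes of iteratively admissible profiles of $\G$. The universal question ``does every iteratively admissible outcome of $\G$ satisfy $\psi$?'' is then equivalent to emptiness of $L(\mathcal{A}_\G) \cap L(\mathcal{B}_{\neg\psi})$, where $\mathcal{B}_{\neg\psi}$ is a nondeterministic B\"uchi automaton for $\neg\psi$. The automaton $\mathcal{B}_{\neg\psi}$ is of exponential size in $|\psi|$, but each of its states is a subformula-closure set which fits in polynomial space and whose transitions are locally computable. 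The states of $\mathcal{A}_\G$ likewise encode information (current state of $\G$, elimination-stage value profile, and losing-set bookkeeping) that can be maintained and validated in polynomial space, by precisely the machinery that gave the PSPACE bound for the winning coalition problem. A nondeterministic search for an accepting lasso in the product automaton can therefore be carried out in polynomial space, and by Savitch's theorem the universal problem lies in \PSPACE.

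For the lower bound, we reduce from LTL model-checking over Kripke structures, which is \PSPACE-hard even for a single path-generating process. Given a Kripke structure $K$ and a formula $\psi$, define a one-player game $\G_K$ whose underlying graph is $K$, with the unique player's winning condition $\win{1} = V^\omega$. Then every strategy is trivially admissible (no strategy is dominated since all outcomes are winning), so the iteratively admissible set equals $\stratset$ and the set of outcomes is exactly the set of runs of $K$. Hence $K \models \psi$ holds universally on runs iff every iteratively admissible outcome of $\G_K$ satisfies $\psi$, giving \PSPACE-hardness of model-checking under admissibility.

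The main obstacle is the on-the-fly traversal of $\mathcal{A}_\G$: one must argue that, at each step, the local validity of a transition---i.e.\ that it has survived every elimination stage---can be verified in polynomial space. This reduces to recursively recomputing values $\val^n_i$ on-the-fly, exactly as in the \PSPACE\ procedure behind Theorem~\ref{thm:circuitPspace}, so that the product with $\mathcal{B}_{\neg\psi}$ can be explored without ever materializing either factor. Once this is granted, the combination with the standard LTL-to-B\"uchi translation is routine.
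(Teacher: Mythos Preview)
Your proposal is correct and follows essentially the same route as the paper: reduce to emptiness of the product of an automaton for $\outcome(\stratset^*)$ with a (polynomial-space navigable) B\"uchi automaton for $\neg\psi$, guess an accepting lasso in \PSPACE, and derive hardness from ordinary \textsf{LTL} model-checking via a one-player game with a trivially satisfied objective.

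Two small inaccuracies worth correcting. First, for the prefix-independent (Muller) setting of Section~\ref{sec:prefix-independent} the states of the admissibility automaton are just the states of~$\G$; values depend only on the current state (Proposition~\ref{prop:valuestate}), so no ``losing-set bookkeeping'' is needed---that belongs to the safety/weak-Muller case. Second, what makes a lasso accepting in $\mathcal{A}_\G$ is not merely that each edge has ``survived every elimination stage'': the automata~$\A_i^n$ carry a genuine Muller (circuit) acceptance condition involving $\win{i}$ and the ``Help!''-states $H_i^{n-1}$ (see Section~\ref{sec:automataDef}), so the lasso search must record the set of states visited in the loop and evaluate that circuit, exactly as the paper's Algorithm~\ref{alg:modelcheck} does with the variable~$X$. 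Once you adjust these two points, your argument coincides with the paper's.
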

\fi

\begin{collect}{appendix-prefix-independent}{}{}

\subsection{Shifting of strategy profiles}

In the case of prefix-independent objectives, each player can, at any time change his mind.
In fact, it is possible for a strategy to \emph{shift} to another (admissible) strategy.

\begin{definition}[Shifting]
  Given two strategies $\sigma^1$ and $\sigma^2$, and a history~$h$,
  we denote by $\sigma^1\switch{h}{\sigma^2}$ the strategy~$\sigma$
  that follows strategy~$\sigma^1$ and \newdef{shifts} to~$\sigma^2$
  after history~$h$. 
  Formally, given a history $h'$:
\[\sigma(h') = \left\{\begin{array}{ll}
\sigma^2(h^{-1}h') & \textrm{if } h \prefix h' \\
\sigma^1(h') & \textrm{otherwise}
\end{array}\right.\]
  We say that a strategy set~$S$ \newdef{allows shifting}, if for any 
  $\sigma^1$ and $\sigma^2$ in~$S$, and every history $h$ we have
  $\sigma^1\switch{h}{\sigma^2} \in S$.
  A rectangular set of profiles allows shifting if all its components
  do. 
\end{definition}

\begin{definition} 
  We write $\sigma_i \circ h$ the strategy:
  \[\sigma_i\circ h (h') = \left\{\begin{array}{ll}
  \sigma_i(h \cdot h'_{\ge 1}) & \text{if}\ h'_0 = \last(h)\\
  \sigma_i(h') & \text{otherwise}
  \end{array}\right. \]
\end{definition}

Strategy $\sigma_i \circ h$ thus plays as if $\sigma_i$ was played after a history $h$.
However, if the history does not start at the end state of $h$, this would be inconsistent; in that case $\sigma_i \circ h$ plays like $\sigma_i$ 

\begin{lemma}\label{lem:circ}
  If $\stratset^n$ allows shifting, $h$ is compatible with a strategy profile
  $(\sigma_j)_{j\in P}$ of $\stratset^n$ and $\sigma_i$ is in $\stratset^{n+1}_i$ then
  $\sigma_i \circ h$ is in $\stratset^{n+1}_i$.
\end{lemma}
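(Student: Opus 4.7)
The plan is to argue by contradiction, with an induction on $n$ running in the background. Suppose, towards a contradiction, that some $\tau\in\stratset^n_i$ satisfies $\tau\domstr{\stratset^n}\sigma_i\circ h$, witnessed strictly by some pair $(s,\pi)\in V\times\stratset^n_{-i}$. I would then produce a strategy in $\stratset^n_i$ that strictly dominates $\sigma_i$, contradicting $\sigma_i\in\stratset^{n+1}_i$. The construction and the closure steps below all hinge on the shifting hypothesis on $\stratset^n$ and on the fact that $h$ is the outcome of the profile $(\sigma_j)_{j\in P}$, so in particular $\sigma_i$ itself plays along $h$ at every $V_i$-state of $h$. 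Checking $\sigma_i\circ h\in\stratset^n_i$ (needed for the failure of admissibility to be meaningful) is the place where the outer induction on $n$ enters: at level $n-1$ it gives $\sigma_i\circ h \in \stratset^n_i$ for free; the base $n=0$ is trivial since $\stratset^0=\stratset$ is the set of all strategies.

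The main case split is on whether the strict witness state equals $\last(h)$. \emph{If $s\neq\last(h)$}, the strategies $\sigma_i$ and $\sigma_i\circ h$ induce identical outcomes from~$s$ (by the very definition of $\sigma_i\circ h$), so $(s,\pi)$ already witnesses $\tau$ strictly beating $\sigma_i$; to secure weak dominance also at $\last(h)$ I use the hybrid $\tau^A := \tau \switch{\last(h)}{\sigma_i}\in\stratset^n_i$, which plays $\tau$ from every starting state except $\last(h)$ and reverts to $\sigma_i$ from~$\last(h)$. \emph{If $s = \last(h)$}, the candidate is $\tau^B := \sigma_i\switch{h}{\tau}\in\stratset^n_i$. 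Setting $\pi^*_j := \sigma_j\switch{h}{\pi_j}$ for $j\neq i$ (again in $\stratset^n_{-i}$ by shifting), both $\outcome_{h_0}(\tau^B,\pi^*)$ and $\outcome_{h_0}(\sigma_i,\pi^*)$ begin with $h$ and continue as $\outcome_{\last(h)}(\tau,\pi)$ and $\outcome_{\last(h)}(\sigma_i\circ h,\pi)$ respectively; prefix-independence of the Muller condition then promotes the winning/losing gap at $\last(h)$ into a strict witness at $(h_0,\pi^*)$.

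Weak dominance of $\tau^B$ over $\sigma_i$ is trivial on runs from $s'\neq h_0$ or on runs from $h_0$ that never reach~$h$, since the shift never fires and $\tau^B$ produces the same outcome as $\sigma_i$. The delicate case is a winning run of $(\sigma_i,\pi')$ from $h_0$ that factors as $h\cdot v$: by prefix-independence $v$ is winning, and $v$ coincides with $\outcome_{\last(h)}(\sigma_i\circ h,\tilde\pi)$ where $\tilde\pi_j$ is essentially $\pi'_j\circ h$. Here lies the main obstacle: we must have $\tilde\pi\in\stratset^n_{-i}$ in order to apply $\tau$'s weak dominance over $\sigma_i\circ h$ and then, via prefix-independence, transfer the winning tail back to $\tau^B$ winning against $\pi'$ from~$h_0$. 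This closure is exactly the reason the lemma must be read as an induction on $n$: the induction hypothesis applied at level $n-1$ to each $\pi'_j\in\stratset^n_j$ delivers $\pi'_j\circ h\in\stratset^n_j$, which is precisely what is needed to close the argument.
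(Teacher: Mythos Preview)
Your approach shares its core with the paper's: in the principal case you build $\tau^B=\sigma_i\switch{h}{\tau}$, which is exactly the paper's $\sigma''_i=\sigma_i\switch{h}{\sigma'_i}$, and you transport witnesses between $h_0$ and $\last(h)$ using shifting in one direction and $\circ\,h$ in the other, together with prefix-independence. Where you differ is in the case analysis. The paper splits on whether $\sigma_i\circ h \dom{\stratset^n}\sigma'_i$; in the non-trivial branch it asserts, without adequate justification, that the witness state for $\sigma_i\circ h \not\dom{\stratset^n}\sigma'_i$ must equal $\last(h)$. Your split on the location of the strict witness handles this point cleanly: Case~1 ($s\neq\last(h)$) is disposed of directly via the hybrid $\tau^A$, and Case~2 then coincides with the paper's main line. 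You are also right that both $\sigma_i\circ h\in\stratset^n_i$ and the closure $\pi'_j\circ h\in\stratset^n_j$ call for the lemma at level $n-1$; the paper invokes $\tau_{-i}\circ h\in\stratset^n_{-i}$ in its final step without comment.

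One point should be made explicit in your write-up. The induction hypothesis at level $n-1$ carries the premise ``$\stratset^{n-1}$ allows shifting'', which is \emph{not} among the hypotheses at level~$n$ (you only assume $\stratset^n$ allows shifting). The clean way to close this is to package the present lemma together with Lemma~\ref{lem:shifting} (``$\stratset^m$ allows shifting for every $m$'') as a single simultaneous induction on~$m$: at each level, first establish shifting for $\stratset^m$ from the joint hypothesis at level $m-1$, then prove the present statement at level~$m$. With that packaging, your argument closes; as a standalone induction on this lemma alone it does not.
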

\begin{proof}
  Assuming $\sigma_i$ admissible with respect to $\stratset^n$, we prove that $\sigma_i \circ h$ is also admissible, \ie not weakly dominated.
  

  Consider a strategy $\sigma'_i\in \stratset^n_i$.
  We show that $\sigma'_i \not\domstr{\stratset^n} \sigma_i \circ h$, which shows
    that $\sigma_i \circ h$ is in $\stratset^{n+1}_i$.
  First, if $\sigma_i \circ h \dom{\stratset^n} \sigma_i'$, then clearly $\sigma_i' \not\domstr{\stratset^n} \sigma_i \circ h$.

  Otherwise $\sigma_i \circ h \not\dom{\stratset^n} \sigma_i'$.
  We write $\sigma''_{i} = \sigma_i\switch{h}{\sigma'_i}$ for the strategy
  profile that plays according to $\sigma_{i}$ until history $h$ and then shifts
  to $\sigma'_{i}$. 

  Since $\sigma'_i$ is not weakly dominated by $\sigma_i \circ h$, there is a state $s$ and a profile
  $\tau_{-i}\in \stratset^n_{-i}$ such that $\outcome_s(\tau_{-i},\sigma'_i) \in \win{i}$ and $\outcome_s(\tau_{-i},\sigma_i\circ h)$ does not win for $i$.
  Note that since $\sigma_i$ and $\sigma_i \circ h$ are identical except when starting from state $\last(h)$, it must be the case that $s=\last(h)$.
  
  We define the strategy profile $\tau'_{-i}=\sigma_{-i}\switch{h}{\tau_{-i}}$ that
  follows $\sigma_{-i}$ and until $h$ then shifts to $\tau_{-i}$. \MS{Formellement ce serait plut\^ot: ``$\tau_{-i}'=(\tau_j')_{j \in \Agt\setminus\{i\}}$ defined for all $j \neq i$ by $\tau'_{j}=\sigma_{j}\switch{h}{\tau_{j}}$'', mais on comprend quand m\^eme.}
  Recall that $\sigma_{-i}$ is a profile of strategies of $\stratset^n_{-i}$ such that $h \prefix \outcome_{h_0}(\sigma_{-i},\sigma_i)$.
  We have that $\outcome_{h_0}(\tau'_{-i},\sigma_i) =
  h\cdot\outcome_{\last(h)}(\tau_{-i},\sigma_i\circ h)$ is losing for \player{i}, and
  $\outcome_{h_0}(\tau'_{-i},\sigma''_i) = h\cdot\outcome_{\last(h)}(\tau_{-i},\sigma'_i)$
  is winning for \player{i}, by prefix independence. 
  Moreover $\tau'_{-i}\in \stratset^n_{-i}$ because $\stratset^n$ allows shifting by hypothesis.
  So $\sigma_i$ does not weakly dominate $\sigma''_i$: $\sigma_i \not\dom{\stratset^n} \sigma''_i$.

  Since $\sigma_i$ is not weakly dominated, in particular $\sigma_i'' \not\domstr{\stratset^n} \sigma_i$.
  And by the above result $\sigma_i \not\dom{\stratset^n} \sigma''_i$, so it must be the case that
  $\sigma''_i \not\dom{\stratset^n} \sigma_i$.
  Hence, there is a strategy profile~$\tau_{-i}$ and a state $s'$ such that
  $\outcome_{s'}(\tau_{-i},\sigma_i)$ is winning for \player{i} and
  $\outcome_{s'}(\tau_{-i},\sigma''_i)$ is losing for \player{i}. 
  Since $\sigma_i$ and $\sigma''_i$ are identical except after history $h$, this
  means that $s'=h_0$ and that $h$ is a prefix of both $\outcome_{h_0}(\tau_{-i},\sigma_i)$ and
  $\outcome_{h_0}(\tau_{-i},\sigma''_i)$.
  We therefore have that $\outcome_{\last(h)}(\tau_{-i} \circ h,\sigma_i \circ h)$ is
  winning for \player{i} but $\outcome_{\last(h)}(\tau_{-i} \circ h,\sigma''_i \circ h)$ is
  losing for $i$ from $\last(h)$.
  Since $\sigma''_i \circ h = \sigma'_i$, this means that $\sigma'_i$ does not
  dominate $\sigma_i \circ h$.
\qedhere
\end{proof}

\begin{lemma} \label{lem:shifting}
  For any integer $n$, $\stratset^n$ allows shifting.
\end{lemma}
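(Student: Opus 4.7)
The proof is by induction on $n$.

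The base case $n = 0$ is immediate because $\stratset^0 = \stratset$ is the set of all strategies, and the shifted combination $\sigma^1 \switch{h}{\sigma^2}$ is itself a valid strategy, so $\stratset^0$ trivially allows shifting.

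For the inductive step, I assume $\stratset^n$ allows shifting and take $\sigma^1_i, \sigma^2_i \in \stratset^{n+1}_i$ together with a history $h$; set $\sigma_i := \sigma^1_i \switch{h}{\sigma^2_i}$. Since $\stratset^{n+1}_i \subseteq \stratset^n_i$, the induction hypothesis immediately yields $\sigma_i \in \stratset^n_i$. The substantive part is to show that $\sigma_i$ is admissible with respect to $\stratset^n$.

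Arguing by contradiction, suppose some $\sigma'_i \in \stratset^n_i$ strictly weakly dominates $\sigma_i$, witnessed by a state $s$ and a profile $\tau_{-i} \in \stratset^n_{-i}$ with $\win{i}^s(\sigma'_i, \tau_{-i})$ and $\neg \win{i}^s(\sigma_i, \tau_{-i})$. I would split the analysis on whether $h$ is a prefix of the losing run $\rho := \outcome_s(\sigma_i, \tau_{-i})$. In the case $h \not\prefix \rho$, the embedded shift in $\sigma_i$ is never triggered along $\rho$, so $\rho = \outcome_s(\sigma^1_i, \tau_{-i})$ and $\sigma^1_i$ also loses against $\tau_{-i}$ from $s$; I would then exhibit a strict weak dominator of $\sigma^1_i$ in $\stratset^n$, contradicting $\sigma^1_i \in \stratset^{n+1}_i$. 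A natural candidate is $\sigma^* := \sigma'_i \switch{h}{\sigma^2_i}$, which belongs to $\stratset^n_i$ by the induction hypothesis; checking that $\sigma^*$ weakly dominates $\sigma^1_i$ requires, for every test pair $(s', \tau'_{-i})$, another sub-case on whether the corresponding outcome reaches $h$. When it does not, $\sigma^*$ mimics $\sigma'_i$ and $\sigma_i$ mimics $\sigma^1_i$, so the hypothesis $\sigma'_i \dom{\stratset^n} \sigma_i$ transfers directly; when it does, prefix-independence of $\win{i}$ reduces the question to the tail after $h$, and admissibility of $\sigma^2_i \circ h$, which is handed to us by Lemma~\ref{lem:circ}, closes that subcase.

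In the remaining case $h \prefix \rho$, we must have $s = h_0$ and $h$ is compatible with the profile $(\sigma^1_i, \tau_{-i}) \in \stratset^n$, so Lemma~\ref{lem:circ} yields $\sigma^2_i \circ h \in \stratset^{n+1}_i$. By prefix-independence, the failure of $\sigma_i$ along $\rho$ is precisely the failure of $\sigma^2_i \circ h$ against $\tau_{-i} \circ h$ from $\last(h)$, while the winning run of $\sigma'_i$ against $\tau_{-i}$, after appropriate shifting (composed back with $\sigma^1_i$ when necessary to stay inside $\stratset^n_i$ via the inductive hypothesis), produces a strategy that strictly weakly dominates $\sigma^2_i \circ h$, contradicting its admissibility.

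The main obstacle is the outcome bookkeeping in the first case: confirming that $\sigma^* = \sigma'_i \switch{h}{\sigma^2_i}$ truly weakly dominates $\sigma^1_i$ requires orchestrating the assumed dominance of $\sigma_i$, prefix-independence of $\win{i}$, and the admissibility of $\sigma^2_i$ in a uniform way across all test pairs, while using the inductive shifting property at every step to keep the auxiliary strategies inside $\stratset^n$.
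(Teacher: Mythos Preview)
Your inductive skeleton matches the paper, but the heart of the argument---Case~1---does not go through as written. You want to show that $\sigma^* := \sigma'_i\switch{h}{\sigma^2_i}$ weakly dominates $\sigma^1_i$, and you split on ``whether the corresponding outcome reaches~$h$''. The trouble is that there are \emph{two} outcomes in play for each test pair $(s',\tau'_{-i})$: the one produced by $\sigma^1_i$ and the one produced by $\sigma^*$. These are driven by different strategies before~$h$ ($\sigma^1_i$ versus $\sigma'_i$), so one may pass through $h$ while the other does not. Your sentence ``when it does not, $\sigma^*$ mimics $\sigma'_i$ and $\sigma_i$ mimics $\sigma^1_i$'' implicitly assumes both outcomes avoid $h$ simultaneously; that need not hold, so the transfer from $\sigma'_i \dom{\stratset^n} \sigma_i$ to $\sigma^* \dom{\stratset^n} \sigma^1_i$ breaks. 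The other sub-case has a second problem: you invoke ``admissibility of $\sigma^2_i\circ h$'' to close it, but admissibility only says $\sigma^2_i\circ h$ is not strictly dominated---it says nothing about winning against a \emph{particular} opponent profile $\tau'_{-i}\circ h$, which is what you need there. Case~2 is only sketched and inherits the same issue.

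The paper takes a different and more direct route. Instead of trying to manufacture a dominator of $\sigma^1_i$ or $\sigma^2_i$, it fixes an arbitrary challenger $\sigma^3\in\stratset^n_i$ (taken admissible without loss of generality) and \emph{constructs a single opponent profile} against which $\sigma^1\switch{h}{\sigma^2}$ wins while $\sigma^3$ loses. The construction combines two pieces: a profile $\tau'_{-i}$ separating the admissible pair $(\sigma^1,\sigma^3)$ from $h_0$, and a profile $\tau_{-i}$ separating the admissible pair $(\sigma^2\circ h,\sigma^3\circ h)$ from $\last(h)$; the latter pair lives in $\stratset^{n+1}_i$ by Lemma~\ref{lem:circ}. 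Splicing them as $\tau'_{-i}\switch{h}{\tau_{-i}}$ (which stays in $\stratset^n_{-i}$ by the induction hypothesis) gives the witness, with a short case analysis on whether each outcome actually passes through~$h$. This avoids the need to prove any new weak-domination relation and uses admissibility only in the form ``two admissible non-winning strategies can each beat the other against some profile''.
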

\begin{proof}
  The proof is by induction over $n$.
  For the case $n=0$, the property obviously holds since
  all strategies are in $\stratset^0$.
  Assuming the property holds for $n$, we show it holds for $n+1$.

  Let $\sigma^1$ and $\sigma^2$ be two strategies of $\stratset^{n+1}_i$ and $h$ a history compatible with a profile of $\stratset^n$
  .
  First, if there exists a winning strategy against all profiles of $\stratset_{-i}^n$, then all strategies of $\stratset^{n+1}_i$ are winning strategies.
  In particular it is the case for both $\sigma^1$ and $\sigma^2$.
  By prefix-independence, it is also the case for $\sigma^1\switch{h}{\sigma^2}$, hence $\sigma^1\switch{h}{\sigma^2} \in \stratset^{n+1}_i$.
  Therefore we assume that there is no winning strategy against all profiles of $\stratset_{-i}^n$.
  
  Let $\sigma^3 \in \stratset^n_i$, we show that $\sigma^3 \not\domstr{\stratset^n} \sigma^1\switch{h}{\sigma^2}$.
  Without loss of generality, $\sigma^3$ can be assumed admissible: $\sigma^3 \in \stratset^{n+1}_i$.
Lemma~\ref{lem:circ} ensures that $\sigma^2 \circ h \in \stratset^{n+1}_i$ and $\sigma^3 \circ h \in \stratset^{n+1}_i$.
Since they are both admissible though not winning strategies, there exists a profile $\tau_{-i}$ such that $\sigma^2 \circ h$ wins while $\sigma^3 \circ h$ loses from $\last(h)$.
Since $\sigma^1$ and $\sigma^3$ are also both admissible, there is also a profile $\tau_{-i}'$ that makes $\sigma^1$ win and $\sigma^3$ lose.
Consider now profile $\tau_{-i}''=\tau_{-i}'\switch{h}{\tau_{-i}}$.
We show that $\rho = \outcome_{h_0}(\tau_{-i}'',\sigma^3)$ is losing for \player{i} while $\rho'= \outcome_{h_0}(\tau_{-i}'',\sigma^1\switch{h}{\sigma^2})$ is winning for \player{i}.

If $h$ is not a prefix of the run $\outcome_{h_0}(\tau_{-i}',\sigma^3)$, then this losing outcome is exactly $\rho$.
If $h$ is a prefix of $\outcome_{h_0}(\tau_{-i}',\sigma^3)$, then $\rho = h \cdot \outcome(\tau_{-i},\sigma^3 \circ h)$ which is a losing outcome, by prefix independence.

Similarly, if $h$ is not a prefix of $\outcome_{h_0}(\tau_{-i}',\sigma^1)$, then this winning outcome is exactly $\rho'$.
And if $h$ is a prefix of $\outcome_{h_0}(\tau_{-i}',\sigma^1)$, then $\rho'= h \cdot \outcome(\tau_{-i},\sigma^2 \circ h)$ which is a winning outcome for \player{i}, by prefix independence.
\qedhere
\end{proof}
\end{collect}

\begin{collect*}{appendix-prefix-independent}{}{\label{sec:char}}{}{}
\subsection{Characterizing outcomes of admissible strategies using the sequence of their value}
\end{collect*}

\begin{proposition}\label{prop:valuestate}
For prefix-independent objectives, the value depends only on the last state of the history.
\end{proposition}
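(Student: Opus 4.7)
The plan is to unfold each clause of the definition of $\val^n_i(h)$ and to eliminate every occurrence of $h$ except $\last(h)$ using prefix independence. In each clause, the relevant condition takes the form ``the run $h_{<|h|-1}\cdot\rho$ belongs to $\win{i}$'' for some run $\rho$ starting at $s = \last(h)$ that is produced by a profile of $\stratset^n$; since $\win{i}$ is prefix-independent, $h_{<|h|-1}\cdot\rho \in \win{i}$ if, and only if, $\rho \in \win{i}$. So each clause reduces to a statement about runs from $s$ produced by profiles in $\stratset^n$.

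Next, I would check that the remaining data --- namely the set $\stratset^n$ and the map $\sigma_\Agt \mapsto \outcome_s(\sigma_\Agt)$ --- also depend on $h$ only through $s$. This is immediate: $\stratset^n$ is a global set of strategy profiles, defined once and for all for the game $\G$; and $\outcome_s(\sigma_\Agt)$ is defined by consulting each strategy only on histories of the form $s\cdot\dots$, so no information about the path leading to $s$ in $h$ is ever used. Combining this with the previous step, the conditions $\val^n_i(h)=-1$ and $\val^n_i(h)=1$ are rewritten purely in terms of $s$ and $\stratset^n$ (``for all $\sigma_\Agt\in\stratset^n$, $\outcome_s(\sigma_\Agt)\notin\win{i}$'' and ``there exists $\sigma_i\in\stratset^n_i$ such that for all $\sigma_{-i}\in\stratset^n_{-i}$, $\outcome_s(\sigma_i,\sigma_{-i})\in\win{i}$'' respectively), and the case $\val^n_i(h)=0$ follows by exclusion; the base convention $\val^{-1}_i(h)=0$ is trivially state-independent.

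I do not expect any real obstacle: the proposition is essentially a bookkeeping consequence of prefix independence combined with the observation that the family $\{\stratset^n\}_n$ is a global object rather than something indexed by histories. The one subtlety worth flagging is that one should resist invoking Lemma~\ref{lem:shifting} here --- it would be the natural tool if the outcome map were sensitive to past behavior of strategies, but $\outcome_s(\sigma)$ simply ignores everything that happens before $s$, so the shifting machinery is not needed.
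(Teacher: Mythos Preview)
Your argument is correct and in fact more direct than the paper's. The paper proves the proposition by fixing two histories $h,h'$ with $\last(h)=\last(h')$ and, assuming for instance that $\val^{n}_i(h)=1$ is witnessed by some $\sigma_i$, transferring this witness to $h'$ via the $\circ$-construction; it then invokes Lemma~\ref{lem:circ} (hence implicitly Lemma~\ref{lem:shifting}) to argue that the modified strategy $\sigma_i\circ h$ still lies in the appropriate $\stratset^n_i$. Your route avoids this machinery altogether: you observe that in the very definition of $\val^n_i(h)$ the prefix $h_{<|h|-1}$ appears only prepended to a run $\rho$ inside the predicate ``$\,\in\win{i}$'', and prefix-independence eliminates it immediately. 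Since the remaining ingredients ($\stratset^n$ and $\outcome_s$) are already state-based, nothing is left that depends on $h$ beyond $\last(h)$. Your closing remark is exactly right: because $\outcome_s$ is defined by evaluating strategies only on histories beginning at $s$, the same $\sigma_i$ that witnesses $\val^n_i(h)=1$ already witnesses $\val^n_i(h')=1$ unchanged, so the detour through $\sigma_i\circ h$ in the paper's proof is redundant. What the paper's approach would buy is robustness to a setting where outcomes are computed relative to the full history (as in~\cite{berwanger07}); in the present setup your shortcut is both sound and cleaner.
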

\begin{collect}{appendix-prefix-independent}{\subsubsection{Proof of Proposition~\ref{prop:valuestate}}}{}
\begin{proof}
Let $h,h'$ be two different histories with $\last(h)=\last(h')$.
Assume $\val^{n+1}_i(h)=1$.
Then $i$ has a strategy $\sigma_i$ starting from that wins against all profiles of $\stratset^n_{-i}$.
Then playing like $\sigma_i$ after $h'$, namely switching to $\sigma_i \circ h$ which is a strategy of $\stratset^n_i$ by Lemma~\ref{lem:circ}, is still a winning strategy so $\val^{n+1}_i(h)=1$.

Assume $\val^{n+1}_i(h)=-1$ and by contradiction that $\val^{n+1}_i(h')>-1$.
Then there exists a profile $\profile \in \stratset^n$ such that $\profile$ wins from $h'$.
So playing like $\profile$ after $h$, namely playing $\profile \circ h' \in \stratset^n$, is also winning w.r.t. $\win{i}$.
Therefore it is not the case that $\val^{n+1}_i(h)=-1$.

The case of $\val^{n+1}_i(h)=0$ is then obtained by definition of value.
\end{proof}
\end{collect}
Hence we write $\val^{n}_j(s)$ instead of $\val^{n}_j(h)$ when \mbox{$\last(h)=s$}.
Since $\val^n_j$ is here a function from $V$ to $\{-1,0,1\}$, it can be extended to runs:
 \(\hval{n}{j}(\rho)\) is the word $w \in \{-1,0,1\}^\omega$ such that $w_k = \val^n_j(\rho_k)$ for all $k$.


\begin{collect}{appendix-prefix-independent}{}{}
\begin{lemma}\label{lem:valeqadm}
For all $n$, if $s \in V_i$, $\trans{s}{s'}$ with $\val^n_i(s) = \val^n_i(s')$, then for any strategy $\sigma_i \in \stratset^{n+1}$ that is admissible, the strategy $\sigma_i'$ defined by
\begin{mathpar}
\sigma_i'(s) = s'
\and
\sigma_i'(s \cdot h) = \sigma_i(h)
\and
\sigma_i'(h) = \sigma_i(h) \textrm{ if } h_0 \neq s
\end{mathpar}
is also an admissible strategy of $\stratset^{n+1}$.
\end{lemma}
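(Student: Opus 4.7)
The plan is to exploit prefix independence together with the value equality $\val^n_i(s) = \val^n_i(s')$ to reduce the admissibility of $\sigma_i'$ to that of $\sigma_i$. The two structural observations I would establish first are: (1) for any state $t \neq s$ and any profile $\sigma_{-i}$, $\outcome_t(\sigma_i', \sigma_{-i}) = \outcome_t(\sigma_i, \sigma_{-i})$, since every history encountered along such a run starts with $t \neq s$ and so $\sigma_i'$ coincides with $\sigma_i$ throughout; and (2) for $t = s$, $\outcome_s(\sigma_i', \sigma_{-i}) = s \cdot \outcome_{s'}(\sigma_i, \tilde{\sigma}_{-i})$ where $\tilde{\sigma}_{-i}$ is the shifted profile defined by $\tilde{\sigma}_{-i}(h) = \sigma_{-i}(s \cdot h)$ on histories $h$ starting at $s'$. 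Since $\stratset^n$ allows shifting by Lemma~\ref{lem:shifting}, $\tilde{\sigma}_{-i} \in \stratset^n_{-i}$ whenever $\sigma_{-i} \in \stratset^n_{-i}$; combined with prefix independence this yields $\win{i}^s(\sigma_i', \sigma_{-i}) \Leftrightarrow \win{i}^{s'}(\sigma_i, \tilde{\sigma}_{-i})$.

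Next, I would verify that $\sigma_i' \in \stratset^n_i$. Using the hypothesis $\val^n_i(s) = \val^n_i(s')$, one can exhibit a witness $\sigma^\star \in \stratset^n_i$ taking the transition $s \to s'$: its existence follows from the value characterization, as otherwise no profile in $\stratset^n$ would match at $s$ the evidence that witnesses the same value at $s'$. Then $\sigma_i'$ is obtained from $\sigma_i$ and $\sigma^\star$ via a suitable combination of the shifting operator $\switch{\cdot}{\cdot}$ at history $s$ and the operator $\circ$, and these operations preserve $\stratset^n$ by Lemmas~\ref{lem:shifting} and~\ref{lem:circ}.

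For admissibility of $\sigma_i'$ with respect to $\stratset^n$, I would argue by contradiction: assume $\tau_i \in \stratset^n_i$ with $\tau_i \domstr{\stratset^n} \sigma_i'$, witnessed by some state $t$ and profile $\sigma_{-i} \in \stratset^n_{-i}$. In the case $t \neq s$, observation~(1) turns this into an immediate contrast at $(t, \sigma_{-i})$ for $\sigma_i$; in the case $t = s$, observation~(2) produces a profile $\tilde{\sigma}_{-i}\in\stratset^n_{-i}$ against which $\sigma_i$ loses from $s'$, and a suitable pullback $\tau_i^\sharp$ of $\tau_i$ (in $\stratset^n_i$ by the shifting closure and Lemma~\ref{lem:circ}) wins against $\tilde{\sigma}_{-i}$ from $s'$. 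The main obstacle will be ruling out that $\sigma_i$ outperforms the constructed dominator at some unrelated pair $(t', \sigma'_{-i})$: this requires transporting every such pair back through the shift, once again leveraging prefix independence together with the bijection between $\stratset^n_{-i}$ and its image under shifting, which is the chief source of delicacy in the proof.
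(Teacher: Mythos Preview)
Your plan has a genuine gap in the admissibility step. You propose to derive a contradiction by pulling back a putative dominator $\tau_i$ of $\sigma_i'$ to a dominator $\tau_i^\sharp$ of $\sigma_i$, with the transport going through the shift along $s \to s'$. But this transport only makes sense when $\tau_i(s) = s'$: if $\tau_i(s) = s'' \neq s'$, there is no natural ``pullback through the shift'', and your bijection argument between $\stratset^n_{-i}$ and its shifted image says nothing about how $\tau_i^\sharp$ should behave so as to very weakly dominate $\sigma_i$ from the state $s$ itself (where $\sigma_i$ and $\sigma_i'$ may differ arbitrarily). The obstacle you flag at the end is therefore not merely delicate bookkeeping; as stated, your scheme does not cover the case $\tau_i(s) \neq s'$ at all.

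The paper avoids the pullback construction entirely. It first splits on the common value $\val^n_i(s) = \val^n_i(s')$: the cases $1$ and $-1$ are immediate (winning, respectively hopeless, strategies remain so after the one-step redirection). For value $0$ it argues directly, for each candidate $\tau_i \in \stratset^n_i$, that $\tau_i \not\domstr{\stratset^n} \sigma_i'$, splitting on whether $\tau_i(s) = s'$. When $\tau_i(s) = s'' \neq s'$, Lemma~\ref{lem:value} gives $\val^n_i(s'') \le 0$, so some profile in $\stratset^n_{-i}$ makes $\tau_i$ lose from $s$; combining this with a profile making $\sigma_i'$ win from $s$ (which exists because $\val^n_i(s') = 0$ and $\sigma_i$ is admissible) yields a witness against very weak dominance. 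Only when $\tau_i(s) = s'$ does the paper use the reduction to admissibility of $\sigma_i$, comparing $\sigma_i$ with $\tau_i \circ s$ from $s'$. Your structural observations (1) and (2) are correct and are exactly what the paper uses in this last subcase; what is missing from your outline is the value-based case split and, in particular, the direct witness argument for $\tau_i(s) \neq s'$.
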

\begin{proof}
The case where $\val^n_i(s)=1$ (resp. $\val^n_i(s)=-1$) is trivial since $\sigma_i$ is a winning (resp. always losing) strategy, then so is $\sigma_i'$.

Otherwise, assume $\val^n_i(s) = \val^n_i(s') = 0$.
Let $\tau_i \in \stratset^n_i$.
We only consider what happens in $s$, since otherwise $\sigma_i'$ behaves the same as $\sigma_i$.
If $\tau_i(s) = s'' \neq s'$, then $\val^n_i(s'') \leq 0$ so there is a profile that makes $\tau_i$ lose from $s''$ hence from $s$.
Additionally, a profile exists that makes $\sigma_i$ win from $s'$, hence it makes $\sigma_i'$ win from $s$.

If $\tau_i(s) = s'$.
Since $\sigma_i$ is admissible, $\tau_i \circ s \not\domstr{\stratset^n} \sigma_i$.
\begin{itemize}
\item If there is a profile that makes $\sigma_i$ win but $\tau_i \circ s$ lose, it also makes $\sigma_i'$ win and $\tau_i$ lose.
\item Otherwise every profile that makes $\tau_i \circ s$ win also makes $\sigma_i$ win, hence all profiles that make $\tau_i$ win also make $\sigma_i'$ win.\qedhere
\end{itemize}
\end{proof}

\begin{lemma}\label{lem:value}
  For all integer $n$ and states $s,s'$ such that $\trans{s}{s'}$,
  if $s$ is controlled by $i$ then $\val^n_i(s) \ge \val^n_i(s')$.
\end{lemma}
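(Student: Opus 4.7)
The plan is to split the argument on the three possible values of $\val^n_i(s')$. The case $\val^n_i(s')=-1$ is immediate since values live in $\{-1,0,1\}$. In the cases $\val^n_i(s')\in\{0,1\}$ the intuition is that \player{i} controls $s$, so she can always choose to follow the edge $s\to s'$; by prefix-independence of the Muller condition $\win{i}$, any run from $s'$ has the same winning status as its extension $s\cdot \rho$ from $s$. I would then formalise ``playing $s\to s'$ and continuing the witness of $\val^n_i(s')$'' via the shifting and $\circ$ operations developed earlier in the section.

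Concretely, if $\val^n_i(s')=1$ I would pick a witness $\sigma_i^{\ast}\in\stratset^n_i$ winning from $s'$ against $\stratset^n_{-i}$ and define $\hat\sigma_i$ that plays $s'$ at history $s$ and, on any history of the form $s\cdot s'\cdot w$, plays $\sigma_i^{\ast}(s'\cdot w)$. For any $\sigma_{-i}\in\stratset^n_{-i}$, an inductive check on the outcome gives
\[
\outcome_s(\hat\sigma_i,\sigma_{-i}) \;=\; s\cdot \outcome_{s'}\!\bigl(\sigma_i^{\ast},\,\sigma_{-i}\circ(s\cdot s')\bigr).
\]
Lemma~\ref{lem:circ} (using Lemma~\ref{lem:shifting} for the shifting hypothesis) places $\sigma_{-i}\circ(s\cdot s')$ in $\stratset^n_{-i}$, so $\sigma_i^{\ast}$ wins against it from $s'$, and prefix-independence of $\win{i}$ lifts this to a win of the full run from $s$, yielding $\val^n_i(s)=1$. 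The case $\val^n_i(s')=0$ is handled symmetrically, but starting from a full profile in $\stratset^n$ whose outcome from $s'$ is winning for \player{i} rather than from a single strategy of \player{i}.

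The delicate step is showing that $\hat\sigma_i$ itself belongs to $\stratset^n_i$. I would realise it as the shift $\tau_i\switch{s\cdot s'}{\sigma_i^{\ast}\circ(s\cdot s')}$ of a reference strategy $\tau_i\in\stratset^n_i$ whose outcome already crosses the edge $s\to s'$, invoking Lemma~\ref{lem:shifting} to stay inside $\stratset^n_i$ and Lemma~\ref{lem:circ} to place $\sigma_i^{\ast}\circ(s\cdot s')$ in $\stratset^n_i$. The main obstacle is producing the reference $\tau_i$, i.e.\ establishing that $s\cdot s'$ is compatible with some profile of $\stratset^{n-1}$. I would dispatch this by induction on $n$: the base case $n=0$ is immediate since $\stratset^0$ contains every strategy, and at later stages the inductive statement applied to $\stratset^{n-1}$ supplies the required reference; in the corner case where no profile of $\stratset^{n-1}$ ever crosses $s\to s'$, one observes that the induction machinery still allows us to apply the shift to $\sigma_i^{\ast}$ directly (because $\sigma_i^{\ast}\in\stratset^n_i\subseteq\stratset^{n-1}_i$ is itself a witness against which shifting operates cleanly).
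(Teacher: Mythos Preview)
Your overall plan—case split on $\val^n_i(s')$, build the ``play $s\to s'$ then continue the witness'' strategy, and push the opponent profile through $\circ(s\cdot s')$ via Lemma~\ref{lem:circ} and prefix-independence—is exactly the paper's argument. The paper treats $\val^n_i(s')=1$ and $\val^n_i(s')=0$ just as you outline.

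Where you diverge is in the justification that $\hat\sigma_i\in\stratset^n_i$. The paper does not go through your inductive search for a ``reference'' $\tau_i\in\stratset^n_i$ with $\tau_i(s)=s'$; instead it invokes Lemma~\ref{lem:valeqadm} directly, which is precisely the statement that prepending the single edge $s\to s'$ to an admissible strategy yields an admissible strategy (under the appropriate value hypothesis). Your route through Lemma~\ref{lem:shifting} requires exhibiting such a $\tau_i$, but nothing in the inductive hypothesis $\val^{n-1}_i(s)\ge\val^{n-1}_i(s')$ produces one, and in fact when $\val^{n-1}_i(s)>\val^{n-1}_i(s')$ Lemma~\ref{lem:admvaleq} forbids every strategy in $\stratset^n_i$ from taking $s\to s'$. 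Your corner-case fallback (``$\sigma_i^{\ast}\in\stratset^n_i\subseteq\stratset^{n-1}_i$ is itself a witness'') does not help: what Lemma~\ref{lem:circ} needs is a profile of $\stratset^{n-1}$ whose outcome \emph{from $s$} starts with $s\cdot s'$, and $\sigma_i^{\ast}$ being admissible says nothing about what it plays at the history $s$. So this part of your argument has a genuine gap; the clean fix is to appeal to Lemma~\ref{lem:valeqadm} rather than trying to manufacture a shift.
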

\begin{proof}
  Assume $s$ is controlled by $i$:
  \begin{itemize}
  \item if $\val^n_i(s')=-1$ then the property is obviously true.
  \item if $\val^n_i(s')=0$ then there exists a strategy profile $\sigma_\Agt
    \in \stratset^n$ that makes \player{i} win from $s'$.
    We define the strategy $\sigma'_i$ that plays from $s$ to $s'$ and then
    follows $\sigma_i$: 
    \begin{mathpar}
    \sigma_i'(s) = s'
    \and
    \sigma_i'(s \cdot h) = \sigma_i(h)
    \and
    \sigma_i'(h) = \sigma_i(h) \textrm{ if } h_0 \neq s
    \end{mathpar}
    Strategy $\sigma'_i$ is a strategy of $\stratset^n_i$ by Lemma~\ref{lem:valeqadm}.
    The strategy profile $\sigma_{j\ne i}\switch{s \cdot s'}{\sigma_{j\ne i}}$ is in $\stratset^n$ because of Lemma~\ref{lem:shifting}.
    The outcome of $(\sigma'_i,\sigma_{j\ne i}\switch{s \cdot s'}{\sigma_{j\ne
        i}})$ is then winning from $s$ by prefix independence.
    The value of $s$ is therefore at least $0$.
  \item if $\val^n_i(s')=1$ then there 
    exists a strategy $\sigma_i\in \stratset^n_i$ that is winning for \player{i} from
    $s'$ for all the strategy profiles in $\prod_{i\ne j} \stratset^n_j$.
    Consider the strategy $\sigma'_i$ that plays from $s$ to $s'$ and then
    follows $\sigma_i$.
    Let $\sigma'_{j\ne i}$ be a strategy profile in $\prod_{i\ne j} \stratset^n_j$.
    By Lemma~\ref{lem:circ}, $\sigma'_{j\ne i} \circ (s \cdot s')$ is a strategy
    profile in $\prod_{i\ne j} \stratset^n_j$. \JFR{If not, this is not a problem, right?}\MS{Hum... il faut quand m\^eme qu'on s'assure que $\sigma_i$ gagne depuis $s'$.}
    Strategy~$\sigma_i$ is winning against  $\sigma'_{j\ne i} \circ (s \cdot s')$ from $s'$ and by
    prefix independence $\sigma'_i$ is winning against $\sigma'_{j\ne i}$ from $s$.
    Therefore the value of $s$ is $1$.\qedhere
  \end{itemize}
\end{proof}

\begin{lemma}\label{lem:admvaleq}
  For all $n$, if $s \in V_i$ and $\sigma_i \in \stratset^{n+1}_i$ then for any history $h$, $\val^n_i(s) = \val^n_i(\sigma_i(h\cdot s))$.
\end{lemma}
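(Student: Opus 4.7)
The plan is to establish the two inequalities $\val^n_i(s) \geq \val^n_i(\sigma_i(h\cdot s))$ and $\val^n_i(s) \leq \val^n_i(\sigma_i(h\cdot s))$ separately, since Lemma~\ref{lem:value} already delivers the first one directly: as $s \in V_i$ and the transition $s \to \sigma_i(h\cdot s)$ belongs to $E$, we have $\val^n_i(s) \geq \val^n_i(\sigma_i(h\cdot s))$.

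For the reverse inequality, I would proceed by contradiction. Writing $s' = \sigma_i(h\cdot s)$, suppose $\val^n_i(s) > \val^n_i(s')$. I will construct a strategy $\sigma_i^{\star} \in \stratset^n_i$ that strictly dominates $\sigma_i$ with respect to $\stratset^n$, contradicting $\sigma_i \in \stratset^{n+1}_i$. The construction uses a witness of the value at $s$: when $\val^n_i(s)=1$, a strategy $\tau_i \in \stratset^n_i$ winning from $s$ against every profile of $\stratset^n_{-i}$; when $\val^n_i(s)=0$, an entire profile $\tau_{\Agt} \in \stratset^n$ whose outcome from $s$ lies in $\win{i}$. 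In both subcases I define $\sigma_i^{\star} = \sigma_i \switch{h\cdot s}{\tau_i}$, so that $\sigma_i^{\star}$ agrees with $\sigma_i$ off the cone above $h \cdot s$ and mimics $\tau_i$ on that cone. Lemma~\ref{lem:shifting} ensures $\sigma_i^{\star} \in \stratset^n_i$.

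Checking that $\sigma_i^{\star} \domstr{\stratset^n} \sigma_i$ is then a case analysis on whether a play extends $h \cdot s$: outside the cone the two strategies give identical outcomes; inside the cone, $\sigma_i$ moves to $s'$ and its continuation loses against some $\stratset^n_{-i}$-profile (because $\val^n_i(s') \leq 0$), whereas $\sigma_i^{\star}$ follows $\tau_i$ from $s$ and, by prefix-independence and the choice of $\tau_i$, wins against the same $\stratset^n_{-i}$-profile (or against every such profile when $\val^n_i(s)=1$). To produce a concrete witnessing pair, I pair $\sigma_i^{\star}$ with an opponent profile that reproduces the moves of the other players along $h$ up to $s$ and then plays the tail of $\tau_{\Agt}$ (respectively any $\stratset^n_{-i}$-profile); Lemma~\ref{lem:shifting} keeps this concatenated opponent profile inside $\stratset^n_{-i}$.

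The hard part will be handling histories $h$ that no outcome of $\sigma_i$ from $h_0$ ever produces, because then modifying $\sigma_i$'s move at $h\cdot s$ is invisible to the domination relation. The way around this is to leverage Proposition~\ref{prop:valuestate}, which makes the value purely state-based, together with Lemma~\ref{lem:circ}: the strategy $\sigma_i \circ (h\cdot s)$ lies in $\stratset^{n+1}_i$ whenever $h\cdot s$ is compatible with some profile in $\stratset^n$, and it makes the very same move from $s$ as $\sigma_i$ does at $h\cdot s$. The lemma therefore reduces to the local claim that an admissible strategy's move from state $s$, played starting at $s$, preserves $\val^n_i(s)$; this local version is exactly what the contradiction argument above settles, and it transfers back to the original statement via the $\circ$ operator.
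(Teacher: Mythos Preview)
Your proposal is correct and follows essentially the same approach as the paper: both use Lemma~\ref{lem:value} for one inequality and, for the reverse, argue that an admissible strategy cannot decrease its own value because a value-witnessing alternative (a winning strategy when $\val^n_i(s)=1$, a cooperatively-winning profile when $\val^n_i(s)=0$) would otherwise dominate it. Your explicit reduction to the local case via $\sigma_i \circ (h\cdot s)$ and Lemma~\ref{lem:circ} handles the ``arbitrary $h$'' issue more carefully than the paper's proof, but the underlying argument is the same.
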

\begin{proof}
By the previous lemma the value cannot increase.
If \player{i} plays an admissible strategy $\sigma_i \in \stratset^{n+1}_i$, we show that it cannot decrease.
Let $s' = \sigma_i(h\cdot s)$
\begin{itemize}
\item If $\val^n_i(s)=1$, then $\sigma_i$ is a winning strategy.
Since there is no such strategy from a state with value $\val^n_i \leq 0$, $\val^n_i(s')=1$.
\item If $\val^n_i(s)=0$, then there is a profile $\sigma_{-i} \in \stratset^n_{-i}$ such that $\rho = \outcome(\sigma_i,\sigma_{-i}) \in \win{i}$.
Note that $h \cdot s \cdot s' \prefix \rho$.
If $\val^n_i(s')=-1$, there can be no such profile, thus $\val^n_i(s')>-1$, hence $\val^n_i(s')=0$
\item If $\val^n_i(s)=-1$, there can be no lower value so $\val^n_i(s')=-1$.\qedhere
\end{itemize}
\end{proof}
\end{collect}

The following lemma shows that in terms of the sequence of value for a player, we can distinguish three types of plays that are outcome of admissible strategies. 
\begin{lemma}\label{lem:disjonctionval}
  Let $s\in V$, $\rho\in \outcome_s(\stratset^n)$, and $i\in\Agt$. 
  \[ \text{If }\rho \in \outcome_s(\stratset^{n+1}_i) \text{ then }
  \hval{n}{i}(\rho) \in 0^* 1^\omega + 0^\omega + 0^*(-1)^\omega .\]
\end{lemma}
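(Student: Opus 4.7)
The plan is to show that the $\{-1,0,1\}$-sequence $\hval{n}{i}(\rho)$ can change value at most once, and only from $0$ to $1$ or from $0$ to $-1$. Because values lie in a three-element alphabet, it suffices to establish two stability properties at every position $k$: (a)~if $\val^n_i(\rho_k) = 1$ then $\val^n_i(\rho_{k+1}) = 1$, and (b)~if $\val^n_i(\rho_k) = -1$ then $\val^n_i(\rho_{k+1}) = -1$. Ruling out the four forbidden transitions $1\to 0$, $1\to -1$, $-1\to 0$, and $-1\to 1$ immediately forces $\hval{n}{i}(\rho)$ into the required shape $0^* 1^\omega + 0^\omega + 0^*(-1)^\omega$.

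By hypothesis $\rho$ is the outcome of some profile $\sigma_\Agt \in \stratset^n$ whose $i$-component $\sigma_i$ additionally lies in $\stratset^{n+1}_i$. First I would dispose of the transitions with $\rho_k \in V_i$: Lemma~\ref{lem:admvaleq} directly yields $\val^n_i(\rho_{k+1}) = \val^n_i(\sigma_i(\rho_{\le k})) = \val^n_i(\rho_k)$, so both (a) and (b) hold for free.

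For $\rho_k \in V_j$ with $j \neq i$, I would prove (b) by contradiction. Assume $\val^n_i(\rho_{k+1}) \ge 0$; by Proposition~\ref{prop:valuestate} there is a profile $\tau_\Agt \in \stratset^n$ with $\outcome_{\rho_{k+1}}(\tau_\Agt) \in \win{i}$. Using the shifting closure of $\stratset^n$ (Lemma~\ref{lem:shifting}) together with the fact that $\sigma_j$ realises the edge $\rho_k \to \rho_{k+1}$ along $\rho$, I would splice $\sigma_\Agt$ and $\tau_\Agt$ into a profile in $\stratset^n$ whose outcome from $\rho_k$ starts $\rho_k \rho_{k+1}$ and then coincides with $\outcome_{\rho_{k+1}}(\tau_\Agt)$. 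Prefix-independence makes this outcome winning for \player{i}, contradicting $\val^n_i(\rho_k) = -1$. Property (a) is symmetric: starting from a witness $\tau_i \in \stratset^n_i$ of $\val^n_i(\rho_k) = 1$, for each $\pi \in \stratset^n_{-i}$ an analogous splice supplies an opponent profile in $\stratset^n_{-i}$ against which $\tau_i$ restarted from $\rho_{k+1}$ remains winning, establishing $\val^n_i(\rho_{k+1}) = 1$.

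The main obstacle will be making these splices actually land inside $\stratset^n$ (respectively $\stratset^n_{-i}$). The shift operator $\switch{\cdot}{\cdot}$ is anchored at a specific history, but our substitution must fire at the one-state history $\rho_k$, which is not a prefix of $\rho_{\le k}$. I expect to bridge this mismatch by combining Lemma~\ref{lem:shifting} with a restart argument in the spirit of the proof of Lemma~\ref{lem:circ}, so that every auxiliary profile we invoke is certified to belong to the appropriate level-$n$ admissible set.
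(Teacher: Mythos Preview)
Your proposal is correct and follows the same plan as the paper: rule out the transitions $1\to\{0,-1\}$ and $-1\to\{0,1\}$ in the value sequence, using shifting and Lemma~\ref{lem:circ} to keep all auxiliary profiles inside $\stratset^n$. The only cosmetic difference is that the paper does not split on who controls $\rho_k$: for the $-1\to\ge 0$ case it builds, exactly as you do, the spliced profile $\profile\circ\rho_{\le k}\switch{\rho_{k+1}}{\profile'}$, and for the $1\to\le 0$ case it argues uniformly that $\sigma_i\circ\rho_{\le k}$ would be dominated by a winning strategy available from $\rho_k$, contradicting $\sigma_i\in\stratset^{n+1}_i$; your separate treatment via Lemma~\ref{lem:admvaleq} for $V_i$-states and a direct value-preservation splice for $V_j$-states achieves the same thing.
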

\begin{collect}{appendix-prefix-independent}{\subsubsection{Proof of Lemma~\ref{lem:disjonctionval}}}{}
\begin{proof}
  Let $\rho$ be the outcome of a strategy profile $\profile\in \stratset^n$ and $k\geq 0$ be an index.
  \begin{itemize}
  \item If $\val^n_i(\rho_k) = -1$ and $\val^n_i(\rho_{k+1}) \ge 0$.
    There is a strategy profile~$\profile'$ in $\stratset^n$ that is winning for \player{i} from $\rho_{k+1}$.
    By Lemma~\ref{lem:shifting} and Lemma~\ref{lem:circ}, $\profile\circ(\rho_{\le k}) \switch{\rho_{k+1}}{\profile'}$ is in $\stratset^n$.
    It makes \player{i} win from $\rho_k$.
    It is a contradiction with the fact that $\val^n_i(\rho_k) = -1$.
  \item If $\val^n_i(\rho_k) = 1$ and $\val^n_i(\rho_{k+1}) \le 0$.
    There is a strategy profile in $\stratset^n$ that makes $\sigma_i\circ \rho_{\le k}$ lose from $\rho_{k+1}$.
    On the other hand, since $\val^n_i(\rho_k) = 1$ there is a strategy $\sigma'_i$ that is winning for \player{i} against any strategy of $\stratset^n$.
    Hence $\sigma_i\circ \rho_{\le k}$ is dominated by $\sigma'_i$ with respect to $\stratset^n$.
    By Lemma~\ref{lem:circ} this means that $\sigma_i$ is dominated.
    Therefore $\rho$ is not compatible with a strategy in $\stratset^{n+1}_i$.
  \end{itemize}
  Therefore all the paths compatible with a profile in $\stratset^n$ and a strategy in~$\stratset^{n+1}_i$ have value of the form $0^* 1^\omega + 0^\omega + 0^*(-1)^\omega$.
  \qedhere
\end{proof}
\end{collect}

\noindent Now, we characterize outcomes of admissible strategies according to whether they end with value $1$, $-1$, or $0$. 
We do so for each player individualy.

\subparagraph{Value 1} 
To be admissible in $\stratset^n$ from a state of value $1$, a strategy has to be winning against all strategies of $\stratset^n$:
\begin{lemma}\label{lem:valun}
  Let $s\in V$, $i\in\Agt$ and $\rho \in \outcome_s(\stratset^n)$ be such that $\hval{n}{i}(\rho) \in 0^* 1^\omega$.
  \[\rho \in \outcome_s(\stratset^{n+1}_i) \text{ if, and only if, } \rho \in \win{i}.\]
\end{lemma}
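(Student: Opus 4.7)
Let $k$ be the smallest index with $\val^n_i(\rho_k)=1$; by definition of value, there is a strategy $\pi_i \in \stratset^n_i$ that wins for player~$i$ from $\rho_k$ against every profile of $\stratset^n_{-i}$.

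\emph{Forward direction.} Suppose $\sigma_i \in \stratset^{n+1}_i$ with $\outcome_s(\sigma_i,\sigma_{-i})=\rho$. Since $\rho \in \outcome_s(\stratset^n)$ and any two profiles producing the same run must coincide on its prefixes, we may replace $\sigma_{-i}$ by a profile in $\stratset^n_{-i}$ that still produces $\rho$ with $\sigma_i$. Suppose for contradiction $\rho \notin \win{i}$. By Lemma~\ref{lem:circ} applied componentwise, $\sigma_i \circ \rho_{\le k} \in \stratset^{n+1}_i$ and $\sigma_{-i} \circ \rho_{\le k} \in \stratset^n_{-i}$, and their outcome from $\rho_k$ is the suffix $\rho_{\ge k}$, losing for~$i$ by prefix-independence. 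Define $\pi'_i := (\sigma_i \circ \rho_{\le k}) \switch{\rho_k}{\pi_i}$, which belongs to $\stratset^n_i$ by Lemma~\ref{lem:shifting}. On histories starting from a state $s' \neq \rho_k$, $\pi'_i$ plays as $\sigma_i \circ \rho_{\le k}$, so their outcomes coincide against any $\tau_{-i} \in \stratset^n_{-i}$; on histories starting from $\rho_k$, $\pi'_i$ plays as $\pi_i$ and therefore wins against any $\tau_{-i} \in \stratset^n_{-i}$. Hence $\pi'_i$ weakly dominates $\sigma_i \circ \rho_{\le k}$, and strictly so at the witness $(\rho_k,\, \sigma_{-i} \circ \rho_{\le k})$, contradicting admissibility of $\sigma_i \circ \rho_{\le k}$.

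\emph{Backward direction.} Assume $\rho \in \win{i}$ and let $(\tau_i,\tau_{-i}) \in \stratset^n$ produce $\rho$. Along $\rho$ the value $\val^n_i$ is non-decreasing (the sequence is $0^*1^\omega$) and, by Lemma~\ref{lem:value}, cannot strictly increase on an $i$-controlled edge; hence every $i$-move of $\rho$ is value-preserving. Pick any $\pi \in \stratset^{n+1}_i$, which exists since $\stratset^{n+1}\neq\emptyset$, and iteratively apply Lemma~\ref{lem:valeqadm} to redirect, at each state visited by $\rho$, the first outgoing move of the current strategy so that it matches the corresponding $\rho$-transition; each redirection preserves admissibility precisely because the transitions in question are value-preserving. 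The resulting strategy $\sigma_i \in \stratset^{n+1}_i$ agrees with $\rho$ at every one of its own moves along $\rho$, so $\outcome_s(\sigma_i,\tau_{-i}) = \rho$.

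The main difficulty lies in the backward direction: obtaining an admissible strategy realizing the specific run $\rho$. The plan sidesteps an intricate direct construction by grafting $\rho$'s own-moves onto an arbitrary admissible skeleton via the value-preserving surgery provided by Lemma~\ref{lem:valeqadm}, the value-preservation itself being guaranteed by the shape $0^*1^\omega$ of $\hval{n}{i}(\rho)$ together with Lemma~\ref{lem:value}.
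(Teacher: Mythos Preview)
Your forward direction is essentially the paper's argument, spelled out slightly differently: once $\sigma_i\circ\rho_{\le k}$ is admissible and a winning strategy exists from $\rho_k$, the latter would strictly dominate the former unless $\sigma_i\circ\rho_{\le k}$ already wins. That is fine.

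The backward direction has a real gap. Lemma~\ref{lem:valeqadm} does not perform surgery at an arbitrary history: given a state $s\in V_i$ and a value-preserving edge $s\to s'$, it produces a strategy $\sigma_i'$ with $\sigma_i'(s)=s'$, $\sigma_i'(s\cdot h)=\sigma_i(h)$, and $\sigma_i'(h)=\sigma_i(h)$ when $h_0\neq s$. In other words, it redirects only the move on the \emph{one-state} history $s$, and after that first step it replays the old strategy with the prefix $s$ erased. Iterating this does not yield a strategy that follows $\rho$: already for a run of the form $(s_0 s_1)^\omega$ with $s_0\in V_i$, one application at $s_0$ gives $\sigma_i'(s_0)=s_1$ but $\sigma_i'(s_0 s_1 s_0)=\pi(s_1 s_0)$, which need not be $s_1$; further applications at $s_0$ do not help. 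Worse, if $\rho$ visits some $i$-state twice with different outgoing edges, no single first-move assignment at that state can match both occurrences, so the ``grafting'' fails outright.

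The paper handles the backward direction by the direct construction you sidestep: define $\sigma_i$ to follow $\rho$ as long as the current history is a prefix of $\rho$, and to revert to a fixed $\sigma_i^a\in\stratset^{n+1}_i$ after any deviation; then argue by a case analysis on $\val^n_i$ at the first divergence point that no $\sigma_i'\in\stratset^n_i$ weakly dominates $\sigma_i$. If you want to keep the spirit of your approach, you would need a strengthened version of Lemma~\ref{lem:valeqadm} that redirects the move at an arbitrary history $h$ with $\last(h)\in V_i$ to a value-preserving successor while remaining admissible---but proving that is essentially the paper's construction.
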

\begin{collect}{appendix-prefix-independent}{\subsubsection{Proof of Lemma~\ref{lem:valun}}}{}
\begin{proof}
  We construct $\sigma_i$ that follows $\rho$ and if the history deviates from this outcome, revert to a fixed admissible strategy.
  Formally, let $\sigma^a_i \in \stratset^{n+1}_i$, we define $\sigma_i$ by:
  \begin{itemize}
  \item if $h$ is a prefix of $\rho$ with $\last(h) \in V_i$, then $\sigma_i(h) = \rho_{|h|+1}$, so that we ensure that $\rho$ is compatible with $\sigma_i$;
  \item otherwise, let $k$ be the largest index such that $h_{\le k}$ is a prefix of $\rho$, then $\sigma_i(h) = \sigma^a_i(h_{>k})$.
  \end{itemize}
  We show that $\sigma_i$ is not weakly dominated.
  Let $\sigma'_\Agt$ be a strategy profile of $\stratset^n$ whose outcome $\rho'$ is winning for \player{i} but such that $\rho'' = \outcome(\sigma'_{-i},\sigma_i)$ is not winning for \player{i}.
  We show that some profile in $\stratset^n_{-i}$ makes $\sigma'_i$ lose.
  Consider the first index $k$ such that $\rho'_k\ne \rho''_k$.
  The state $\rho'_{k-1}$ is controlled by \player{i}.

  If $\val^n_i(\rho'_{k-1}) = 0$ then there is a profile $\tau_{-i} \in \stratset^n_{-i}$ that makes $\sigma'_i$ lose from $\rho'_{k}$.
  As there is a profile $\tau_{-i}' \in \stratset^n_{-i}$ that makes $\sigma_i$ win from $\rho''_k$, we can combine this two strategy profiles to obtain one winning for $\sigma_i$ and losing for $\sigma'_i$.
  Namely, the profile $\tau^c_{-i}$ defined by
  \[\forall j \neq i, \ \tau^c_j = \sigma_j\switch{\rho_k'}{\tau_j}\switch{\rho_k''}{\tau_j'}\]
  is by construction such a profile.
  Therefore, $\sigma_i$ is not weakly dominated by $\sigma'_i$.

  Otherwise, $\val^n_i(\rho'_{k-1}) = 1$, we show that in fact $\rho''$ is a winning path which is a contradiction.
  If $\rho' = \rho$ then it is winning by hypothesis.
  Otherwise there is some point from which $\sigma_i$ plays according to $\sigma^a_i$.
  
  If this point is before index $k$.
  Since $\val^{n}_i(\rho_{k-1})=1$ there is a strategy of \player{i} winning from $\rho_{k-1}$ against all profile in $\stratset^n_{-i}$.
  As $\sigma^a_i$ is in $\stratset^{n+1}_i$, so is $\sigma^a_i \circ \rho_{\le k-1}$ by Lemma~\ref{lem:circ}.
  Therefore $\sigma^a_i \circ \rho_{\le k-1}$ should be winning from $\rho_{k-1}$ against all profiles in $\stratset^n_{-i}$.
  Hence $\rho''$ is winning for \player{i}, which is a contradiction.

  Otherwise, let $k'$ be the index at which $\sigma_i$ starts playing according to $\sigma^a_i$.
  As $\val^n_i(\rho_{k-1}) = 1$ and $k'\ge k$, $\val^n_i(\rho_{k'-1}) = 1$.
  Moreover $\rho''_{k'}$ is a successor of $\rho_{k'-1}$ which is not controlled by $i$, so $\val(\rho''_{k'}) = 1$ by Lemma~\ref{lem:value}.
  Therefore $\sigma^a_i \circ \rho_{\le k'-1}$ should be winning from $\rho''_{k-1}$ against all profile in $\stratset^n_{-i}$.
  Hence $\rho''$ is winning for \player{i}, which is a contradiction.

  This shows that $\sigma_i$ is not dominated with respect to $\stratset^n$.

  \medskip
\MS{JF: est-ce que tu veux qu'on mette les symboles $\Leftarrow$ et $\Rightarrow$?}
  Reciprocally, assume that $\sigma_i \in \stratset^{n+1}_i$.
  Let $k$ be a index such that $\val^n_i(\rho_k)=1$.
  There is a strategy of \player{i} that is winning from $\rho_k$ against all strategies of $\stratset^n$. 
  By Lemma~\ref{lem:circ}, $\sigma_i\circ \rho_{\le k}$ is also in $\stratset^{n+1}_i$.
  Therefore $\sigma_i\circ \rho_{\le k}$ also has to be winning against all strategies of $\stratset^n$. 
  As $\rho_{\ge k}$ is compatible both with a profile of $\stratset^n_{i}$ and $\sigma_i\circ \rho_{\le k}$, it is winning for \player{i}.
  Hence, by prefix independence of the objectives, $\rho$ is winning for \player{i}.
\qedhere
\end{proof}
\end{collect}

\subparagraph{Value -1}
If the run reaches a state of value $-1$, then, from there, there is no possibility of winning, so any strategy is admissible but the state of value $-1$ must not have been reached by \player{i}'s fault:
\begin{lemma}\label{lem:valmoinsun}
  Let $s\in V$, $i\in\Agt$ and $\rho \in \outcome_s(\stratset^n)$ be such that $\hval{n}{i}(\rho) \in 0^* (-1)^\omega$. Let $k$ be the index such that $\val_i^n(\rho_k) = 0 \land\val_i^n(\rho_{k+1})= -1$.
  \[ \rho \in \outcome_s(\stratset^{n+1}_i) \text{ if and only if, } \rho_k\not\in V_i  .\]
\end{lemma}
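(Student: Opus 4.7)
The plan is to prove the two directions separately: the forward direction falls out of Lemma~\ref{lem:necessaryAdm}, while the backward direction adapts the strategy construction from the proof of Lemma~\ref{lem:valun}.

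For the forward direction ($\Rightarrow$), assume some $\sigma_i \in \stratset^{n+1}_i$ produces $\rho$ from $s$. Arguing by contradiction, suppose $\rho_k \in V_i$. Compatibility of $\sigma_i$ with $\rho$ forces $\sigma_i(\rho_{\le k}) = \rho_{k+1}$, and Lemma~\ref{lem:necessaryAdm} together with Proposition~\ref{prop:valuestate} then yields $\val^n_i(\rho_k) = \val^n_i(\rho_{k+1})$, \ie $0 = -1$, a contradiction.

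For the backward direction ($\Leftarrow$), the plan is to fix $\sigma^a_i \in \stratset^{n+1}_i$ (nonempty by~\cite{berwanger07}) and define $\sigma_i$ to follow $\rho$ on prefixes of $\rho$ (setting $\sigma_i(\rho_{\le j}) = \rho_{j+1}$ whenever $\rho_j \in V_i$) and to revert to $\sigma^a_i$ (suitably shifted) on histories that first deviate from $\rho$, exactly as in the construction of Lemma~\ref{lem:valun}. Compatibility with $\rho$ is immediate. To show $\sigma_i \in \stratset^{n+1}_i$, I would assume toward contradiction that some $\sigma'_i \in \stratset^n_i$ satisfies $\sigma'_i \domstr{\stratset^n} \sigma_i$, witnessed by a state $s_0$ and a profile $\sigma'_{-i} \in \stratset^n_{-i}$ such that $\rho' = \outcome_{s_0}(\sigma'_i, \sigma'_{-i})$ is winning for \player{i} and $\rho'' = \outcome_{s_0}(\sigma_i, \sigma'_{-i})$ is losing. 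Let $m$ be the first index at which $\rho'$ and $\rho''$ differ; then $\rho'_{m-1} \in V_i$.

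The case analysis splits on whether $\rho''_{\le m-1}$ is a prefix of $\rho$. If it is not, then $\sigma_i$ and $\sigma^a_i$ coincide along that history, so I would transfer the supposed domination to $\sigma^a_i$, contradicting its admissibility. If it is, then $s_0 = s$ and $\rho''_m = \rho_m$, and since $\rho_k \notin V_i$ we have $m - 1 \ne k$. When $m - 1 > k$, the value $\val^n_i(\rho_{m-1}) = -1$ combined with Lemma~\ref{lem:circ} contradicts $\rho'$ winning; when $m - 1 < k$, Lemma~\ref{lem:value} constrains $\val^n_i(\rho'_m) \le 0$, the value $-1$ case is dismissed similarly, and the value $0$ case is closed as in the proof of Lemma~\ref{lem:valun} by exhibiting a combined coplayer profile that makes $\sigma_i$ win while $\sigma'_i$ loses. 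I expect the main obstacle to be this last sub-case: $\sigma_i$ loses whenever the coplayers follow $\rho$, so one must produce a $\tau^2_{-i} \in \stratset^n_{-i}$ that diverges from $\rho$ at a non-$V_i$ state and subsequently lets $\sigma^a_i$ win from the state reached by the deviation. This will require a careful application of Lemma~\ref{lem:circ} and, possibly, a more deliberate choice of $\sigma^a_i$ than an arbitrary admissible strategy.
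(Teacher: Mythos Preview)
Your plan follows the paper's approach closely: same forward direction via Lemma~\ref{lem:necessaryAdm}, same backward construction of $\sigma_i$ following $\rho$ with fallback to an admissible $\sigma^a_i$, and essentially the same case split (the paper organises it by the value at the divergence point rather than by position relative to $k$, but the content is the same). Two points deserve attention.

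First, in your Case~1 (``transfer the supposed domination to $\sigma^a_i$''): this phrasing is loose. You cannot conclude that $\sigma'_i$ dominates $\sigma^a_i$ globally. What actually works---and what the paper does---is local: from the point where $\rho''$ first leaves $\rho$, $\sigma_i$ plays a shifted copy of $\sigma^a_i$, which is admissible by Lemma~\ref{lem:circ}; hence $\sigma'_i$ (suitably shifted) does not weakly dominate it there, so some profile makes the shifted $\sigma^a_i$ win and $\sigma'_i$ lose from that state; prepend the common prefix to get a witness that $\sigma'_i \not\dom{\stratset^n} \sigma_i$.

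Second, and more importantly, the obstacle you flag in the last sub-case has a clean resolution that does \emph{not} require any special choice of $\sigma^a_i$. The deviation point you are looking for is precisely $\rho_k$ itself. Since $\val^n_i(\rho_k)=0$ and $\sigma^a_i\in\stratset^{n+1}_i$, some profile $\sigma''_{-i}\in\stratset^n_{-i}$ makes $\sigma^a_i$ win from $\rho_k$; and because $\val^n_i(\rho_{k+1})=-1$, this winning profile necessarily leaves $\rho$ at $\rho_k$ (it cannot go to $\rho_{k+1}$). The hypothesis $\rho_k\notin V_i$ guarantees this is a coplayer move. Now build the combined profile: follow the $\stratset^n_{-i}$-profile that produces $\rho$ up to index $k$, then shift to $\sigma''_{-i}$ (legitimate by Lemma~\ref{lem:shifting}). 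Against this, $\sigma_i$ follows $\rho$ until $\rho_k$, then the coplayers deviate, so $\sigma_i$ reverts to $\sigma^a_i$ and wins. Splice in the profile that makes $\sigma'_i$ lose from $\rho'_m$ on the other branch, exactly as in Lemma~\ref{lem:valun}, and you are done.
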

\begin{collect}{appendix-prefix-independent}{\subsubsection{Proof of Lemma~\ref{lem:valmoinsun}}}{}
\begin{proof}
  We construct a strategy $\sigma_i$ that follows $\rho$ and if the history deviates from this outcome, revert to a fixed admissible strategy, in the same way that we did in Lemma~\ref{lem:valun} for value $1$.
  We define $\sigma_i$ by:
  \begin{itemize}
  \item if $h$ is a prefix of $\rho$ with $\last(h) \in V_i$, then $\sigma_i(h) = \rho_{|h|+1}$, so that we ensure that $\rho$ is compatible with $\sigma_i$;
  \item otherwise, let $k$ be the biggest index such that $h_{\le k}$ is a prefix of $\rho$, then $\sigma_i(h) = \sigma^a_i(h_{>k})$.
  \end{itemize}
  We show that $\sigma_i$ is not weakly dominated.
  Note that starting from a state $s \neq \rho_0$ means $\sigma_i$ plays according to an admissible strategy.
  Hence only the case of plays starting from $\rho_0$ remain to be considered.
  
  Let $\sigma'_\Agt$ be a strategy profile of $\stratset^n$ whose outcome $\rho'$ is winning for \player{i} but such that $\rho'' = \outcome_{\rho_0}(\sigma'_{-i},\sigma_i)$ is not winning for \player{i}.
  Consider the first index $k$ such that $\rho'_k\ne \rho''_k$.
  The state $\rho'_{k-1} = \rho''_{k-1}$ is controlled by \player{i}.
  In order to show that $\sigma_i' \not\domstr{\stratset^n} \sigma_i$, we build a profile that makes $\sigma_i'$ lose and $\sigma_i$ win.

  If $\val^n_i(\rho'_{k-1}) = -1$, then $\rho'$ cannot be a winning path, this is a contradiction.
  Otherwise $\val^n_i(\rho'_{k-1}) = 0$.
  Then there is a profile in $\stratset^n_{-i}$ that makes $\sigma'_i$ lose from $\rho'_k$.

Now we find a profile that makes $\sigma_i$ win from $\rho''_k$.
\begin{itemize}
\item If $\rho''_{\le k-1}$ is not a prefix of $\rho$, then  $\sigma_i$ plays according to an admissible strategy.
  As $\val_i^n(\rho''_{k-1})=0$, there is a strategy profile that makes the strategy $\sigma_i\circ \rho''_{\le k}$ win.
\item Otherwise, $\rho''_{\le k-1}$ is a prefix of $\rho$.
  Since $\rho''_{k-1} = \rho_{k-1}$ is controlled by \player{i}, $\val_i^n(\rho_{k})=\val_i^n(\rho_{k-1})=0$ by hypothesis.
  Let $k'$ be the largest index such that $\val^n_i(\rho_{k'}) = 0$.
  Since $\val^n_i(\rho_{k'+1})= -1$, state $\rho_{k'}$ is not controlled by $i$.
  From $\rho_{k'}$ there is a strategy profile $\sigma''_\Agt$ that makes $\sigma^a_i$ win.
  Note that profile $\sigma_{-i}''$ deviates from $\rho$ at that point since $\val^n_i(\rho_{k'+1})=-1$: $\sigma''_{-i}(\rho_{\le k'}) \ne \rho_{k'+1}$.
  As a result, after $\rho_{\le k'}$, strategy $\sigma_i$ follows $\sigma^a_i$.
  Therefore the outcome produced by $\sigma_{-i}''$ and $\sigma_i$ is winning from $\rho_{k'}$. \RB{pas tr\`es clair...}\MS{J'ai reformul\'e; est-ce mieux?}
  So there is also a profile in $\stratset^n_{-i}$ that makes $\sigma_i$ win from $\rho''_k$.
  Namely:
  \[\sigma_j'''(h) = \left\{\begin{array}{ll}
  \rho_{\ell+1} & \textrm{if } h=\rho_{\leq \ell} \textrm{ with } \ell<k' \textrm{ and } \rho_{\ell} \in V_j \\
  \sigma''_j(h) & \textrm{otherwise}
  \end{array}\right.\]
\end{itemize}

  We can combine the strategy profile that makes $\sigma'_i$ lose with $\sigma'''_{-i}$ win to obtain one that is winning for $\sigma_i$ and losing for $\sigma'_i$.
  Hence $\sigma_i$ is not weakly dominated.
\qedhere
\end{proof}
\end{collect}

\begin{example}\label{ex:nonzero}
  As an illustration of Lemmata~\ref{lem:valun} and~\ref{lem:valmoinsun}, consider the left game in \figurename~\ref{fig:dj}.
  Both runs $s_0 \cdot s_1 \cdot s_2 \cdot \good_1^\omega$ and $s_0 \cdot s_1 \cdot s_2 \cdot s_3^\omega$ are outcomes of non dominated strategies of \player{1}.
  Indeed, the play that goes to $\good_1$ is winning and the value of the path that goes to $s_3$ belongs to $0^* (-1)^\omega$ and the value decreases on a transition by \player{2}.
\end{example}

\subparagraph{Value 0}
This case is more involved.
From a state of value $0$, an admissible strategy of \player{i} should allow a winning run for \player{i} with the help of other players.
We write $H^{n}_i$ for set of states~$s$ controlled by a player $j\ne i$ that have at least two successors that \emph{(i)}~have value $0$ or $1$ for \player{i} and \emph{(ii)}~have the same value for \player{j} than $s$ after iteration~$n-1$.\JFR{rewrite: not clear}
Formally, for $n\geq0$, the ``Help!''-states of player $i$ are defined as: \MS{J'ai ajout\'e la convention $\val^{-1}_j(s)=0$, peut \^etre qu'on peut le rappeler ici.}
\[H^n_i = \bigcup_{j\in\Agt\setminus\{i\}} \left\{ s \in V_j \mmid
\begin{array}{l}
\exists s',s'',\ s' \neq s'' \\
\wedge\ s\rightarrow s' \land s\rightarrow s'' \\
\wedge\ \val^{n}_i(s') \ge 0 \\
\wedge\ \val^{n}_i(s'') \ge 0 \\
\wedge\ \val^{n-1}_j(s) = \val^{n-1}_j(s')\\
\wedge\ \val^{n-1}_j(s) = \val^{n-1}_j(s'')
\end{array}\right\}\]
These states have the following property.
\begin{lemma}\label{lem:dj}
  Let $s\in V$, $i\in \Agt$ and $\rho \in \outcome_s(\stratset^n)$
  be such that $\hval{n}{i}(\rho) = 0^\omega$.
\begin{mathpar}
\rho \in \outcome_s(\stratset^{n+1}_i) \text{ if, and only if, } 
  \rho \in \win{i} \text{ or } \inff{\rho} \cap H^n_i \ne \emptyset
\end{mathpar}
\end{lemma}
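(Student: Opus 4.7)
The plan is to prove the two directions separately, following the template of Lemmata~\ref{lem:valun} and~\ref{lem:valmoinsun} for the constructive direction, and leaning on Lemma~\ref{lem:circ} and the shifting machinery for the necessary direction.

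For $(\Leftarrow)$, I would fix some $\sigma_i^a \in \stratset^{n+1}_i$ (such a strategy exists by~\cite{berwanger07}) and build $\sigma_i$ that follows $\rho$ on histories prefix of $\rho$ starting at $s$, and reverts to $\sigma_i^a$ after any deviation. Assume toward contradiction that $\sigma_i' \domstr{\stratset^n} \sigma_i$, witnessed by a state $s_0$ and a profile $\tau_{-i} \in \stratset^n_{-i}$ on which $\sigma_i'$ wins and $\sigma_i$ loses. If the outcome $\outcome_{s_0}(\sigma_i,\tau_{-i})$ never follows $\rho$, then $\sigma_i$ acts as $\sigma_i^a$ on the relevant history, and admissibility of $\sigma_i^a$ (plus Lemma~\ref{lem:circ}) furnishes a counter-witness. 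Otherwise, the losing outcome follows $\rho$ for a while and is deviated off by $\tau_{-i}$. If $\rho \in \win{i}$, the outcome $\rho$ itself already witnesses a win for $\sigma_i$; a losing profile for $\sigma_i'$ is obtained by combining the $\rho$-profile with a $\stratset^n_{-i}$-profile that makes $\sigma_i'$ lose from the deviation point (such a profile exists because the value there is $0$), just as in the proof of Lemma~\ref{lem:valmoinsun}. If instead $\rho$ is losing but $\inff{\rho} \cap H^n_i \neq \emptyset$, pick $s^* = \rho_{k^*}$ in this intersection with $k^*$ large enough to be past the divergence; at $s^*$ the player $j$ owning it has an alternative successor $s''\neq \rho_{k^*+1}$ with $\val^{n-1}_j(s^*) = \val^{n-1}_j(s'')$ and $\val^n_i(s'') \geq 0$. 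Switching $\tau_j$ to $s''$ at $s^*$ yields a profile still in $\stratset^n_{-i}$ (by Lemma~\ref{lem:necessaryAdm} and the shifting lemma), and from $s''$ one can prepend a $\stratset^n_{-i}$-profile letting $\sigma_i^a \circ \rho_{\leq k^*}$ win; concatenating the three pieces gives a profile against which $\sigma_i$ wins and $\sigma_i'$ loses, contradicting $\sigma_i' \domstr{\stratset^n} \sigma_i$.

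For $(\Rightarrow)$, assume $\sigma_i \in \stratset^{n+1}_i$ produces $\rho$ with $\hval{n}{i}(\rho)=0^\omega$ and $\rho \notin \win{i}$, and suppose for contradiction that $\rho_k \notin H^n_i$ for all $k > N$. Since $\val^n_i(\rho_N)=0$, there is an admissible $\sigma_i^a \in \stratset^{n+1}_i$ winning from $\rho_N$ against some $\tau^a_{-i} \in \stratset^n_{-i}$ (obtained by promoting any value-$0$ witness to an admissible strategy~\cite{berwanger07}). Lemma~\ref{lem:circ} gives $\sigma_i \circ \rho_{\leq N} \in \stratset^{n+1}_i$, and admissibility forces $\sigma_i^a \not\domstr{\stratset^n} \sigma_i \circ \rho_{\leq N}$; working through the two possibilities (either direct dominance by $\sigma_i \circ \rho_{\leq N}$, or the existence of a counter-witness to dominance) and then, if needed, using shifting to transport a win from an arbitrary state back to $\rho_N$, we obtain $\tau \in \stratset^n_{-i}$ such that $\rho' := \outcome_{\rho_N}(\sigma_i\circ\rho_{\leq N},\tau) \in \win{i}$. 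Since $\rho_{\geq N}$ is losing, $\rho'$ differs from $\rho_{\geq N}$ at a first index $k\geq N$; determinism of $\sigma_i\circ\rho_{\leq N}$ on $\rho$-prefixes forces the divergence to be caused by $\tau$ at a state $\rho_k$ owned by some $j\neq i$. Applying Lemma~\ref{lem:necessaryAdm} to both the $\rho$-profile and $\tau$ (both in $\stratset^n_{-i}$, hence composed of strategies in $\stratset^n_j$) shows that both successors of $\rho_k$ preserve $\val^{n-1}_j(\rho_k)$. Finally, the $\rho$-successor has $i$-value $0$, and the other successor lies on $\rho'$, which, being in $\outcome_{\rho_N}(\stratset^{n+1}_i)$, has value sequence in $0^*1^\omega+0^\omega$ by Lemma~\ref{lem:disjonctionval}, hence also $i$-value $\geq 0$. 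Thus $\rho_k \in H^n_i$ with $k \geq N$, contradicting the choice of $N$.

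The main obstacle is the step in $(\Rightarrow)$ where one extracts a genuinely $\rho$-deviating winning outcome for $\sigma_i\circ\rho_{\leq N}$ from the mere fact that it is not strictly dominated by $\sigma_i^a$: the admissibility hypothesis only guarantees some counter-witness (state, profile), not one starting at $\rho_N$. Resolving this cleanly requires combining shifting (Lemma~\ref{lem:circ}) with a careful analysis of the ``equi-dominant vs.\ incomparable'' dichotomy between $\sigma_i \circ \rho_{\leq N}$ and $\sigma_i^a$, mirroring the profile-concatenation techniques already used in Lemmas~\ref{lem:valun} and~\ref{lem:valmoinsun}.
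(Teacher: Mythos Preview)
Your $(\Leftarrow)$ direction is essentially the paper's: build $\sigma_i$ following $\rho$ with fallback to an admissible $\sigma^a_i$, then manufacture a counter-witness to any putative dominator $\sigma'_i$ by exploiting either $\rho\in\win{i}$ or a Help!-state past the divergence. Your case split matches the paper's (which distinguishes whether the losing outcome leaves $\rho$ before or after it leaves $\rho'$).

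Your $(\Rightarrow)$ direction differs from the paper's, and the obstacle you flag is real but admits a simpler resolution than you suggest. You want a profile $\tau\in\stratset^n_{-i}$ making $\sigma_i\circ\rho_{\le N}$ win \emph{from $\rho_N$}; non-dominance by a generic $\sigma^a_i$ only gives a counter-witness at some arbitrary state. The clean fix is to compare $\sigma_i$ not with $\sigma^a_i$ but with the shift $\sigma_i\switch{\rho_{\le N}}{\sigma''_i}$, where $\sigma''_\Agt\in\stratset^n$ is a winning profile from $\rho_N$ (existing since $\val^n_i(\rho_N)=0$). These two strategies agree except after $\rho_{\le N}$, so any witness distinguishing them must start at $s$ and follow $\rho_{\le N}$; since $\sigma_i$ is admissible, it is not strictly dominated by the shift, and either case of the resulting dichotomy hands you a winning profile for $\sigma_i\circ\rho_{\le N}$ from $\rho_N$. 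Once you have that, your derivation of a Help!-state at the divergence point is correct (take $N$ strictly past the last Help!-state so the divergence index is already in the forbidden range).

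The paper's route for $(\Rightarrow)$ is more direct and sidesteps this extraction entirely. Instead of producing one winning profile, it shows that $\sigma_i$ loses against \emph{every} profile $\sigma_{-i}\switch{\rho_{\le k+1}}{\sigmaa_{-i}}\in\stratset^n_{-i}$ once past the last Help!-state $\rho_k$: any deviation by some $j\ne i$ at $\rho_{k'-1}$ with $k'>k+1$ must preserve $\val^{n-1}_j$ (because $\sigma'_j\in\stratset^n_j$), and $\rho_{k'-1}\notin H^n_i$ then forces the alternative successor to have $i$-value $-1$. This makes $\sigma_i\switch{\rho_{\le k}}{\sigma''_i}$ a dominating strategy outright. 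Your approach recovers the same observation in the contrapositive (the divergence of your winning $\rho'$ from $\rho$ must land on a Help!-state), so the two arguments are dual; the paper's framing is shorter because it never needs to isolate a single winning profile for $\sigma_i$.
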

\begin{proof}[Proof sketch]
  Assume first that there is only a finite number of visits to $H^n_i$ and that $\rho$ is not winning for \player{i}.
  Let~$k$ be the greatest index such that $\rho_k \in H^{n}_i$.
  After the prefix $\rho_{\le k}$ no profile of $\stratset^n$ can make $\sigma_i$ win, since no ``Help!''-state is visited.
  While since $\val^{n}_i(\rho_k)= 0$ there is a strategy profile $\profile''$ of $\stratset^n$ whose outcome is winning for $i$ from $\rho_k$.
  Hence the strategy that follows $\sigma_i$ until the prefix $\rho_{\le k}$ and then switch $\sigma''_i$ weakly dominates $\sigma_i$, and $\sigma_i \not\in \stratset^{n+1}_i$.

  \medskip

  Assume now that either $\rho\in \win{i}$ or there is an infinite number of indexes $k$ such that $\rho_k \in H^{n}_i$.
  We define a strategy $\sigma_i$ that follows $\rho$ when possible and otherwise plays a non-dominated strategy.
  Assume that there is a strategy profile $\sigma'_\Agt \in \stratset^n$ such that $\rho' = \outcome(\sigma'_{-i},\sigma'_i) \in \win{i}$ and $\rho'' = \outcome(\sigma'_{-i},\sigma_i) \notin \win{i}$.
  We show that there is a profile that makes $\sigma_i$ wins and $\sigma'_i$ lose, so that $\sigma'_i$ does not weakly dominate $\sigma_i$.

  For instance in the case where $\rho''$ continues to follow $\rho$ after it has diverged from $\rho'$, we know it will encounter a ``Help!''-state~$\rho_k$.
  From there there is a strategy profile $\sigma^d_{-i}$ that is $\stratset^n_{-i}$ thanks to the condition $\val^{n-1}_j(\rho_k) = \val^{n-1}(s'')$ whith $s''\ne \rho_{k+1}$, and which is winning for $\sigma_i$ thanks to the condition $\val^n_i(s'') \ge 0$, and the fact that $\sigma_i$ reverts to a non-dominated strategy outside of $\rho$.
  This strategy profile can moreover be made losing for $\sigma'_i$ from the point where it deviates from $\rho$, since this deviation is on a state controled by \player{i} and its value for \player{i} is $0$.
\end{proof}

\begin{collect}{appendix-prefix-independent}{\subsubsection{Proof of Lemma~\ref{lem:dj}}}{}
\begin{proof}
  Assume first that there is only a finite number of visits to $H^n_i$ and that $\rho$ is not winning for \player{i}.
  Let~$k$ be the greatest index such that $\rho_k \in H^{n}_i$.
  Let $\profile \in \stratset^n$ be a profile such that $\outcome(\profile)=\rho$.
  We will show that the strategy for \player{i} in this profile, namely $\sigma_i$, is weakly dominated with respect to $\stratset^{n}$.

  First, we show that after some prefix of $\rho$ no profile of $\stratset^n$ can make $\sigma_i$ win.
  Let $\sigmaa_{-i} \in \stratset^n_{-i}$ and consider the profile $\sigma'_{-i} = \sigma_{-i}\switch{\rho_{\le k+1}}{\sigmaa_{-i}}$ that follows $\profile$ until $\rho_{k+1}$.
  Let $\rho'$ the outcome of $(\sigma_i, \sigma'_{-i})$.
  If $\rho=\rho'$ then $\sigma'_{-i}$ does not make $\sigma_i$ win.

  Otherwise, consider $k'$ the first index where $\rho_{k'} \ne \rho'_{k'}$, and $j$ the player controlling state $s = \rho_{k'-1} = \rho'_{k'-1}$; we have that $j\ne i$. 
  Since $\sigma'_j \in \stratset^n_j$, Lemma~\ref{lem:admvaleq} yields that $\val^{n-1}_j(\rho'_{k'}) = \val^{n-1}_j(\rho'_{k'-1})$.
  Similarly, since $\sigma_j \in \stratset^n_j$, we also have $\val^{n-1}_j(\rho_{k'}) = \val^{n-1}_j(\rho_{k'-1})$.
  Therefore $\val^{n-1}_j(\rho_{k'}) = \val^{n-1}_j(\rho'_{k'}) = \val^{n-1}_j(s)$.
  Recall that $\val^{n}_i(\rho) = 0^\omega$ so $\val^{n}_i(\rho_{k'}) = 0$.
  Since $\rho_{k'-1} \not\in H^n_i$, it must be the case that $\val^{n}_i(\rho'_{k'}) = -1$.
  Therefore no strategy profile in $\prod_{j\ne i} \stratset^{n}_{j}$ makes $\sigma_i$ win from $\rho_{k+1}$.
  
  Secondly, since $\val^{n}_i(\rho_k)= 0$ there is a strategy profile $\profile''$ of $\stratset^n$ whose outcome is winning for $i$ from $\rho_k$.
  Hence $\sigma_i\switch{\rho_{\le k}}{\sigma''_i}$ weakly dominates $\sigma_i$, and $\sigma_i \not\in \stratset^{n+1}_i$.

  \medskip

  Assume now that either $\rho\in \win{i}$ or there is an infinite number of indexes $k$ such that $\rho_k \in H^{n}_i$.
  Let $\sigmaa_i$ be a strategy in $\stratset^{n+1}_i$.
  We build $\sigma_i$ so that it is compatible with $\rho$ and revert to $\sigmaa_i$ in case of a deviation.
  Formally, for a history~$h$, such that $\last(h)$ is controlled by \player{i}:
  \begin{itemize}
  \item if $h= \rho_{\le k}$ for some index $k$ then $\sigma_i(h) = \rho_{k+1}$ (this ensures that $\sigma_i$ is compatible with $\rho$);
  \item otherwise, let $k$ be the largest index such that $\rho_{\le k}$ is a prefix of $h$, then $\sigma_i(h) = \sigmaa_i(h_{\ge k+1})$.
  \end{itemize}
  
  We now show that $\sigma_i$ is not weakly dominated.
  Assume that there is a strategy profile $\sigma'_\Agt \in \stratset^n$ such that $\rho' = \outcome(\sigma'_{-i},\sigma'_i) \in \win{i}$ and $\rho'' = \outcome(\sigma'_{-i},\sigma_i) \notin \win{i}$.
  We show that there is a profile that makes $\sigma_i$ wins and $\sigma'_i$ lose, so that $\sigma'_i$ does not weakly dominate $\sigma_i$.
  
  We consider the points where $\rho''$ diverges from $\rho$ and $\rho'$, respectively.
  Formally, let $k$ (resp. $k'$) be the largest index such that $\rho''_{\leq k} = \rho_{\leq k}$ (resp. $\rho''_{\leq k'} = \rho'_{\leq k'}$).
  We distinguish two cases as illustrated in \figurename~\ref{fig:diverge}.
  \begin{figure}   
    \centering
    ~\hfill
    \subfloat[$k'\ge k$]{
      \label{fig:diverge1}
      \begin{tikzpicture}[scale=0.6]
        \draw (0,0) node (I) {};
        \draw (2,-2) node[inner sep=1pt,fill=black] (K) {};
        \draw (3,-3) node[inner sep=1pt,fill=black] (KP) {};
        \draw (0,-4) node (R) {$\rho$};
        \draw (1,-5) node (RP) {$\rho'$};
        \draw (5,-5) node (RPP) {$\rho''$};

        \draw (K.0) node[right] {$k$};
        \draw (KP.0) node[right] {$k'$};

        \draw (I) -- (K);
        \draw (K) -- (R);
        \draw (K) -- (KP);
        \draw (KP) -- (RP);
        \draw (KP) -- (RPP);
      \end{tikzpicture}
    }
    \hfill~\hfill
    \subfloat[$k' < k$]{
      \label{fig:diverge2}
      \begin{tikzpicture}[scale=0.6]
        \draw (0,0) node (I) {};
        \draw (2,-2) node[inner sep=1pt,fill=black] (KP) {};
        \draw (3,-3) node[inner sep=1pt,fill=black] (K) {};
        \draw (0,-4) node (RP) {$\rho'$};
        \draw (1,-5) node (RPP) {$\rho''$};
        \draw (5,-5) node (R) {$\rho$};

        \draw (K.0) node[right] {$k$};
        \draw (KP.0) node[right] {$k'$};

        \draw (I) -- (KP);
        \draw (KP) -- (K);
        \draw (KP) -- (RP);
        \draw (K) -- (R);
        \draw (K) -- (RPP);
      \end{tikzpicture}
    }
    \hfill~
    \caption{Divergence of outcomes by strategy shifting in the proof of Lemma~\ref{lem:dj}.}
    \label{fig:diverge}
  \end{figure}
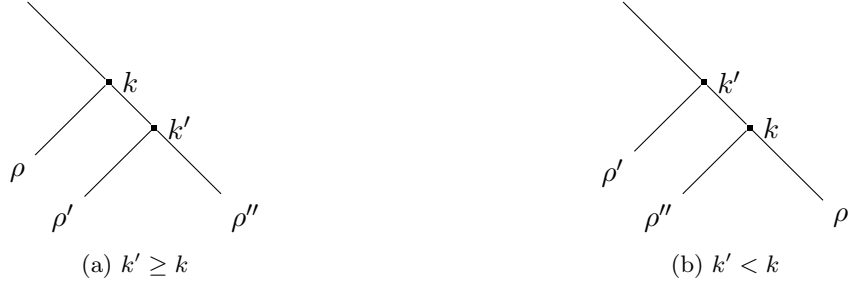
\begin{itemize}
\item If $k' \geq k$, \ie $\rho''$ diverges from $\rho$ before diverging from $\rho'$.
  Since $\sigmaa_i$ is not weakly dominated, from $\rho''_{k+1}$ there is a strategy profile $\sigma''_{-i}$ that makes $\sigmaa_i$ win and $\sigma'_i \circ \rho''_{\le k}$ lose.
  Hence $\sigma'_{-i}\switch{\rho''_{\le k'+1}}{\sigma''_{-i}}$ makes $\sigma_i$ win but not $\sigma'_i$.
  Therefore $\sigma'_i$ does not weakly dominates $\sigma_i$.

\item In the other case, $k'<k$, the point where the two outcomes diverge is along~$\rho$.
  We have $\rho_{\le k'} = \rho'_{\le k'}$.
  We write $s = \rho_{k'}' = \rho_{k'}$, which is controlled by $j \ne i$, $s' = \rho'_{k'+1}$, and $s''= \rho_{k'+1} = \sigma_i(\rho_{\le k'}) = \rho''_{k'+1}$.
  Notice that $\val^{n}_j(s')\le 0$ because $\val^n_i(s)=0$ and using Lemma~\ref{lem:value}.
  Hence there is a strategy profile~$\sigmal_{-i}$ in $\stratset^{n}$ that makes $\sigma'_i$ lose from $s'$.

\begin{itemize}
\item If $\rho$ is a winning outcome for \player{i}, we consider the strategy profile:
  \[\sigma'_{-i}\switch{\rho'_{\le k'+1}}{\sigmal_{-i} }\switch{\rho_{\le k'+1}}{\sigma_{-i}\circ \rho_{\le k'+1}} .\]
  Combined with $\sigma_i$, its outcome is $\rho$ and therefore is winning.
  This strategy profile belongs to $\stratset^{n}$ by Lemma~\ref{lem:shifting} and makes $\sigma_i$ win and $\sigma'_i$ lose, hence $\sigma'_i$ does not weakly dominates $\sigma_i$.

\item Otherwise, there is an infinite number of states in $H^n_i$ along $\rho$.
  Let $k''$ be the index of the first occurrence of $H^n_i$ after the divergence: the smallest $k'' > k'$ such that $\rho_{k''} \in H_i^n$.
  State $\rho_{k''}$ is controlled by a player $j\ne i$.
  We consider a strategy profile $\profile^d$ such that:
  \begin{itemize}
  \item $\sigma^d_j(\rho_{k''}) = s^d$ such that $s^d \neq \rho_{k''+1}$, with $\val^n_i(s) \ge 0$, and    $\val^{n-1}_j(\rho_{k''}) = \val^{n-1}_j(s^d)$.
    This state exists since $\rho_{k''} \in H^n_i$.
  \item Then from $s^d$ it follows a strategy profile $\sigma''_{-i} \in \stratset^n_{-i}$ that makes $\sigmaa_i$ win.
    This profile exists since $\sigmaa_i\in \stratset^{n+1}_i$ and $\val^{n}_i(s^d)\ge 0$. 
  \end{itemize}
  The profile $\profile^d$ belongs to $\stratset^n_{-i}$.
  Indeed, all players but $j$ play according to a strategy of $\stratset^n_{-i}$ by construction.
  In the case of \player{j}, it first involves a step that does not change the value $\val^{n-1}_j$, so by Lemma~\ref{lem:valeqadm}, it is also admissible.
  
  We consider the strategy profile:
  \[\sigma'_{-i}\switch{\rho'_{\le k'+1}}{\sigmal_{-i} }\switch{\rho_{\le k''}}{\sigma_{-i}^d} .\]
  This strategy profile belongs to $\stratset^{n}$ by Lemma~\ref{lem:shifting} and makes $\sigma_i$ win and $\sigma'_i$ lose, hence $\sigma'_i$ does not weakly dominates $\sigma_i$.
\end{itemize}
\end{itemize}
  
  This shows that $\sigma_i$ is not weakly dominated and belongs to $\stratset^{n+1}_i$.
\end{proof}
\end{collect}

\begin{figure}[htb]
  \begin{center}
    \subfloat[A first reachability game]{\label{fig:first-reach-game}
      \begin{tikzpicture}[xscale=2.4,yscale=1.5]
      \draw (0,0) node[player1] (A) {$s_0$};
      \draw (1,0) node[player1] (B) {$s_1$};
      \draw (2,0) node[player2] (C) {$s_2$};
      \draw (1.2,-1) node[draw,minimum size=1cm,rounded corners=5mm] (D) {$\good_1$};
      \draw (2.5,-1) node[player1] (E) {$s_3$};
      \draw[-latex'] (A) edge[bend left] (B);
      \draw[-latex'] (B) edge[bend left] (A);
      \draw[-latex'] (B) -- (C);
      \draw[-latex'] (C) -- (D);
      \draw[-latex'] (C) -- (E);
      \draw(E) edge[bigloop right,distance=0.3cm,-latex'] (E);
      \draw(D) edge[bigloop right,distance=0.3cm,-latex'] (D);
    \end{tikzpicture}
    }

    \subfloat[A second reachability game]{\label{fig:second-reach-game}
    \begin{tikzpicture}[xscale=2.4,yscale=1.5]
      \draw (0,0) node[player1] (A) {$s_0$};
      \draw (1,0) node[player1] (B) {$s_1$};
      \draw (2,0) node[player2] (C) {$s_2$};
      \draw (0.5,-0.5) node[player2] (CB) {$s_4$};
      \draw (1.2,-1) node[draw,minimum size=1cm,rounded corners=5mm] (D) {$\good_1$};
      \draw (2.5,-1) node[player1] (E) {$s_3$};
      \draw[-latex'] (A) edge[bend left] (B);
      \draw[-latex'] (B) edge[bend left] (CB);
      \draw[-latex'] (CB) edge[bend left] (A);
      \draw[-latex'] (B) -- (C);
      \draw[-latex'] (C) -- (D);
      \draw[-latex'] (C) -- (E);
      \draw[-latex'] (CB) -- (D);
      \draw (E) edge[bigloop right,distance=0.3cm,-latex'] (E);
      \draw (D) edge[bigloop right,distance=0.3cm,-latex'] (D);
    \end{tikzpicture}
    }
  \end{center}
    \caption{Two games. The goal for \player{1} is to reach $\good_1$. }
    \label{fig:dj}
\end{figure}
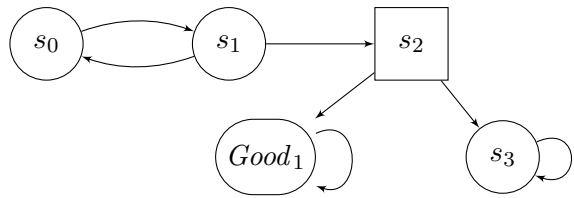
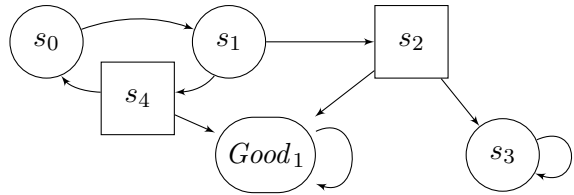

\begin{example}\label{ex:dj}
  As an illustration of Lemma~\ref{lem:dj}, consider the two games in \figurename~\ref{fig:dj}.
  In the game of \figurename~\ref{fig:first-reach-game}, a strategy of \player{1} that stays in the loop $(s_0\cdot s_1)$ forever is weakly dominated.
  The value of this run is $0^\omega$ and visits no ``Help!''-state, since in that game $H_1^0 = \emptyset$.
  Intuitively, the strategy is dominated because it has no chance of winning, while getting out after $k$ steps can be winning if \player{2} helps.

  However, in the game of \figurename~\ref{fig:second-reach-game}, the strategy that always chooses $s_4$ from $s_1$ is admissible.
  The run $(s_0 \cdot s_1 \cdot s_4)^\omega$ also has value $0^\omega$, but this time $H_1^0 = \{ s_4 \}$, which is visited infinitely often by the loop.
\end{example}

\subsection{Automata for $\outcome(\stratset^n)$}\label{sec:automataDef}
Our goal is to obtain an automaton which recognizes $\outcome(\stratset^n)$.
We decompose this construction for each player by noting that:
\[\outcome(\stratset^{n-1}) \cap \bigcap_{i\in \Agt} \outcome(\stratset^{n-1}_{-i},\stratset^n_i) = \outcome(\stratset^n) \]
From the 
characterization of admissible strategies w.r.t. values of state, we define an automaton~$\A_i^n$
that selects $\outcome(\stratset^{n}_{i})$ within $\outcome(\stratset^{n-1})$, \ie such that \(\L(\A_i^n) \cap \outcome (\stratset^{n-1})\) is equal to \(\outcome(\stratset^n_i,\stratset^{n-1}_{-i}) \).
This construction rely on values at previous iterations.
For $n>0$, $\A_i^n$ is the automaton where:
\begin{itemize}
\item The set of states is $V$, i.e. the same states as in $\G$;
\item From the transitions in $\G$ we remove those controlled by \player{i} that decrease his value.
  Formally:
  \[ T = E \setminus \{ (s,s') \mid s \in V_i \land \val^{n-1}_i(s) > \val^{n-1}_i(s') \}.\]
\item A run $\rho$ is accepted by $\A_i^n$ if one of the following conditions is satisfied:
\begin{enumerate}
\item $\hval{n-1}{i}(\rho) \in 0^* (-1)^\omega$; 
\item $\hval{n-1}{i}(\rho) \in 0^* 1^\omega$ and $\rho\in \win{i}$; 
\item $\hval{n-1}{i}(\rho) \in 0^\omega$ and $\rho\in \win{i}$ or $\rho \in (V^* H^{n-1}_i)^\omega$.
\end{enumerate}
\end{itemize}

Note that the structure of the automaton is a subgame of $\G$.
Since the transitions of $\G$ bear no label, the ``language''~$\L(\A)$ of an automaton~$\A$ is here considered to be the set of accepting runs.
The following lemma shows the relation between automaton $\A_i^n$ and outcomes of admissible strategies at step~$n$.

\begin{lemma}\label{lem:la}
  \( \outcome (\stratset^{n-1}) \cap \L(\A_i^n) = \outcome(\stratset^n_i,\stratset^{n-1}_{-i}) \)
\end{lemma}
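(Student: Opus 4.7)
The plan is to establish the two inclusions separately, each case of the acceptance condition of $\A_i^n$ matching one of the three value patterns identified by Lemma~\ref{lem:disjonctionval}. The key ingredients are Lemma~\ref{lem:admvaleq} (player $i$'s admissible moves preserve the value), Lemma~\ref{lem:disjonctionval} (outcomes of admissible strategies have value sequence in $0^*1^\omega + 0^\omega + 0^*(-1)^\omega$), and the three characterisations provided by Lemmata~\ref{lem:valun}, \ref{lem:valmoinsun}, \ref{lem:dj}.

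For the inclusion $\outcome(\stratset^n_i,\stratset^{n-1}_{-i}) \subseteq \outcome(\stratset^{n-1}) \cap \L(\A_i^n)$, I fix $\rho = \outcome_s(\sigma_i,\sigma_{-i})$ with $\sigma_i \in \stratset^n_i$ and $\sigma_{-i} \in \stratset^{n-1}_{-i}$. Since $\stratset^n_i \subseteq \stratset^{n-1}_i$, the run $\rho$ is in $\outcome(\stratset^{n-1})$. To see that $\rho$ is a run of $\A_i^n$, I apply Lemma~\ref{lem:admvaleq} at each $k$ with $\rho_k \in V_i$ to get $\val^{n-1}_i(\rho_k) = \val^{n-1}_i(\rho_{k+1})$, so no transition of $\rho$ is removed in the construction of $\A_i^n$. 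Next, Lemma~\ref{lem:disjonctionval} constrains $\hval{n-1}{i}(\rho)$ to lie in $0^*1^\omega + 0^\omega + 0^*(-1)^\omega$; in the $0^*1^\omega$ case Lemma~\ref{lem:valun} gives $\rho \in \win{i}$, in the $0^\omega$ case Lemma~\ref{lem:dj} gives $\rho \in \win{i}$ or $\inff{\rho} \cap H^{n-1}_i \neq \emptyset$ (so $\rho \in (V^* H^{n-1}_i)^\omega$), and the $0^*(-1)^\omega$ case is accepted unconditionally.

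For the converse, let $\rho \in \outcome(\stratset^{n-1}) \cap \L(\A_i^n)$. The acceptance condition selects one of the three value patterns, and in each case I aim to invoke the corresponding equivalence (with index $n-1$ in place of $n$) to deduce $\rho \in \outcome_s(\stratset^n_i)$. The only non-obvious case is $\hval{n-1}{i}(\rho) \in 0^*(-1)^\omega$ (Lemma~\ref{lem:valmoinsun}): I need to show that the state $\rho_k$ at which the value drops from $0$ to $-1$ is not controlled by player $i$; this follows precisely because the transition $\rho_k \to \rho_{k+1}$ is not removed from $\A_i^n$. In the remaining two cases the extra acceptance condition is exactly what Lemmata~\ref{lem:valun} and~\ref{lem:dj} require.

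The subtle final step is translating $\rho \in \outcome_s(\stratset^n_i)$ (which a priori pairs an admissible strategy of player~$i$ with \emph{arbitrary} strategies of the others) into $\rho \in \outcome(\stratset^n_i,\stratset^{n-1}_{-i})$. This is the part I expect to require the most care. The plan is to inspect the proofs of Lemmata~\ref{lem:valun}, \ref{lem:valmoinsun}, \ref{lem:dj}: each of them produces a strategy $\sigma_i' \in \stratset^n_i$ that is \emph{compatible with $\rho$}, i.e.\ follows $\rho$ on all prefixes of $\rho$ with last state in $V_i$. Then, using the hypothesis $\rho \in \outcome(\stratset^{n-1})$, I fix any profile $\sigma \in \stratset^{n-1}$ with $\outcome_s(\sigma) = \rho$; the pair $(\sigma_i', \sigma_{-i})$ has $\rho$ as outcome and belongs to $\stratset^n_i \times \stratset^{n-1}_{-i}$, as required.
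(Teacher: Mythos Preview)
Your proposal is correct and follows essentially the same route as the paper: split into the two inclusions, reduce to the three value patterns via Lemma~\ref{lem:disjonctionval}, and invoke Lemmata~\ref{lem:valun}, \ref{lem:valmoinsun}, \ref{lem:dj} in each case; for the $0^*(-1)^\omega$ case you correctly observe that the removed transitions of $\A_i^n$ force $\rho_k\notin V_i$ at the value drop.

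The only place where you work harder than needed is your ``subtle final step''. In the paper, $\outcome(S_{P'})$ for a tuple of strategy sets is \emph{defined} as the intersection $\bigcap_{j\in P'}\outcome(S_j)$, so $\outcome(\stratset^n_i,\stratset^{n-1}_{-i}) = \outcome(\stratset^n_i)\cap\bigcap_{j\ne i}\outcome(\stratset^{n-1}_j)$. Since $\stratset^n_i\subseteq\stratset^{n-1}_i$, this equals $\outcome(\stratset^n_i)\cap\outcome(\stratset^{n-1})$. Hence once you have $\rho\in\outcome(\stratset^{n-1})$ (hypothesis) and $\rho\in\outcome(\stratset^n_i)$ (from the three lemmata), the conclusion is immediate; there is no need to go back into the proofs of the lemmata to extract an explicit $\sigma_i'$ compatible with $\rho$.
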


\ifShort
\begin{proof}[Sketch of proof]
  For all run $\rho$, thanks to Lemma~\ref{lem:disjonctionval}, there are three cases to distinguish, based on $\hval{n-1}{i}(\rho)$.
  The case $0^* 1^\omega$ is solved by Lemma~\ref{lem:valun}, the case $0^* (-1)^\omega$ by Lemma~\ref{lem:valmoinsun}, and the case $0^\omega$ by Lemma~\ref{lem:dj}.
  \qedhere
\end{proof}
\fi

\begin{collect}{appendix-prefix-independent}{\subsubsection{Proof of Lemma~\ref{lem:la}}}{}
\begin{proof}
  Let $\rho$ be the outcome of a strategy profile in $\stratset^{n-1}$ that is also accepted by $\A_i^n$.
  Note that proving that $\rho \in \outcome(\stratset^n_i)$ suffices to prove that $\rho \in \outcome(\stratset^n_i,\stratset^{n-1}_{-i})$.
  
  There are three different cases, depending on which part of the accepting condition of $\A^n_i$ is fulfilled.
  \begin{itemize}
  \item If $\hval{n-1}{i}(\rho) \in 0^* 1^\omega$, then $\rho\in\win{i}$, so Lemma~\ref{lem:valun} yields that $\rho \in \outcome(\stratset^n_i)$.
  \item If $\hval{n-1}{i}(\rho) \in 0^* (-1)^\omega$, the definition of $T$ ensures that for any index $k$ such that $\rho_k \in V_i$, $\hval{n-1}{i}(\rho_k) = \hval{n-1}{i}(\rho_{k+1})$.
  Thus Lemma~\ref{lem:valmoinsun} yields that $\rho \in \outcome(\stratset^n_i)$.
  \item Otherwise $\hval{n-1}{i}(\rho) \in 0^\omega$, and $\rho\in \win{i}$ or there is an infinite number of visits to $H_i^{n-1}$.
  Then Lemma~\ref{lem:dj} allows to conclude.
  \end{itemize}
  
\medskip

Reciprocally, let $\rho \in \outcome(\stratset^n_i,\stratset^{n-1}_{-i})$.
By monotonicity of the sets of admissible strategies, $\rho \in \outcome(\stratset^{n-1})$, so it remains to be proven that $\rho \in \L(\A^n_i)$.
Notice that since $\rho \in \outcome(\stratset^n)$, $\rho \notin B^n_i$, so $\rho$ is actually a run of $\A^n_i$.

Now by Lemma~\ref{lem:disjonctionval}, we can consider the three following cases for $\val^{n-1}_i(\rho)$.
\begin{itemize}
  \item If $\hval{n-1}i(\rho) \in 0^* (-1)^\omega$, then $\rho \in \L(\A^n_i)$.
  \item If $\hval{n-1}i(\rho) \in 0^* 1^\omega$.
  Let us write $\rho = \outcome(\sigma_i,\sigma_{-i})$ with $\sigma_i \in \stratset^n_i$ and $\sigma_{-i} \in \stratset^{n-1}_{-i}$.
  Let $s$ be the first state of value $1$ encountered in $\rho$.
  From $s$ there exists a strategy $\sigma^w_i$ for \player{i} winning against all other strategy profiles of $\stratset^{n-1}_{-i}$.
  Therefore if $\sigma_i$ is not a winning strategy, $\sigma^w_i \domstr{\stratset^{n-1}} \sigma_i$, which is a contradiction with the fact that $\sigma_i \in \stratset^n_i$.
  Thus $\sigma_i$ is a winning strategy so $\rho \in \win{i}$ and therefore $\rho \in \L(\A^n_i)$.
  \item Otherwise $\hval{n-1}i(\rho) \in 0^\omega$ and Lemma~\ref{lem:dj} allows to conclude that either $\rho\in \win{i}$ or $\rho \in (V^* H^{n-1}_i)^\omega$, therefore that $\rho \in \L(\A^n_i)$.
  \qedhere
\end{itemize}
\end{proof}
\end{collect}

The following property is a direct consequence of Lemma~\ref{lem:la}.
\begin{lemma}\label{lem:auto-stratsetn}
  \(\outcome(\stratset^{n-1}) \cap \bigcap_{i\in\Agt} \L(\A_i^n) = \outcome(\stratset^n) \)
\end{lemma}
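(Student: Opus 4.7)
The plan is to reduce Lemma~\ref{lem:auto-stratsetn} to Lemma~\ref{lem:la} by distributing intersection over the players, and then to argue that the combined pointwise constraints ``player $i$'s part of the profile lies in $\stratset^n_i$'' can be realized simultaneously by a single profile in $\stratset^n$.

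First I would apply Lemma~\ref{lem:la} individually to each player. Pulling $\outcome(\stratset^{n-1})$ inside the intersection gives
\[ \outcome(\stratset^{n-1}) \cap \bigcap_{i\in\Agt} \L(\A_i^n) \;=\; \bigcap_{i\in\Agt} \bigl( \outcome(\stratset^{n-1}) \cap \L(\A_i^n) \bigr) \;=\; \bigcap_{i\in\Agt} \outcome(\stratset^n_i,\stratset^{n-1}_{-i}). \]
Here I implicitly use that $\Agt$ is non-empty so that the leftmost intersection still includes the factor $\outcome(\stratset^{n-1})$. So it remains to identify the right-hand side with $\outcome(\stratset^n)$.

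The inclusion $\outcome(\stratset^n) \subseteq \bigcap_i \outcome(\stratset^n_i,\stratset^{n-1}_{-i})$ is immediate by monotonicity, since $\stratset^n \subseteq \stratset^{n-1}$ and hence $\stratset^n_{-i} \subseteq \stratset^{n-1}_{-i}$. The converse is the real content: given $\rho \in \bigcap_i \outcome(\stratset^n_i,\stratset^{n-1}_{-i})$, I would for each player $i$ pick a witnessing strategy $\sigma_i \in \stratset^n_i$ together with a profile $\sigma_{-i}^{(i)} \in \stratset^{n-1}_{-i}$ such that $\outcome(\sigma_i,\sigma_{-i}^{(i)}) = \rho$. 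The strategies of the partners $\sigma_{-i}^{(i)}$ will in general be different for different $i$, so the challenge is to combine them. However, along the unique outcome $\rho$ what matters is only the move chosen at each prefix $\rho_{\le k}$, and that move is determined by the strategy of the player $j$ who controls $\rho_k$; by construction $\sigma_j(\rho_{\le k}) = \rho_{k+1}$, where $\sigma_j$ is the ingredient coming from the $j$-th witness. Thus the profile $(\sigma_j)_{j\in\Agt}$ assembled from the witnesses produces exactly $\rho$ and lies in $\prod_j \stratset^n_j = \stratset^n$, establishing $\rho \in \outcome(\stratset^n)$.

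The only subtlety is this last compatibility-of-witnesses argument. It is handled by the observation that along a single run, the strategy of the player currently choosing is the only one whose value at that history matters, so there is no conflict between the $|\Agt|$ independently chosen witness strategies. No additional shifting lemma is needed at this stage, as Lemma~\ref{lem:la} already packages the admissibility content of the construction.
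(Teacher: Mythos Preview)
Your argument is correct and follows the same line as the paper's proof. The only difference is that you spell out the ``compatibility-of-witnesses'' step, whereas in the paper this is immediate from the definition $\outcome_s(S_{P'}) = \bigcap_{i\in P'} \outcome_s(S_i)$: with that convention, $\bigcap_{i\in\Agt}\outcome(\stratset^n_i,\stratset^{n-1}_{-i}) = \bigl(\bigcap_{i\in\Agt}\outcome(\stratset^n_i)\bigr)\cap\outcome(\stratset^{n-1}) = \outcome(\stratset^n)\cap\outcome(\stratset^{n-1}) = \outcome(\stratset^n)$ holds by unfolding definitions and using $\stratset^n\subseteq\stratset^{n-1}$, with no separate witness-combination argument needed.
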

\begin{collect}{appendix-prefix-independent}{\subsubsection{Proof of Lemma~\ref{lem:auto-stratsetn}}}{}
\begin{proof}
%
  \begin{eqnarray*}
    \outcome(\stratset^{n-1}) \cap \bigcap_{i\in\Agt} \L(\A_i^n) &=& \bigcap_{i\in\Agt} (\L(\A_i^n) \cap \outcome(\stratset^{n-1})) \\
    &=& \bigcap_{i\in\Agt} (\outcome(\stratset^n_i) \cap \outcome(\stratset^{n-1})) \\&& \hspace{4cm}\text{(by Lemma~\ref{lem:la})} \\
    &=& \left( \bigcap_{i\in\Agt} \outcome(\stratset^n_i)\right) \cap \outcome(\stratset^{n-1}) \\
    &=& \outcome(\stratset^n)  \cap  \outcome(\stratset^{n-1})\\
    \outcome(\stratset^{n-1}) \cap \bigcap_{i\in\Agt} \L(\A_i^n) &=& \outcome(\stratset^n) \ \text{because}\ \stratset^n \subseteq \stratset^{n-1}
  \end{eqnarray*}
  \qedhere
\end{proof}
\end{collect}

As $\outcome(\stratset^0)$ is the set of all runs over $\G$, the previous lemma allows the construction of $\outcome(\stratset^{n})$ by induction, for any $n$.
Note that the size of the automaton accepting $\outcome(\stratset^n)$ does not grow since all automata in the intersection share the same structure as $\G$ (although some edges have been removed).
The accepting condition, however, becomes more and more complex.
Nonetheless, if for all $j \in \Agt$, $\win{j}$ is a Muller condition given by a Boolean circuit, the condition of the automaton accepting $\outcome(\stratset^n)$ is a Boolean combination of such conditions, thus it is expressible by a Boolean circuit, of polynomial size.
For example, condition $\hval{n-1}{i}(\rho) \in 0^* 1^\omega$ can be expressed by a circuit that is the disjunction of all states of value $1$ at step $n-1$.
\MS{Je ne sais pas s'il faut plus de d\'etails l\`a dessus.}

\subsection{Inductive computation of values}
\label{sec:valcomput}

Now, we show how to compute the values of the states for all players.
Initially, at ``iteration $-1$'', all values are assumed to be $0$, and $\stratset^{-1}=\stratset$.
Computing the values at the next iteration relies on solving two-player zero-sum games with objectives based on outcomes of admissible strategies.
For example, in order to decide whether a state $s$ has value $1$ for \player{i} at iteration $0$, \ie whether $\val^0_i(s)=1$, one must decide whether \player{i} has a winning strategy from $s$ when playing against all other players.
This corresponds to the game with winning condition $\win{i}$ where vertices of all players but $i$ belong to a single opponent.
To decide whether $\val^0_i(s)>-1$, all players try together to make $i$ win.
Therefore this is a one-player game (or emptiness check) on $\G$ with condition $\win{i}$.
All the other states have value $-1$.

This idea is extended to subsequent iterations, but the objectives of the games become more complex in order to take into account the previous iterations: the objectives need to enforce that only outcomes of admissible strategies (at previous iterations) are played.
Hence the construction relies on the automata $\A^n_i$ built above.
Assuming that winning conditions for all players are Muller 
 conditions, this yields a polynomial space algorithm to compute the values.

\begin{collect}{appendix-prefix-independent}{}{}
In the sequel, a strategy or an outcome ``winning for objective $X$'' is said \emph{$X$-winning}, as not to confuse with simply \emph{winning}, meaning ``winning for objective $\win{i}$'' ($i$ being usually clear from context).
\end{collect}

\subsubsection{Characterizing states of value -1}\label{sec:psi}


A state~$s$ has a value~$>-1$ for \player{i} at step~$n$ if there is a strategy profile $\profile\in\stratset^n$ s.t. $\outcome_s(\profile) \in \win{i}$.
This is expressed by:
\(\exists \profile \in \stratset.\ \profile \in \stratset^n \land \outcome_s(\profile) \in \win{i}\).
This prompts the definition of the following objective: 
\[\Psi^n_i(s) = \outcome_s(\stratset^n) \cap \win{i}.\]

\begin{lemma}
If $\Psi^n_i(s) \neq \emptyset$, then $\exists \profile \in \stratset^n : \outcome_s(\profile) \in \win{i}$.
\end{lemma}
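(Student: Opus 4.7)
The plan is to unpack the definitions of $\Psi^n_i(s)$ and $\outcome_s(\stratset^n)$, and then use the \emph{rectangularity} of $\stratset^n$ to glue together, for each player, a single admissible strategy that is already consistent with the target run. The key observation is that by the inductive definition $\stratset^{n+1} = \prod_{i \in \Agt} \{\sigma_i \in \stratset^n_i \mid \sigma_i \text{ admissible in } \stratset^n\}$, every $\stratset^n$ is a Cartesian product of sets of strategies, so picking one admissible strategy per player always yields a profile in $\stratset^n$.

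Concretely, assume $\Psi^n_i(s) \neq \emptyset$ and let $\rho \in \outcome_s(\stratset^n) \cap \win{i}$. By the definition $\outcome_s(\stratset^n) = \bigcap_{j\in\Agt} \outcome_s(\stratset^n_j)$, for each player $j$ there exist $\sigma_j^{(j)} \in \stratset^n_j$ and some $\sigma_{-j}^{(j)} \in \stratset_{-j}$ such that $\outcome_s(\sigma_j^{(j)}, \sigma_{-j}^{(j)}) = \rho$. In particular, at every index $k$ with $\rho_k \in V_j$, we must have $\sigma_j^{(j)}(\rho_{\le k}) = \rho_{k+1}$, since this is how $\rho$ is produced.

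I would then form the profile $\profile = (\sigma_j^{(j)})_{j\in\Agt}$, taking the $j$-th component from player $j$'s own witness. By rectangularity, $\profile \in \stratset^n$. A straightforward induction on the length of prefixes shows $\outcome_s(\profile) = \rho$: if $\outcome_s(\profile)_{\le k} = \rho_{\le k}$ and $\rho_k \in V_j$, then the $(k+1)$-th state of $\outcome_s(\profile)$ is $\sigma_j^{(j)}(\rho_{\le k}) = \rho_{k+1}$, using the consistency of $\sigma_j^{(j)}$ with $\rho$ noted above. Since $\rho \in \win{i}$, this exhibits the required profile.

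There is no real obstacle here: the whole content of the lemma is that the product structure of $\stratset^n$ allows a coherent single profile to be assembled from separate per-player witnesses. I would emphasise in the write-up that this is exactly where rectangularity of $\stratset^n$ (immediate from its inductive definition) matters, and that no further game-theoretic reasoning, shifting argument, or use of Lemma~\ref{lem:shifting} is needed.
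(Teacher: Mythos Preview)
Your proof is correct and is exactly the natural argument: rectangularity of $\stratset^n$ lets you assemble a single profile from the per-player witnesses, and the induction on prefixes shows its outcome is the given run. The paper in fact states this lemma without proof, treating it as immediate from the definitions; your write-up spells out precisely the reasoning the authors leave implicit.
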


%

Therefore, since the set $\outcome_s(\stratset^n)$ and $\win{i}$ can both be expressed as the language of an automaton, testing whether a state has value $-1$ boils down to check emptiness.

\subsubsection{Characterizing states of value 1}

A state~$s$ has value~$1$ for \player{i} at step~$n$ if he has a strategy $\sigma_i$ in $\stratset^n_i$ such that for all strategies $\sigma_{-i}$ in $\stratset^n_{-i}$, $\outcome_s(\sigma_i,\sigma_{-i}) \in \win{i}$.
This is expressed by the formula:
\begin{mathpar}
\exists \sigma_i \in \stratset_i,\ \forall \sigma_{-i} \in \stratset_{-i},\ \left( \sigma_i\in \stratset^n_i \land \left(\sigma_{-i} \in \stratset^n_{-i} \Rightarrow \win{i}^s(\sigma_i,\sigma_{-i})\right)\right).
\end{mathpar}
This prompts the definition of the following objective:
\[\Omega_i^n(s)=\outcome_s(\stratset^{n}_i) \cap \left( \outcome_s(\stratset^n) \Rightarrow \win{i}\right).\]
We show that there is indeed a correspondence between this objective $\Omega_i^n$ and states~$s$ such that $\val^n_i(s)=1$.

\begin{collect}{appendix-prefix-independent}{}{}
\begin{lemma}\label{lem:omegaadm}
If $\sigma_i\in \stratset_i$ is a $\Omega_i^n(s)$-winning strategy, then there exists $\sigma_i' \in \stratset_i^n$ such that $\sigma_i(s\cdot h) = \sigma_i'(s\cdot h)$ for any history $h$.
\end{lemma}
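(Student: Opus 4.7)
First I would unpack the hypothesis. Using $\outcome_s(\stratset^n)=\bigcap_{j\in\Agt}\outcome_s(\stratset^n_j)$, for any $\tau_{-i}\in\stratset^n_{-i}$ the outcome $\outcome_s(\sigma_i,\tau_{-i})$ lies in $\outcome_s(\stratset^n_{-i})$ automatically and in $\outcome_s(\stratset^n_i)$ by the first conjunct of $\Omega_i^n(s)$; hence it lies in $\outcome_s(\stratset^n)$, and therefore in $\win{i}$ by the second conjunct. Thus $\sigma_i$ is winning from $s$ against every profile of $\stratset^n_{-i}$.

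Next I would construct $\sigma_i'$ explicitly: set $\sigma_i'(s\cdot h):=\sigma_i(s\cdot h)$ on every history starting with $s$, and $\sigma_i'(h):=\sigma_i^a(h)$ on every history starting from some $s'\ne s$, where $\sigma_i^a$ is an arbitrary fixed strategy of $\stratset^n_i$. By the remark following the definition of $\stratset^n$ in Section~\ref{sec:background}, $\stratset^n_i$ equals the cartesian product, indexed by starting state, of the per-state admissible strategies; hence $\sigma_i'\in\stratset^n_i$ reduces to admissibility from $s''$ for every $s''\in V$. For $s''\ne s$ this is immediate from the choice of $\sigma_i^a$, so the whole task reduces to showing that $\sigma_i$, viewed as a strategy launched at $s$, is admissible from $s$ with respect to $\stratset^{n-1}$.

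This last step is the main obstacle, and I would handle it by induction on $n$; the base $n=0$ is immediate since $\stratset^0=\stratset$. For the inductive step, I assume toward contradiction that some $\sigma_i''\in\stratset^{n-1}_i$ strictly dominates $\sigma_i$ from $s$, witnessed by a profile $\tau_{-i}\in\stratset^{n-1}_{-i}$ with $\outcome_s(\sigma_i'',\tau_{-i})\in\win{i}$ and $\rho:=\outcome_s(\sigma_i,\tau_{-i})\notin\win{i}$. Because $\sigma_i$ wins from $s$ against every element of $\stratset^n_{-i}$, we must have $\tau_{-i}\notin\stratset^n_{-i}$, so some component $\tau_j$ belongs to $\stratset^{n-1}_j\setminus\stratset^n_j$ and is strictly dominated in $\stratset^{n-1}$ by some $\tau_j^\star\in\stratset^{n-1}_j$. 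I would substitute $\tau_j$ by $\tau_j^\star$ at a suitable history along $\rho$, using Lemma~\ref{lem:circ} together with the shifting closure of $\stratset^{n-1}$ (Lemma~\ref{lem:shifting}), exploiting prefix-independence of $\win{i}$ to keep the shifted outcome in the losing tail of $\rho$. Iterating this repair along each eliminated component and invoking termination of iterative admissibility under $\omega$-regular conditions~\cite{berwanger07} yields some $\tilde\tau_{-i}\in\stratset^n_{-i}$ against which $\sigma_i$ still loses, contradicting the conclusion of the first paragraph.

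The delicate point, which I expect to be the real technical crux, is preserving ``$\sigma_i$ loses'' through each replacement: a strategy $\tau_j^\star$ that dominates $\tau_j$ only guarantees \player{j} wins at least as often, not that \player{i}'s outcome remains in the complement of $\win{i}$. My plan is therefore to trigger the shift at a history on the losing tail of $\rho$ where $\tau_j$ decreases $\val^{n-2}_j$, mirroring the necessary-condition analysis of Lemma~\ref{lem:necessaryAdm}, so that after the shift the outcome coincides with $\rho$ from that history onwards.
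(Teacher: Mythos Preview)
Your setup is fine: the first paragraph correctly extracts that $\sigma_i$ wins from $s$ against every profile of $\stratset^n_{-i}$, and your construction of $\sigma_i'$ matches the paper's. The gap is in the inductive step, at precisely the point you yourself flag as delicate.

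The repair of $\tau_{-i}$ cannot work as described. You propose to replace the dominated coordinate $\tau_j$ by some dominating $\tau_j^\star$ at a history $h$ on $\rho$ where $\tau_j$ decreases $\val^{n-2}_j$, claiming the outcome coincides with $\rho$ from $h$ onwards. But substituting a different strategy at $h$ changes what player~$j$ does from $h$ on; indeed if $\tau_j^\star$ agreed with $\tau_j$ after $h$ it would itself take the value-decreasing edge and could not be admissible there either. So the new outcome necessarily branches off $\rho$ at $h$, and you have no control over whether $\sigma_i$ still loses against the modified profile. Once that invariant is lost the iteration gives nothing, and the appeal to termination of iterated admissibility is idle. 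There is also a prior issue: Lemma~\ref{lem:necessaryAdm} only says an admissible strategy never decreases its value along compatible histories; the converse you need, that a dominated $\tau_j$ must decrease $\val^{n-2}_j$ somewhere \emph{along this particular run $\rho$}, is not available.

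The paper sidesteps all of this by exploiting the first conjunct of $\Omega_i^n(s)$ in a stronger way than your first paragraph does. From $\rho=\outcome_s(\sigma_i,\tau_{-i})\in\outcome_s(\stratset^n_i)$ one extracts a genuinely admissible $\sigmaa_i\in\stratset^n_i$ with $\rho=\outcome_s(\sigmaa_i,\tau_{-i})$: the same losing run is produced by a strategy already known to be admissible in $\stratset^{n-1}$. Since $\sigma_i''$ beats $\sigmaa_i$ on $\tau_{-i}$ but cannot strictly dominate it, one obtains a profile in $\stratset^{n-1}_{-i}$ against which $\sigma_i''$ loses; the $\Omega_i^n$-winning hypothesis on $\sigma_i$ is then reused from the divergence point to show $\sigma_i$ wins against that same profile. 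The idea you are missing, in short, is to borrow an admissible shadow of $\sigma_i$ along $\rho$ rather than trying to upgrade the opponents.
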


\begin{proof}
  Let $\sigma_i^0 \in \stratset^n_i$.
  We show that for runs starting from $s$, $\sigma_i$ is admissible.
  That directly yields the result with $\sigma_i'=\sigma_i^0\switch{s}{\sigma_i}$.
  
  We only consider runs starting in state $s$.
  We show that $\sigma_i$ is not strictly dominated with respect to $\stratset^{n-1}$.
  Assume there exists a strategy profile $\profile' \in \stratset^{n-1}$ whose outcome~$\rho'$ is winning for \player{i}, but such that $\rho = \outcome_s(\sigma_i,\sigma'_{-i})$ is not winning for \player{i}.
  Let $k$ be the first index where $\rho_k \ne \rho'_k$.
  By hypothesis that $\sigma_i$ is $\Omega_i^n$-winning, $\rho \in \outcome_s(\stratset_i^n)$ and is the outcome of an admissible strategy $\sigmaa_i \in \stratset_i^n$.
  Note that $\rho = \outcome_s(\sigmaa_i,\sigma'_{-i})$ and is losing, while $\rho'=\outcome_s(\sigma_i',\sigma_{-i}')$ is winning, with $\sigma_{-i}' \in \stratset^{n-1}_{-i}$, so $\sigmaa_i \not\dom{\stratset^{n-1}} \sigma'_i$.
  By admissibility of $\sigmaa_i$, $\sigma_i' \not\domstr{\stratset^{n-1}} \sigmaa_i$, therefore $\sigma_i' \not\dom{\stratset^{n-1}} \sigmaa_i$.
  Hence there exists $\sigma_{-i}''\in \stratset^{n-1}$ that makes $\sigma'_i$ lose from $\rho'_k$.

  Now, $\sigma_i$ is winning for objective $\Omega_i^n$, and since $\rho_{\le k}$ is compatible with $\sigma_i$, it is $\Omega_i^n$-winning from $\rho_k$.
  In particular, if all players $j\ne i$ play according to a strategy of $\stratset^n_{-i}$ it is winning.
  Since the condition $V^\omega\setminus\outcome_s(\stratset^n_{-i})$ cannot be satisfied, condition $\win{i}$ is satisfied.
  Therefore the strategy profile $\sigma_{-i}''\in \stratset^{n-1}$ makes $\sigma_i$ win.
  Recall that $\sigma_{-i}''$ also makes $\sigma'_i$ lose from $\rho'_k$.
  Hence $\sigma_i$ is not strictly dominated with respect to $\stratset^{n-1}_i$ and it belongs to $\stratset^{n}_i$.
\qedhere
\end{proof}
\end{collect}

\begin{proposition}\label{prop:omega}
  A strategy of \player{i} is a strategy of $\stratset^{n}_i$ which is winning from state $s$ against all strategies of $\stratset^{n}_{-i}$ if, and only if, it is winning for objective~$\Omega_i^n(s)$.
\end{proposition}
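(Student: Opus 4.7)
The plan is to prove the two implications separately, using Lemma~\ref{lem:omegaadm} as the main tool for the backward direction and a simple ``alignment'' observation for the forward direction.

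For the forward direction, suppose $\sigma_i \in \stratset^n_i$ is winning from $s$ against every strategy profile in $\stratset^n_{-i}$. I would show that every outcome $\rho = \outcome_s(\sigma_i, \sigma_{-i})$ lies in $\Omega_i^n(s)$. Membership of $\rho$ in $\outcome_s(\stratset^n_i)$ is immediate from the definition of this set, since $\sigma_i$ itself belongs to $\stratset^n_i$. For the implication $\outcome_s(\stratset^n) \Rightarrow \win{i}$, suppose $\rho \in \outcome_s(\stratset^n)$: there exists a profile $(\sigma_i', \sigma_{-i}') \in \stratset^n$ whose outcome from $s$ is $\rho$. At every player-$i$ position visited along $\rho$, $\sigma_i'$ plays the same move as $\sigma_i$ (both produce $\rho$), so in fact $\rho = \outcome_s(\sigma_i, \sigma_{-i}')$ with $\sigma_{-i}' \in \stratset^n_{-i}$. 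By hypothesis $\sigma_i$ wins against such profiles, hence $\rho \in \win{i}$.

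For the backward direction, suppose $\sigma_i$ is $\Omega_i^n(s)$-winning. Lemma~\ref{lem:omegaadm} provides a strategy $\sigma_i' \in \stratset^n_i$ that coincides with $\sigma_i$ on every history extending $s$, so all outcomes from $s$ are identical for $\sigma_i$ and $\sigma_i'$. Since the proposition only concerns behaviour from $s$, it suffices to show that for every $\sigma_{-i} \in \stratset^n_{-i}$ the outcome $\rho = \outcome_s(\sigma_i', \sigma_{-i})$ is in $\win{i}$. But $(\sigma_i', \sigma_{-i}) \in \stratset^n$, so $\rho \in \outcome_s(\stratset^n)$, and the second component of $\Omega_i^n(s)$ then forces $\rho \in \win{i}$.

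The most delicate point, appearing in both directions, is the alignment observation: an outcome is determined entirely by the moves made at the states that it visits, so any two strategies of \player{i} that agree on the player-$i$ positions of $\rho$ produce $\rho$ against the same profile of opponents. This is what lets us transfer between different strategy witnesses in the forward direction without altering the run, and what makes the replacement of $\sigma_i$ by $\sigma_i'$ from Lemma~\ref{lem:omegaadm} harmless in the backward direction. I do not anticipate a real obstacle beyond stating this alignment carefully, since Lemma~\ref{lem:omegaadm} has already absorbed the non-trivial admissibility content.
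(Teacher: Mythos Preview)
Your proposal is correct and follows essentially the same approach as the paper: both directions use Lemma~\ref{lem:omegaadm} for the backward implication and the observation that an outcome determines the moves along it for the forward one. The only difference is that you spell out the ``alignment'' argument explicitly, whereas the paper leaves it implicit when it writes ``if $\rho\in\outcome_s(\stratset^n_{-i})$, as $\sigma_i$ is winning against the strategies of $\stratset^n_{-i}$ this means that $\rho$ is winning''; your version is the cleaner of the two.
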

\begin{collect}{appendix-prefix-independent}{\subsubsection{Proof of Proposition~\ref{prop:omega}}}{}
\begin{proof}
  Assume $\sigma_i$ is a strategy of $\stratset^n_i$ which is winning from $s$ against all strategies of $\stratset^n_{-i}$.
  Consider a strategy profile $\sigma_{-i}$ and the run they produce: $\rho=\outcome_s(\sigma_i,\sigma_{-i})$.
  Obviously $\rho\in \outcome_s(\stratset^n_i)$ because $\sigma_i\in \stratset^n_i$.
  If $\rho \in \outcome_s\left(\stratset^{n}_{-i}\right)$, as $\sigma_i$ is winning against the strategies of $\stratset^{n}_{-i}$ this means that $\rho$ is winning for \player{i}.
  Hence either $\rho\in\win{i}$ or $\rho \in V^\omega \setminus \outcome_s\left(\stratset^{n}_{-i}\right)$.
  So $\sigma_i$ wins for condition $\Omega_i^n$.

  \medskip

  Reciprocally, let $\sigma_i$ be winning for $\Omega_i^n(s)$.
  Lemma~\ref{lem:omegaadm} shows that $\sigma_i$ can be completed as a strategy of $\stratset_i^n$ when starting from states other than $s$.
  We now show that it is winning from $s$ against all strategies of $\stratset_{-i}^n$.
  Let $\sigma_{-i} \in \stratset^n_{-i}$ and $\rho=\outcome_s(\sigma_i,\sigma_{-i})$.
  The path $\rho$ is in $\outcome_s(S^n_{-i})$, so since $\sigma_i$ is winning for $\Omega^n_i$, it means that $\rho\in \win{i}$.
  Hence $\sigma_i$ is winning from $s$ against all strategies of $\stratset_{-i}^n$.
  \qedhere
\end{proof}
\end{collect}

Note that we cannot directly obtain a description of $\Omega_i^n(s)$ in term of automata using Lemma~\ref{lem:la}, since we need to recognize $\outcome_s(\stratset^n_i)$.
To characterize runs of $\outcome_s(\stratset^n_i)$ we will both use a condition on transition similar to the safety case, and a condition on the long run.
Recall that $T^n = \cup_{i\in\Agt} T^n_i$, where:
\[T^n_i =  \{ (s,s') \in E \mid s \in V_i \land \val^{n-1}_i(\rho_{k+1}) < \val^{n-1}_i(\rho_{k}) \}.\]

If a player $j \neq i$ takes a transition that decreases its value, then we immediately know that \player{i} wins for objective $\Omega^n_i$ by playing an admissible strategy.
We thus define:
\[ C^n_i = \{ \rho \in V^\omega \mid \exists k,\ \exists j\ne i,\ (\rho_k,\rho_{k+1}) \in T^n_j \};\]
We also define $C^{\leq n}_i = \bigcup_{m=1}^n C^m_i$ and $\nBC{\leq n} = \bigcap_{m=1}^n \nBC{n}$.
%
\begin{collect}{appendix-prefix-independent}{}{}
And similarly the sets $B^n_i$ and $B^{\leq n}_i=\bigcup_{m=1}^n B^m_i$ by:
\begin{align*}
  B^n_i &= \{ \rho \in V^\omega \mid \exists k.\ \rho_k \in V_i, \val^{n-1}_i(\rho_{k+1}) < \val^{n-1}_i(\rho_{k}) \};\\
\end{align*}
\end{collect}
%
\begin{collect}{appendix-prefix-independent}{}{}
\begin{lemma}\label{lem:admBC}
For any \player{i} and iteration $n$, the following implications hold:
\begin{enumerate}[label=(\arabic*)]
\item \label{item:admBunion}
  If $\rho \in \outcome(\stratset^n_i)$, then $\rho \notin B^{\leq n}_i$.
\item \label{item:admCunion}
  If $\rho \in \outcome(\stratset^n_{-i})$, then $\rho \notin C^{\leq n}_i$.
\item \label{item:admBCunion}
  If $\rho \in \outcome(\stratset^n)$, then $\rho \in \nBC{n}$.
\end{enumerate}
\end{lemma}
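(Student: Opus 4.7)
The plan is to derive all three items from Lemma~\ref{lem:admvaleq}, which states that along any outcome of an admissible strategy of player~$j$, the value $\val^{m-1}_j$ is preserved across every transition controlled by~$j$. The three items are essentially three reformulations of this invariant, combined with the monotonicity $\stratset^n \subseteq \stratset^m$ for $m\le n$ and with the definitions of $B^n_i$, $C^n_i$, and $T^n$.

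First I would handle item~\ref{item:admBunion}. Fix $\rho = \outcome(\sigma_i,\sigma_{-i})$ with $\sigma_i\in\stratset^n_i$, and fix an arbitrary $m\le n$. By monotonicity, $\sigma_i\in\stratset^m_i$, so Lemma~\ref{lem:admvaleq} applies and gives $\val^{m-1}_i(\rho_k) = \val^{m-1}_i(\sigma_i(\rho_{\le k})) = \val^{m-1}_i(\rho_{k+1})$ at every index $k$ with $\rho_k\in V_i$. Hence the value of player~$i$ never strictly decreases at a state of $V_i$ along $\rho$, so $\rho\notin B^m_i$. Taking the union over $m\in\{1,\dots,n\}$ yields $\rho\notin B^{\le n}_i$.

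Items~\ref{item:admCunion} and~\ref{item:admBCunion} will then follow by applying item~\ref{item:admBunion} player by player. For~\ref{item:admCunion}, if $\rho\in\outcome(\stratset^n_{-i})$ then for every $j\neq i$ there is some $\sigma_j\in\stratset^n_j$ compatible with $\rho$, so $\rho\in\outcome(\stratset^n_j)$ and item~\ref{item:admBunion} gives $\rho\notin B^{\le n}_j$; but any transition $(\rho_k,\rho_{k+1})\in T^m_j$ with $m\le n$ would by definition imply $\rho\in B^m_j\subseteq B^{\le n}_j$, a contradiction, so $\rho\notin C^{\le n}_i$. For~\ref{item:admBCunion}, the condition $\rho\in\outcome(\stratset^n)$ entails $\rho\in\outcome(\stratset^n_j)$ for every $j\in\Agt$; item~\ref{item:admBunion} applied to each $j$ shows that no transition of $\rho$ lies in $T^m_j$ for any $j$ and any $m\le n$, hence no transition lies in $T^m = \bigcup_j T^m_j$, so $\rho$ is a run of $\G\setminus T^m$ for every $m\le n$, i.e. $\rho\in\nBC{\le n}$.

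There is no real obstacle: the argument is essentially bookkeeping over iteration indices and over players. The only point that demands care is the passage from ``$\sigma_i\in\stratset^n_i$'' to ``the invariant holds at iteration $m-1$ for every $m\le n$''; this is exactly where monotonicity of the sequence $(\stratset^m)$ is used to bring Lemma~\ref{lem:admvaleq} to bear at all earlier iterations simultaneously.
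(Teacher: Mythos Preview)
Your proof is correct and follows essentially the same approach as the paper: reduce to a single iteration level via the monotonicity $\stratset^n\subseteq\stratset^m$ for $m\le n$, invoke Lemma~\ref{lem:admvaleq} to rule out value-decreasing transitions for the relevant player, then derive items~\ref{item:admCunion} and~\ref{item:admBCunion} by applying this player by player. The paper's proof is terser---it states the single-level versions $(1')$ and $(2')$ and appeals directly to Lemma~\ref{lem:admvaleq}---but the logical content is the same.
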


\begin{proof}
Since $\outcome(\stratset^{n-1}_i) \subseteq \outcome(\stratset^n_i)$, it is sufficient to prove the following.
\begin{enumerate}[label=(\arabic*')]
\item \label{item:admB}
  If $\rho \in \outcome(\stratset^n_i)$, then $\rho \notin B^n_i$.
\item \label{item:admC}
  If $\rho \in \outcome(\stratset^n_{-i})$, then $\rho \notin C^n_i$.
\end{enumerate}
  Implication~\ref{item:admB} holds because of Lemma~\ref{lem:admvaleq}.
  Implication~\ref{item:admC} is obtained by applying the same lemma to players $j \ne i$.
  Implication~\ref{item:admBCunion} is a consequence of~\ref{item:admBunion} and~\ref{item:admCunion}.
\qedhere
\end{proof}
\end{collect}
Objective $\Omega^n_i$ can 
be further decomposed with respect to~$C^n_i$ and $T^n$.
\begin{proposition}\label{prop:theta}
  Player~$i$, has a winning strategy for $\Omega^n_i(s)$ in $\G$ if, and only, if he has one for:
\[\Theta^n_i(s) = C^{\leq n}_i \cup \left( \nBC{\leq n} \cap \Omega^n_i(s)\right).\]
\end{proposition}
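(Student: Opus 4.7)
The plan is to prove the two implications separately, leveraging Proposition~\ref{prop:omega} to replace ``winning for $\Omega^n_i(s)$'' by ``being a strategy of $\stratset^n_i$ that wins against every profile of $\stratset^n_{-i}$'', and using the shifting framework of Lemmas~\ref{lem:shifting} and~\ref{lem:circ} to glue strategies while preserving admissibility. The direction $(\Rightarrow)$ is routine, while the converse requires constructing a new strategy whose outcomes remain in $\outcome_s(\stratset^n_i)$ even when opponents play dominated transitions.

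For the forward direction, I would take $\sigma_i$ winning $\Omega^n_i(s)$ and show that $\sigma_i$ itself already wins $\Theta^n_i(s)$. By Proposition~\ref{prop:omega}, $\sigma_i \in \stratset^n_i$, so by Lemma~\ref{lem:admBC} every outcome of $\sigma_i$ avoids $B^{\leq n}_i$. Then, for any opponent profile $\sigma_{-i}$ and induced outcome $\rho=\outcome_s(\sigma_i,\sigma_{-i})$, I would dichotomize on whether some opponent move along $\rho$ lies in $T^{\leq n}$: if so, $\rho \in C^{\leq n}_i \subseteq \Theta^n_i(s)$; otherwise $\rho \in \nBC{\leq n}$, and since $\sigma_i$ wins $\Omega^n_i(s)$ we get $\rho \in \nBC{\leq n} \cap \Omega^n_i(s) \subseteq \Theta^n_i(s)$.

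For the converse, I would fix an admissible fall-back $\sigmaa_i \in \stratset^n_i$ (which exists since iterated admissibility is non-empty) and define $\sigma'_i$ to follow $\sigma_i$ as long as no opponent has played a transition of $T^{\leq n}$, and to switch to $\sigmaa_i$ from the moment one does. If no opponent deviation occurs along the induced outcome $\rho$, then $\rho$ is also an outcome of $\sigma_i$, so $\rho \in \Theta^n_i(s) \setminus C^{\leq n}_i = \nBC{\leq n} \cap \Omega^n_i(s) \subseteq \Omega^n_i(s)$. Otherwise $\rho \in C^{\leq n}_i$, and Lemma~\ref{lem:admBC} yields $\rho \notin \outcome_s(\stratset^n_{-i}) \supseteq \outcome_s(\stratset^n)$, so the implication inside $\Omega^n_i(s)$ is vacuously satisfied.

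The main obstacle is then to establish the residual membership $\rho \in \outcome_s(\stratset^n_i)$ in that deviation subcase. I would extend the pre-deviation prefix $\rho_{\leq k}$ by continuing $\sigma_i$'s play against a cooperating, non-deviating opponent: this produces an outcome in $\nBC{\leq n} \cap \Omega^n_i(s) \subseteq \outcome_s(\stratset^n)$, yielding a witness profile $(\tau_i,\tau_{-i}) \in \stratset^n$ whose play shares $\rho_{\leq k}$ as a prefix, and whose player-$i$ moves on that prefix coincide with those of $\rho$. Applying Lemma~\ref{lem:circ} provides $\sigmaa_i \circ \rho_{\leq k} \in \stratset^n_i$, and Lemma~\ref{lem:shifting} then allows the splicing $\tau^*_i = \tau_i \switch{\rho_{\leq k}}{\sigmaa_i \circ \rho_{\leq k}} \in \stratset^n_i$; paired with an opponent profile that replays $\sigma_{-i}$'s moves along $\rho$, the resulting outcome is exactly $\rho$, establishing $\rho \in \outcome_s(\stratset^n_i)$. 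The delicate bookkeeping lies in checking that the composition of the $\switch$ and $\circ$ operators outputs the correct player-$i$ move at each position of $\rho_{\geq k}$, but this is precisely the sort of gluing that Lemmas~\ref{lem:shifting} and~\ref{lem:circ} are tailored to handle.
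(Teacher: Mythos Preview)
Your forward direction and the overall architecture of the converse (follow $\sigma_i$ until an opponent takes a transition in $T^{\leq n}$, then switch to a fixed $\sigmaa_i\in\stratset^n_i$) coincide with the paper's proof. The divergence is in how you justify that every outcome of the modified strategy $\sigma'_i$ lies in $\outcome_s(\stratset^n_i)$.

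There is a genuine gap in your argument for the deviation subcase. The inclusion you assert,
\[
\nBC{\leq n}\cap\Omega^n_i(s)\ \subseteq\ \outcome_s(\stratset^n),
\]
is false: a run can avoid all value-decreasing transitions and still fail to be an outcome of $\stratset^n_{-i}$ (for instance because some $j\neq i$ has value sequence $0^\omega$ yet neither wins nor visits $H^{n-1}_j$ infinitely often). What you do get from $\Omega^n_i(s)$ is the weaker inclusion into $\outcome_s(\stratset^n_i)$, which only yields $\tau_i\in\stratset^n_i$ with an \emph{arbitrary} companion $\tau_{-i}$. That is not enough to invoke Lemma~\ref{lem:circ}, whose hypothesis requires the prefix to be compatible with a full profile in $\stratset^{n-1}$; consequently the step ``$\sigmaa_i\circ\rho_{\leq k}\in\stratset^n_i$'' is unjustified. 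Your splicing argument can be salvaged by dropping Lemma~\ref{lem:circ} altogether and using Lemma~\ref{lem:shifting} alone on $\tau_i\switch{\rho_{\leq k+1}}{\sigmaa_i}$, but as written the proposal does not go through.

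The paper sidesteps this entire detour. Instead of manufacturing a witness in $\stratset^n_i$ for each individual outcome, it proves once and for all that $\sigma'_i$ itself lies in $\stratset^n_i$: a short contradiction argument (pit $\sigma_i$ against a positional opponent that never takes a $T^{\leq n}$-transition) shows that $\sigma_i$ never decreases player~$i$'s value before an opponent deviates, and then Lemma~\ref{lem:valeqadm} (iterated along the prefix) gives $\sigma'_i\in\stratset^n_i$ directly. This is both shorter and avoids the shifting/$\circ$ bookkeeping you flag as ``delicate''.
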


\begin{collect}{appendix-prefix-independent}{\subsubsection{Proof of Proposition~\ref{prop:theta}}}{}
\begin{proof}
  Objective $\Theta^n_i$ can be rewritten
\[\Theta^n_i(s) = C^{\leq n}_i \cup \left((V^\omega \setminus B^{\leq n}_i) \cap (V^\omega \setminus C^{\leq n}_i) \cap \Omega^n_i(s)\right).\]
Let $\sigma_i$ be a $\Omega^n_i(s)$-winning strategy.
Let $\rho \in \outcome_s(\sigma_i)$.
Note that since $\sigma_i$ is $\Omega^n_i(s)$-winning, player $i$ never decreases his value in $\rho$, therefore $\rho \in (V^\omega \setminus B^{\leq n}_i)$.
Therefore
\begin{itemize}
\item either $\rho \in C^{\leq n}_i \subseteq \Theta^n_i$,
\item or $\rho \in V^\omega \setminus C^{\leq n}_i$, then $\rho \in (V^\omega \setminus B^{\leq n}_i) \cap (V^\omega \setminus C^{\leq n}_i) \cap \Omega^n_i(s) \subseteq \Theta^n_i(s)$.
\end{itemize}
Therefore a $\sigma_i$ is also winning for $\Theta^n_i$.

\medskip

Reciprocally, if $\sigma_i$ is winning for $\Theta^n_i(s)$.
We build a $\Omega^n_i(s)$-winning strategy $\sigma_i'$ as follows.
Let $\sigmaa_i \in \stratset^n_i$ be an admissible strategy for player $i$.
We set $\sigma_i'$ that plays like $\sigma_i$ until a player $j \neq i$ decreases his value, then plays like $\sigmaa_i$.\MS{C'est un shifting g\'en\'eralis\'e...}

Note that $\sigma_i$ never decreases the value for $i$ before shifting to $\sigmaa_i$.
Indeed, assume against a profile $\sigma_{-i}$, \player{i} decreases his value after a history $h$ (where no other player has decreased his own value).
Let $\sigma'_{-i}$ be a (positional) profile that never decrease a value for $j \neq i$ at any iteration and  $\rho = \outcome_s(\sigma_i,\sigma_{-i}\switch{h}{\sigma_i'})$.
Then $\rho \notin C^{\leq n}_i$ while $\rho \in B^n_i \subseteq B^{\leq n}_i$, so $\rho \notin \Theta^n_i$, which is a contradiction with the fact that $\sigma_i$ is $\Theta^n_i$-winning.

As a result, applying Lemma~\ref{lem:valeqadm} several (but a finite number of) times yields that $\sigma_i'$ is admissible.
So every run produced by $\sigma_i'$ starting from $s$ is by definition in $\outcome_s(\stratset^n_i)$.
Also remark that $\sigma_i'$ is a $\Theta^n_i(s)$-winning strategy.

Let $\sigma_{-i}$ be a profile and let $\rho = \outcome_s(\sigma_{i}',\sigma_{-i})$.
If for some $j \neq i$, $\rho \notin \outcome_s(\stratset^n_j)$, then $\rho \in \Omega^n_i(s)$ because $\rho \notin \outcome_s(\stratset^n_{-i})$.
Otherwise, by Lemma~\ref{lem:admBC}, $\rho \notin C^{\leq n}_i$ thus $\rho \in \Omega^n_i(s)$.
\qedhere
\end{proof}
%
%
\end{collect}

\subsubsection{Using the automata}\label{sec:theta1}
We can now use the acceptance conditions of automata $(\A^m_j)_{m\leq n, j\in\Agt}$ to rewrite $\Theta_i^n$.
If we expand the definition of $\Omega^n_i$ in $\Theta^n_i$, we obtain that $\Theta^n_i(s)$ equals:
\begin{mathpar}C^{\leq n}_i \cup \left( \nBC{\leq n} \cap \outcome_s(\stratset^{n}_i) \cap \left(\outcome_s(\stratset^n) \Rightarrow \win{i}\right)\right).\end{mathpar}
The set $C^{\leq n}_i$ and $\win{i}$ are easily definable by an automaton.
An automata for $\outcome(\stratset^n)$ is constructed through an intersection of automata $(\A^m_j)_{m\leq n, j\in\Agt}$, as shown in Lemma~\ref{lem:auto-stratsetn}.
Hence, we now have to construct automata recognizing  and $\outcome(\stratset^n_i)\cap \nBC{\leq n}$.

\medskip

\subparagraph{Computing $\outcome(\stratset^n_i) \cap \nBC{\leq n}$}\label{sec:theta2}

This construction
is also based on the automaton $\A_i^n$.
For this, we show that it can be defined through a combination of $\outcome(\stratset^n_i,\stratset^{n-1}_{-i})$ and of outcomes of admissible strategies at the previous iteration.
%
Namely:
\begin{collect}{appendix-prefix-independent}{}{}
\begin{lemma}\label{lem:admNotBC}
  \begin{eqnarray*}
  \outcome(\stratset^n_i) \cap \nBC{\leq n}  &=& \outcome(\stratset^{n-1}_i) \cap \nBC{\leq n}
  \\&& \hspace{1em}
  \cap \left(\outcome(\stratset^{n-1}) \Rightarrow \outcome(\stratset^n_i,\stratset^{n-1}_{-i})\right).
  \end{eqnarray*}
\end{lemma}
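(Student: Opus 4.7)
The plan is to prove both inclusions, treating $\subseteq$ as routine and focusing the effort on $\supseteq$. I will exploit the algebraic identity $\outcome(\stratset^n_i,\stratset^{n-1}_{-i}) = \outcome(\stratset^n_i) \cap \outcome(\stratset^{n-1}_{-i})$ together with monotonicity $\stratset^n_j \subseteq \stratset^{n-1}_j$; a short calculation shows that the claim reduces to establishing, in addition to the easy inclusion $\subseteq$, that any $\rho \in \outcome(\stratset^{n-1}_i) \cap \nBC{\leq n}$ lying outside $\outcome(\stratset^{n-1})$ still lies in $\outcome(\stratset^n_i)$.

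For $\subseteq$, I would take $\rho \in \outcome(\stratset^n_i) \cap \nBC{\leq n}$: monotonicity immediately gives $\rho \in \outcome(\stratset^{n-1}_i) \cap \nBC{\leq n}$. For the implication, assume further that $\rho \in \outcome(\stratset^{n-1})$, pick a witness $\sigma_i \in \stratset^n_i$ producing $\rho$, and a full profile $\tau_\Agt \in \stratset^{n-1}$ producing $\rho$. Then $(\sigma_i, \tau_{-i})$ still produces $\rho$, because both $\sigma_i$ and $\tau_i$ agree with $\rho$ on every prefix of $\rho$, hence $\rho \in \outcome(\stratset^n_i, \stratset^{n-1}_{-i})$.

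For $\supseteq$, I would let $\rho$ lie in the right-hand side and split on whether $\rho \in \outcome(\stratset^{n-1})$. If it does, the implication directly yields $\rho \in \outcome(\stratset^n_i,\stratset^{n-1}_{-i}) \subseteq \outcome(\stratset^n_i)$. Otherwise I must upgrade a strategy $\sigma_i \in \stratset^{n-1}_i$ producing $\rho$ into a strategy $\sigma_i^* \in \stratset^n_i$ still producing $\rho$. Following the template of the proofs of Lemmas~\ref{lem:valun}, \ref{lem:valmoinsun}, and~\ref{lem:dj}, I would fix any $\sigma_i^a \in \stratset^n_i$ (nonempty by Berwanger's result) and set $\sigma_i^*$ to agree with $\rho$ on its prefixes and to mimic $\sigma_i^a$ on every history that is not a prefix of $\rho$.

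The main obstacle will be proving that $\sigma_i^*$ is not weakly dominated with respect to $\stratset^{n-1}$. For a would-be dominator $\tau_i \in \stratset^{n-1}_i$ witnessed by a starting state $s^*$ and a profile $\tau_{-i} \in \stratset^{n-1}_{-i}$, the case $s^* \neq \rho_0$ collapses at once since $\sigma_i^*$ coincides with $\sigma_i^a$ from $s^*$ onward, contradicting the admissibility of $\sigma_i^a$. In the case $s^* = \rho_0$, I would perform a shifting surgery in the spirit of Lemma~\ref{lem:shifting} on the opposing profile, exploiting that $\rho \in \nBC{\leq n}$ together with Lemma~\ref{lem:value} forces $\val^{n-1}_i$ to be preserved at every $i$-controlled transition along $\rho$. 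The case analysis on $\hval{n-1}{i}(\rho) \in 0^*1^\omega + 0^\omega + 0^*(-1)^\omega$ will mirror Lemma~\ref{lem:disjonctionval} and reuse the three arguments of Lemmas~\ref{lem:valun}, \ref{lem:valmoinsun}, and~\ref{lem:dj} at the first divergence between the winning outcome of $\tau_i$ and $\outcome_{s^*}(\sigma_i^*, \tau_{-i})$. Replacing the strong ambient hypothesis $\rho \in \outcome(\stratset^n)$ used in those lemmas by the weaker $\rho \in \outcome(\stratset^{n-1}_i) \cap \nBC{\leq n}$ is the main technical difficulty of the proof.
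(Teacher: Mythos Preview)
Your proposal is essentially the paper's proof. The paper handles $\subseteq$ by monotonicity and the identity $\outcome(\stratset^n_i)\cap\outcome(\stratset^{n-1})=\outcome(\stratset^n_i,\stratset^{n-1}_{-i})$; for $\supseteq$ it splits on $\rho\in\outcome(\stratset^{n-1})$ and in the hard case builds exactly your $\sigma_i^*$ (follow $\rho$, revert to a fixed $\sigma_i^a\in\stratset^n_i$ on deviation).

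The only organizational difference is in the non-dominance argument. The paper does not case on $\hval{n-1}{i}(\rho)$; instead it introduces two divergence indices, $k$ where $\rho''=\outcome(\sigma_i^*,\sigma'_{-i})$ leaves $\rho$ and $k'$ where $\rho''$ leaves $\rho'$, cases on $k'\ge k$ versus $k'<k$, and in the latter case on $\val^{n-1}_i$ at the divergence point. One point your plan should make explicit for the $0^\omega$ case: you cannot literally reuse Lemma~\ref{lem:dj}, because neither $\rho\in\win_i$ nor $\inff{\rho}\cap H^{n-1}_i\ne\emptyset$ is guaranteed here. What replaces the ``Help!''-state argument is precisely that $\rho\notin\outcome(\stratset^{n-1})$: since $\sigma'_{-i}\in\stratset^{n-1}_{-i}$ and the part of $\sigma_i^*$ producing $\rho''$ can be realized by a shift of strategies in $\stratset^{n-1}_i$, one gets $\rho''\in\outcome(\stratset^{n-1})$, hence $\rho''\ne\rho$, and the deviation happens at a state controlled by some $j\ne i$. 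Using $\rho\notin C^n_i$ at (or just before) that state yields a successor of value $\ge 0$ for $i$ with the same $\val^{n-1}_j$, from which a profile making $\sigma_i^*$ win (and $\sigma'_i$ lose) can be assembled by shifting. This is the missing ingredient your plan alludes to as ``the main technical difficulty''; once you add it, your argument coincides with the paper's.
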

\begin{proof}
  \framebox{$\subseteq$}
  The inclusion holds because $\stratset^n_i  \subseteq \stratset^{n-1}_i$ and \[\outcome(\stratset^n_i) \cap \outcome(\stratset^{n-1}) = \outcome(\stratset^n_i,\stratset^{n-1}_{-i}).\]
    
  \framebox{$\supseteq$}
  In the other direction, let $\rho$ be such that \[\rho\in \nBC{\le n} \cap \outcome(\stratset^{n-1}_i) \cap \left(\outcome(\stratset^{n-1}) \Rightarrow \outcome(\stratset^n_i,\stratset^{n-1}_{-i})\right).\]
  If $\rho$ is an outcome of $\stratset^{n-1}$ then it is an outcome of $(\stratset^n_i,\stratset^{n-1}_{-i})$ and therefore in $\outcome(\stratset^{n}_i)$.

  Otherwise, $\rho$ is in $\outcome(\stratset^{n-1}_i) \setminus \outcome(\stratset^{n-1})$.
  Let $\sigmaa_i$ be a strategy of $\stratset^n_i$, we define $\sigma_i$ that follows $\rho$ and revert to $\sigmaa_i$ if the history deviates from $\rho$.
  Formally, for a history $h$, such that $\last(h)\in V_i$:
  \begin{itemize}
  \item if there is $k$ such that $h = \rho_{\le k}$, then $\sigma_i(h) = \rho_{k+1}$; 
    this is to ensure that $\rho$ is an outcome of $\sigma_i$;
  \item otherwise let $k$ be the last index such that $\rho_{\le k}$ is a prefix of $h$, then $\sigma_i(h) = \sigmaa_i(h_{>k})$.
  \end{itemize}
  Note that $\outcome(\sigma_i) \cap B^n_i = \emptyset$.
  Indeed, the only outcome of $\sigma_i$ which might not be in $\outcome(\stratset^n_i)$ is $\rho$ which does not belong to $B^n_i$ by assumption.

  We show that $\sigma_i$ is in $\stratset^n_i$.
  Assume there is a strategy profile $\profile' \in \stratset^{n-1}$ whose outcome $\rho'$ is winning for \player{i} but such that $\rho''=\outcome(\sigma_i,\sigma'_{-i})$ is not.
  The path $\rho''$ is different from $\rho$ since it belongs to $\outcome(\stratset^{n-1})$.

  However, as noted before, $\rho'' \notin B^n_i$.
  Let $k$ (resp. $k'$) be the first index where $\rho''$ differs from $\rho$ (resp. $\rho'$).
  We distinguish two cases, once again as in \figurename~\ref{fig:diverge}.
  If $k' \ge k$, $\sigma_i$ is playing according to a non-dominated strategy from~$\rho_{\le k+1}$, hence it is not strictly dominated by $\sigma'_i$. 
  
  Otherwise, $k' < k$.
  Let $s=\rho''_{k'}$, it is controlled by \player{i}.
  Let $s'= \sigma'_i(\rho_{\le k'})$, and $s'' = \sigma_i(\rho_{\le k'})$.
  Note that since $\rho \notin B^n_i$, $\val^{n-1}_i(s)=\val^{n-1}_i(s'')$.
  Additionally, if $\val^{n-1}_i(s)>\val^{n-1}_i(s')$, then a profile that fares better for $\sigma_i$ than for $\sigma_i'$ can easily be found, so $\sigma_i' \not\domstr{\stratset^{n-1}} \sigma_i$.
  Therefore we assume $\val^{n-1}_i(s)=\val^{n-1}_i(s')$ and distinguish cases according to the three possible values.
  \begin{itemize}
  \item If $\val^{n-1}_i(s)=-1$, then it is not possible for $\profile'$ to be winning, hence a contradiction.
  \item If $\val^{n-1}_i(s)=1$.
    First assume that $\val^{n-1}_i(\rho''_{k+1}) = 1$. 
    Then, since $\sigma_i$ plays according to a strategy of $\stratset^n_i$ after $\rho''_{\ge k+1}$, $\rho''$ is a winning path.
    This contradicts our hypothesis.
    
    Otherwise, consider the first state of $\rho''_{\ge k'}$ that does not have value $1$:
    the smallest $k''>k'$ such $\val^{n-1}_i(\rho''_{k''}) <1$.
    As a result, and since $\rho'' \notin B^n_i$, $\rho''_{k''} = \rho_{k''} \in V_j$ for some $j \neq i$.
    Since $\rho\not\in C^n_i$, $\val^{n-1}_j(\rho''_{k''}) = \val^{n-1}_j(\rho''_{k''-1})$.
    By Lemma~\ref{lem:valeqadm}, from the strategy of $\stratset^{n-1}_{-i}$ from $\rho''_{k''}$ that makes \player{i} lose, we can construct another one that makes \player{i} lose from $\rho''_{k''-1}$.
    This contradicts the fact that $\val^{n-1}_i(\rho''_{k''-1}) = 1$.
  \item If $\val^{n-1}_i(s)=0$.
  Then one can find a profile that makes $\sigma_i'$ lose.
  First assume that $\val^{n-1}_i(\rho''_{k+1})\ge 0$.
  Then a profile can be found that makes $\sigma_i$ win, since $\sigma_i$ plays according to an admissible strategy from that point.
  
  Otherwise, consider the step in which a player $j \neq i$ decreases $\val^{n-1}_i$: the smallest $k''>k'$ such  $\val^{n-1}_i(\rho_{k''+1})=-1$.
  We have that $\rho_{k''} \in V_j$ for some $j \neq i$.
  Since $\val^{n-1}_i(\rho_{k''})=0$, there exists a successor state of $\rho_{k''}$ of the same value for \player{j} that is of value $\ge 0$ for \player{i}: 
  a state $q \neq \rho_{k''+1}$ such that $\rho_{k''} \rightarrow q$, $\val^{n-1}_i(q) \ge 0$, and $\val^{n-1}_j(\rho_{k''})=\val^{n-1}_j(q)$.
  Then from $q$ there is a profile $\sigma_{-i}'' \in \stratset^{n-1}_{-i}$ that makes $\sigma_i$ win, since $\sigma_i$ plays according to an admissible strategy from that point.
  Notice that the profile that plays like $\sigma_{-i}'$ up to $\rho_{k''}$ then goes to $q$ and shifts to $\sigma_{-i}''$ is still a profile of $\stratset^{n-1}_{-i}$.
  \end{itemize}
  Hence $\sigma_i$ is admissible and $\rho \in \outcome(\stratset^n_i)$, which concludes the proof.
\qedhere
\end{proof}
\end{collect}

\begin{lemma}\label{lem:admNotBCrec}
 $\outcome(\stratset^n_i) \cap \nBC{\leq n}$ is equal to:
 \\$
 \outcome(\stratset^{n-1}_i) \cap \nBC{\leq n-1}_i \cap \left(\outcome(\stratset^{n-1}) \Rightarrow \L(\A^n_i)\right)
 \cap \nBC{\leq n}.
 $
\end{lemma}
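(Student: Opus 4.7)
The proof plan is essentially to combine the two results we already have, namely Lemma~\ref{lem:admNotBC} and Lemma~\ref{lem:la}, and then recognise that the statement is the same equality written so as to make a recursion on $n$ manifest.

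First I would start from the equality given by Lemma~\ref{lem:admNotBC},
\[\outcome(\stratset^n_i) \cap \nBC{\leq n} = \outcome(\stratset^{n-1}_i) \cap \nBC{\leq n} \cap \bigl(\outcome(\stratset^{n-1}) \Rightarrow \outcome(\stratset^n_i,\stratset^{n-1}_{-i})\bigr),\]
and then substitute into the rightmost implication the characterisation from Lemma~\ref{lem:la}, namely $\outcome(\stratset^n_i,\stratset^{n-1}_{-i}) = \outcome(\stratset^{n-1}) \cap \L(\A_i^n)$. The implication $\outcome(\stratset^{n-1}) \Rightarrow \bigl(\outcome(\stratset^{n-1}) \cap \L(\A_i^n)\bigr)$ is, as a set-theoretic statement, equivalent to $\outcome(\stratset^{n-1}) \Rightarrow \L(\A_i^n)$, since on the complement of $\outcome(\stratset^{n-1})$ both sides of the implication trivially hold, while on $\outcome(\stratset^{n-1})$ the conjunct $\outcome(\stratset^{n-1})$ itself is automatic.

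Then I would observe that $\nBC{\leq n} = \nBC{\leq n-1} \cap \nBC{n} \subseteq \nBC{\leq n-1}$, so that intersecting the right-hand side with the redundant factor $\nBC{\leq n-1}$ does not change it. The only reason to make this redundancy explicit is presentation: the right-hand side then exhibits $\outcome(\stratset^{n-1}_i) \cap \nBC{\leq n-1}$ as a subterm, which has exactly the same shape as the left-hand side at the previous iteration. This is what permits an inductive (recursive) construction of an automaton for $\outcome(\stratset^n_i) \cap \nBC{\leq n}$: the remaining factors $\nBC{\leq n}$, $\outcome(\stratset^{n-1})$ and $\L(\A^n_i)$ are all already known to be accepted by automata derived from the automata $(\A^m_j)_{m\leq n,\ j\in\Agt}$ and from the $T^m_j$.

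No delicate step is involved: both inclusions follow by direct set-theoretic manipulation from the cited lemmata, so the ``hard work'' was already done in the proofs of Lemma~\ref{lem:admNotBC} and Lemma~\ref{lem:la}. The only thing to be careful about is the rewriting of the implication, which relies on the elementary fact that for any sets $X$ and $Y$, $X \Rightarrow (X \cap Y)$ and $X \Rightarrow Y$ denote the same subset of $V^\omega$; I would spell this out once to avoid any ambiguity.
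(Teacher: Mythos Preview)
Your proposal is correct and follows essentially the same route as the paper: start from Lemma~\ref{lem:admNotBC}, replace $\outcome(\stratset^n_i,\stratset^{n-1}_{-i})$ by $\outcome(\stratset^{n-1})\cap\L(\A^n_i)$ via Lemma~\ref{lem:la}, simplify the implication $X\Rightarrow(X\cap Y)$ to $X\Rightarrow Y$, and finally observe that $\nBC{\leq n}\subseteq\nBC{\leq n-1}$ so the extra factor $\nBC{\leq n-1}$ is redundant but makes the recursive shape explicit. The paper spells out the last point by decomposing $\nBC{\leq n}$ through the identities $B_i^{\leq n}=B_i^n\cup B_i^{\leq n-1}$ and $C_i^{\leq n}=C_i^n\cup C_i^{\leq n-1}$, but your inclusion argument is equivalent and slightly more direct.
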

\begin{collect}{appendix-prefix-independent}{\subsubsection{Proof of Lemma~\ref{lem:admNotBCrec}}}{}
\begin{proof}
Remark that $B^{\leq n}_i = B^n_i \cup B^{\leq n-1}_i$ and $C^{\leq n}_i = C^n_i \cup C^{\leq n-1}_i$, so
\begin{eqnarray}
\nBC{n} &=& V^\omega \setminus (B^{\leq n}_i \cup C^{\leq n}_i) \notag\\
         &=& V^\omega \setminus (B^n_i \cup B^{\leq n-1}_i \cup C^n_i \cup C^{\leq n-1}_i) \notag\\
         &=& \left(V^\omega \setminus (B^{\leq n-1}_i \cup C^{\leq n-1}_i)\right) \cap \left(V^\omega \setminus B^n_i\right) \cap \left(V^\omega \setminus C^n_i\right) \notag\\
\nBC{n} &=& \nBC{n-1}_i \cap \left(V^\omega \setminus B^n_i\right) \cap \left(V^\omega \setminus C^n_i\right).\label{eq:nBCrec}
\end{eqnarray}
In addition, $\outcome(\stratset^n_i,\stratset^{n-1}_{-i}) = \outcome(\stratset^{n-1}) \cap \L(\A^n_i)$, so
\begin{eqnarray}
\outcome(\stratset^{n-1}) &\Rightarrow& \outcome(\stratset^n_i,\stratset^{n-1}_{-i}) = \outcome(\stratset^{n-1}) \Rightarrow \left(\outcome(\stratset^{n-1}) \cap \L(\A^n_i)\right) \notag\\
\outcome(\stratset^{n-1}) &\Rightarrow& \outcome(\stratset^n_i,\stratset^{n-1}_{-i}) = \outcome(\stratset^{n-1}) \Rightarrow \L(\A^n_i).\label{eq:AniRec}
\end{eqnarray}
Rewriting Lemma~\ref{lem:admNotBC} with Equations~\eqref{eq:nBCrec} and~\eqref{eq:AniRec} yields the result.
\qedhere
\end{proof}
\end{collect}

The previous lemma provides a recurrence relation to compute the intersection $\outcome(\stratset^n_i) \cap \nBC{n}$, with base case being all the runs of $\G$.

\subsection{Bounding the number of iteration phases}

We can show that $\val_i^n(s) = 1 \Rightarrow \val_i^{n+1}(s) = 1$ as winning strategies are admissible and $\stratset^{n+1}_{-i} \subseteq \stratset^n_{-i}$ implies that winning strategies remain winning.
Similarly, we can show $\val_i^n(s) = -1 \Rightarrow \val_i^{n+1}(s) = -1$. 
So the number of times that the value function changes is bounded by $|P|\cdot |V|$.
This allows us to bound the number of iterations necessary to reach the fix-point:
\begin{proposition}\label{prop:bounditer}
\(\stratset^* = \stratset^{|\Agt|\cdot|V|}\).
\end{proposition}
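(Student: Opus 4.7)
The plan is to derive the fix-point bound from two ingredients: the monotonicity of the value function (sketched informally in the paragraph preceding the statement) and the characterization of admissible strategies via the automata $\A^n_i$ of Section~\ref{sec:automataDef}.

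First I would make precise the two monotonicity claims. If $\val^n_i(s) = 1$, there is a strategy $\sigma_i \in \stratset^n_i$ winning from $s$ against every profile of $\stratset^n_{-i}$; since a winning strategy always has an admissible counterpart that is still winning, we obtain some $\sigma_i' \in \stratset^{n+1}_i$ winning from $s$ against $\stratset^n_{-i}$, which remains winning against the smaller set $\stratset^{n+1}_{-i} \subseteq \stratset^n_{-i}$, so $\val^{n+1}_i(s) = 1$. If $\val^n_i(s) = -1$, no profile of $\stratset^n$ from $s$ is winning for \player{i}, so a fortiori no profile of the smaller set $\stratset^{n+1} \subseteq \stratset^n$ is, giving $\val^{n+1}_i(s) = -1$. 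Thus for every pair $(i,s) \in \Agt \times V$ the sequence $(\val^n_i(s))_{n\ge 0}$ can only move out of $0$ at most once and is thereafter constant.

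Second, I would count the value changes globally. Summing over the $|\Agt|\cdot|V|$ pairs $(i,s)$, the total number of indices $n \ge 1$ at which $\val^n$ differs from $\val^{n-1}$ is bounded by $|\Agt|\cdot|V|$, so there exists $N \le |\Agt|\cdot|V|$ with $\val^m = \val^N$ for all $m \ge N$. Because $H^n_i$ is built from $\val^n_i$ and $\val^{n-1}_j$, the ``Help!''-sets also stabilize beyond $N$.

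Finally, I would transfer value stability to strategy-set stability through the automata construction. Once $\val^{n-1} = \val^n$ and $H^{n-1}_i = H^n_i$, the automaton $\A^{n+1}_i$ coincides structurally with $\A^n_i$ (same transition set, same acceptance condition), so Lemma~\ref{lem:auto-stratsetn} yields $\outcome(\stratset^{n+1}) = \outcome(\stratset^n)$; combined with the characterizations of admissibility encoded in the automata (Lemmata~\ref{lem:disjonctionval}, \ref{lem:valun}, \ref{lem:valmoinsun}, \ref{lem:dj}), this forces $\stratset^{n+1} = \stratset^n$ and hence $\stratset^* = \stratset^N = \stratset^{|\Agt|\cdot|V|}$. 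The main obstacle is precisely this last step: one must argue carefully that outcome-set equality entails strategy-set equality, which boils down to the fact that the admissibility status of a surviving strategy is witnessed entirely by the outcomes it produces against profiles of the current iteration, and these outcomes are exactly what the automata $\A^n_i$ describe.
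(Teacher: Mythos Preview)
Your proposal is correct and follows essentially the same line as the paper's own proof: both argue that values can move away from $0$ at most once per pair $(i,s)$, giving at most $|\Agt|\cdot|V|$ changes, and that once the value function (hence the ``Help!''-sets) stabilizes, no further strategy elimination occurs. The paper's proof is terser---it simply asserts that ``admissible strategies depend only on the values (even through the Help!-states)''---whereas you route that assertion through the automata $\A^n_i$ and explicitly flag the passage from outcome-set stabilization to strategy-set stabilization as the delicate step; this is precisely the point the paper leaves implicit.
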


\begin{collect}{appendix-prefix-independent}{\subsubsection{Proof of Proposition~\ref{prop:bounditer}}}{}
\begin{proof}
Since $\stratset^{n+1} \subset \stratset^n$, a state that has value $\neq0$ at a given iteration cannot change its value at a subsequent iteration.
Hence there are at most $|\Agt|\cdot|V|$ changes of value during the iteration process.
Since admissible strategies depend only on the values (even through the ``Help!''-states), the strategy elimination stops when the value stops changing.
\end{proof}
\end{collect}

\subsection{Algorithm for Muller conditions}\label{sec:algo}

The above construction yields procedures to compute the values, and in turn to solve the winning coalition problem.
The algorithm works as follow, starting from $n=0$ and incrementing it at each loop:
\begin{enumerate}
\item compute an automaton representing $\outcome(\stratset^n)$ with the help of $\L(\A^n)$, as described in Section~\ref{sec:automataDef};
\item compute for each player~$i$ the objective $\Theta^n_i$ as explained in \ifShort Sections~\ref{sec:theta1} and~\ref{sec:theta2}\else Section~\ref{sec:theta1}\fi;
\item compute for each player~$i$ the objective $\Psi^n_i$ (see Section~\ref{sec:psi});
\item compute for each player~$i$ and state $s$, $\val^n_i(s)$: we will show how to do this in the remainder of this section;
\item compute for each player~$i$ the set $H^n_i$.
\end{enumerate}

\begin{proposition}
Checking whether $\val^n_i(s)=1$ is in \PSPACE.
\end{proposition}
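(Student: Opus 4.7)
The plan is to reduce the decision ``$\val^n_i(s)=1$'' to solving a two-player zero-sum game whose winning condition is expressible by a polynomial-size Boolean circuit over the arena of $\G$, and then invoke the $\PSPACE$ upper bound for such games~\cite{hunter07,dawar13}. By Propositions~\ref{prop:omega} and~\ref{prop:theta}, $\val^n_i(s)=1$ holds if and only if player $i$ has a winning strategy in the two-player zero-sum game obtained by merging his opponents into a single coalition, for the objective $\Theta^n_i(s) = C^{\leq n}_i \cup (\nBC{\leq n} \cap \Omega^n_i(s))$ where $\Omega^n_i(s) = \outcome_s(\stratset^{n}_i) \cap (\outcome_s(\stratset^n) \Rightarrow \win{i})$.

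The next step is to assemble a Boolean-circuit description of $\Theta^n_i(s)$. Using Lemma~\ref{lem:auto-stratsetn} and Lemma~\ref{lem:admNotBCrec}, the sets $\outcome_s(\stratset^n)$ and $\outcome_s(\stratset^n_i) \cap \nBC{\leq n}$ are obtained as intersections of languages of the automata $(\A^m_j)_{m \leq n,\, j \in \Agt}$. Crucially, each $\A^m_j$ shares the vertex set of $\G$ (only transitions in $T^{m-1}_j$ are removed) and its acceptance condition is a Boolean combination of the circuit for $\win{j}$ with atomic predicates of the form ``$\val^{m-1}_k(q)=v$'' and ``$q \in H^{m-1}_k$''. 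The reachability-on-transitions sets $C^{\leq n}_i$ and $B^{\leq n}_i$ are similarly defined via such value comparisons; they can be turned into standard Muller-circuit conditions by remembering in the state whether the relevant transition has already been fired, which only doubles the arena. Intersecting all these automata keeps the arena essentially unchanged and yields a single acceptance circuit whose gates, treating every value query as atomic, is polynomial in $|\G|$ and $|\Agt|$.

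To discharge the atomic value queries I proceed by induction on $n$. The base case $n=0$ reduces to a two-player Muller-circuit game on $\G$ with objective $\win{i}$, which is in $\PSPACE$~\cite{hunter07}. For the inductive step, by Proposition~\ref{prop:bounditer} we have $n \leq |\Agt|\cdot|V|$, and by the induction hypothesis every query ``$\val^m_k(q)=v$'' with $m<n$ is decidable in $\PSPACE$. Since $\PSPACE = \PSPACE^{\PSPACE}$, the resulting two-player game with its circuit condition can be solved within polynomial space, giving the claimed bound for $\val^n_i(s)=1$.

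The main obstacle is controlling the size of the acceptance circuit and the depth of the recursion: the circuit describing $\Theta^n_i(s)$ must remain polynomial, and the nested $\PSPACE$ subroutines for lower-iteration value queries must collapse into a single polynomial-space procedure. Both points are ensured by the fact that the automata $(\A^m_j)_{m \leq n,\, j \in \Agt}$ all live on the vertex set $V$, so intersection incurs no state blow-up, and by the bound $n \leq |\Agt|\cdot|V|$ on the recursion depth given by Proposition~\ref{prop:bounditer}.
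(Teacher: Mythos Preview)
Your proof is correct and follows essentially the same route as the paper: reduce to a two-player zero-sum game on (a subgraph of) $\G$ with \Eve controlling $V_i$ and objective $\Theta^n_i$, invoke Propositions~\ref{prop:omega} and~\ref{prop:theta} for correctness, and appeal to the $\PSPACE$ bound for Muller-circuit games~\cite{hunter07}. The paper's version is terser---it removes the edges of $T^{\le n}$ from the arena syntactically rather than encoding $B^{\leq n}_i$ and $C^{\leq n}_i$ via a one-bit memory, and it leaves the recursion on $n$ and the polynomial circuit-size bookkeeping implicit in the surrounding algorithm---but the substance is the same.
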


\begin{proof}
We define the following two-player game $\G^n_i$ as follows.
$\G^n_i$ has the structure of $\G$, except that the edges corresponding to a player decreasing his own value at any previous iteration have been removed (thus avoiding $T^{\le n}$ syntactically).
The players are \eve and \adam, the states of \eve are $V_i$ and the states of \adam are $\bigcup_{j\neq i} V_j$.
The winning condition for \eve is $\Theta^n_i$, which is a Muller condition.

If \eve has a winning strategy starting from state~$s$, then $\val^n_i(s)=1$, as shown by Propositions~\ref{prop:omega} and~\ref{prop:theta}.
Recall that for Muller condition, deciding whether there exists a winning strategy for \eve is in \PSPACE~\cite{hunter07}.
\qedhere
\end{proof}

\begin{proposition}
Checking whether $\val^n_i(s)=-1$ is in $\coNP$.
\end{proposition}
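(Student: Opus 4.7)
The plan is to show that the complement problem---deciding $\val^n_i(s) > -1$---is in $\NP$. By the definition of the value, this holds if and only if $\Psi^n_i(s) = \outcome_s(\stratset^n) \cap \win{i}$ is nonempty, i.e., there exists a run $\rho$ from $s$ that is both the outcome of some profile in $\stratset^n$ and winning for \player{i}.

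As in the algorithm of Section~\ref{sec:algo}, I assume the values $(\val^m_j)_{m<n,\, j\in\Agt}$ have already been computed. By inductively unfolding Lemma~\ref{lem:auto-stratsetn}, the language $\outcome(\stratset^n)$ is recognized by an automaton whose transition graph is a subgraph of $\G$ (edges in $T_j^m$ for $m\le n$ are removed) and whose acceptance condition is a conjunction, over $m\le n$ and $j\in\Agt$, of the conditions of $\A_j^m$. By Proposition~\ref{prop:valuestate} and prefix-independence, each atomic ingredient of these conditions---$\hval{m-1}{j}(\rho)\in 0^*1^\omega$, $\hval{m-1}{j}(\rho)\in 0^*(-1)^\omega$, $\hval{m-1}{j}(\rho)\in 0^\omega$, $\rho\in(V^*H^{m-1}_j)^\omega$, and $\rho\in\win{j}$---is a predicate on $\inff{\rho}$ alone and can be encoded by a polynomial-size Boolean circuit given the pre-computed values. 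Intersecting with the Muller circuit condition $\win{i}$, $\Psi^n_i(s)$ is therefore the language of a polynomial-size automaton whose acceptance is a Muller condition given by a polynomial-size Boolean circuit.

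The $\NP$ certificate for nonemptiness is then a lasso path $\rho=h\cdot c^\omega$ starting from $s$, with $|h|+|c|\le 2|V|$. Verification consists in checking in polynomial time that (i)~$\rho$ is a valid path in $\G$ which uses no transition in $\bigcup_{m\le n, j\in\Agt} T_j^m$, (ii)~the Boolean circuit for the acceptance condition of $\outcome(\stratset^n)$ evaluates to true on $\occ(c)$, and (iii)~the Muller circuit for $\win{i}$ evaluates to true on $\occ(c)$. Each evaluation is polynomial in the size of $\G$ and the relevant circuits, so nonemptiness is in $\NP$ and $\val^n_i(s)=-1$ is in $\coNP$.

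The main technical hurdle I expect is justifying that all the seemingly prefix-sensitive predicates in the acceptance condition of $\outcome(\stratset^n)$---notably the patterns $0^*1^\omega$, $0^*(-1)^\omega$, $0^\omega$---collapse to predicates over $\inff{\rho}$. This rests on prefix-independence of the objectives (which, via Proposition~\ref{prop:valuestate}, makes $\val^{m-1}_j$ a function of the state), on the removal in $\A_j^m$ of all transitions where \player{j} strictly decreases his own value, and on Lemma~\ref{lem:disjonctionval}, which pins the value sequences of admissible outcomes into one of three rigid shapes. Once this reduction is in place, the $\NP$ upper bound follows by a routine lasso-guessing argument.
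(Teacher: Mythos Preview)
Your proof is correct and follows essentially the same approach as the paper: reduce to nonemptiness of $\Psi^n_i(s)$, observe that it is a Muller (circuit) condition of polynomial size on a subgraph of $\G$, and guess a polynomial-length lasso as an $\NP$ certificate. One small correction: the bound $|h|+|c|\le 2|V|$ is too tight for Muller conditions---the cycle $c$ must visit \emph{exactly} the states of the intended $\inff{\rho}$, and a closed walk covering all vertices of a strongly connected subgraph on $k$ vertices may require length $\Theta(k^2)$; replacing your bound by, say, $|V|^2$ fixes this without affecting the $\NP$ membership.
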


\begin{proof}
This is equivalent to the emptiness of $\Psi^n_i$ when starting from state~$s$.
Again, $\Psi^n_i$ is a circuit condition on $\G$ where edges decreasing the value of their owner are removed.
Testing non-emptiness amounts to guessing a lasso path starting from $s$, and checking that the states visited infinitely often correspond to a set of $\mathcal{F}$.
This $\NP$ algorithm answers true when $\val^n_i(s) \ne -1$.
We therefore have a $\coNP$ algorithm.
\qedhere
\end{proof}

\medskip

\begin{proof}[\proofname\ of Theorem~\ref{thm:circuitPspace}]
Let $W,L$ be the set of players that must win and lose, respectively, in the instance of the winning coalition problem.
The winning coalition problem is thus equivalent to the non-emptiness of \[\Phi = \outcome(\stratset^*) \cap \bigcap_{i \in W} \win{i} \cap \bigcap_{i \in L} \neg\win{i}\] over $\G$, which is a Muller condition.
This check is done in \NP.\MS{M\^eme remarque qu'au dessus.}

This however requires the computation of the condition corresponding to $\outcome(\stratset^*)$, hence of the values for all intermediate iterations before the fix-point.
Each iteration means solving a Muller game, 
 which is done in polynomial space.
Moreover, by Proposition~\ref{prop:bounditer}, there are at most $|\Agt| \cdot |V|$ iterations.
Thus the overall complexity is in \PSPACE.
\qedhere
\end{proof}

\subsection{B\"uchi objectives}\label{sec:buchi}
In this section, We assume that each winning condition $\win{i}$ is given by a B\"uchi set $\buchi_i$.
A careful analysis of condition $\Theta^n_i$ shows that it can be reformulated as a parity condition.
Hence computing the value of a state boils down to solving two-player parity games.
Parity games are known to be in $\UP \cap \coUP$, but the question whether a polynomial algorithm exists for parity games has been open for several years~\cite{emerson1991tree,Jur98}.
Our algorithm thus works in polynomial time with call to oracles in $\NP\cap\coNP$, hence is also in $\NP\cap\coNP$~\cite{brassard79}.
This idea is the basis for the proof of Theorem~\ref{thm:Buchi}, which is detailed in the remainder of the section.

\ifShort
\begin{theorem}\label{thm:Buchi}
The winning coalition problem with B\"uchi objectives is in~$\NP\cap\coNP$. 
Moreover, if there exists a polynomial algorithm for solving two-player parity games, then the winning coalition problem with B\"uchi objectives is in~$\P$.
\end{theorem}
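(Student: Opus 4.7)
The plan is to revisit the construction from Section~\ref{sec:prefix-independent} and observe that when each $\win{i}$ is a B\"uchi condition, all the auxiliary objectives ($\Theta^n_i$, $\Psi^n_i$, the acceptance condition of $\A^n_i$, and the global condition $\outcome(\stratset^*) \cap \bigcap_{i\in W}\win{i} \cap \bigcap_{i\in L}\neg\win{i}$) can be recast as parity conditions whose index (number of priorities) stays bounded by a small constant at every iteration. Once this is established, computing $\val^n_i(s)$ reduces to solving a two-player parity game of polynomial size, and Theorem~\ref{thm:Buchi} will follow from Jurdzi\'nski's $\UP \cap \coUP$ bound~\cite{Jur98} together with the standard closure of $\NP \cap \coNP$ under polynomial-time Turing reductions to oracles in $\NP \cap \coNP$~\cite{brassard79}.

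First, I would inspect each ``ingredient'' and argue it is at most a parity condition with at most three priorities. The condition $\rho \in \win{i}$ is B\"uchi on $\buchi_i$. The conditions $\hval{n-1}{i}(\rho) \in 0^\ast(-1)^\omega$, $\in 0^\ast 1^\omega$, and $\in 0^\omega$ are each safety/reachability conditions on the subset of states of a given value (values are already computed inductively and encoded into the state set without blow-up). The ``Help!''-set condition $\inff{\rho}\cap H^{n-1}_i \neq \emptyset$ is again B\"uchi, and $C^{\leq n}_i$ is a reachability condition on the finite set of forbidden transitions. Taking Boolean combinations of these, the acceptance condition of $\A^n_i$, and hence of the product automaton for $\outcome(\stratset^n)$, is a parity condition with a constant number of priorities (roughly, one for the safety part selecting which branch of Lemma~\ref{lem:disjonctionval} applies, one for the B\"uchi part). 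Key to this step is that we never need to encode ``which players have lost'' into the state, because the objectives are prefix-independent, so no exponential blow-up occurs.

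Next, for a fixed iteration $n$ and a fixed state $s$, I would express $\Theta^n_i(s)$ as a parity objective of the two-player game $\G^n_i$ defined in Section~\ref{sec:algo}, whose structure is $\G$ minus the transitions of $T^{\leq n}$ with all players $j \neq i$ grouped as \adam. By Propositions~\ref{prop:omega} and~\ref{prop:theta}, $\val^n_i(s)=1$ iff \eve wins this parity game; by the dual characterization (non-emptiness of $\Psi^n_i(s)$ as a one-player parity game), $\val^n_i(s)=-1$ can also be decided by a parity-game call. Since Proposition~\ref{prop:bounditer} bounds the number of iterations by $|\Agt|\cdot|V|$, and each iteration performs polynomially many parity-game queries to update values and the sets $H^n_i$, the whole value-computation loop is polynomial-time with a parity-game oracle. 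Deciding the winning coalition problem itself, given the final parity automaton for $\outcome(\stratset^\ast)\cap \bigcap_{i\in W}\win{i}\cap\bigcap_{i\in L}\neg\win{i}$, is a final non-emptiness check on a parity automaton, which is again a one-player parity game.

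To conclude, the whole procedure is a deterministic polynomial-time computation with an oracle for two-player parity games. Using Jurdzi\'nski's result that parity games are in $\UP \cap \coUP$, and the fact that $\P^{\UP\cap\coUP} \subseteq \NP \cap \coNP$~\cite{brassard79}, we obtain the $\NP\cap\coNP$ upper bound. Replacing the oracle by a hypothetical polynomial-time parity solver gives the conditional $\P$ bound. The main obstacle I expect is bookkeeping: verifying carefully that the priority index of $\A^n_i$ and of $\Theta^n_i$ does not grow with $n$, i.e.\ that combining the parity acceptance of $\outcome(\stratset^{n-1})$ (already obtained inductively) with the new B\"uchi/safety ingredients for step $n$ can be collapsed back to a constant-index parity condition rather than accumulating one extra priority per iteration. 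If necessary, one can avoid this by re-encoding the composite objective at each iteration through the values, which are in $\{-1,0,1\}$ and hence contribute only a constant amount of priority information.
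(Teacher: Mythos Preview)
Your overall plan matches the paper's: reduce each value computation to a two-player parity game, iterate polynomially many times (Proposition~\ref{prop:bounditer}), and conclude $\NP\cap\coNP$ via the $\UP\cap\coUP$ bound for parity games. The final emptiness check is handled the same way.

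There is, however, a genuine gap at exactly the point you flagged as ``bookkeeping''. Your claim that the priority index of $\Theta^n_i$ can be kept \emph{constant} is not what the paper proves, and your fallback (``re-encode through the values'') is not a method. In the paper, the prefix-independent part $\Gamma^n_i$ of $\Theta^n_i$ unfolds (via Lemma~\ref{lem:admNotBCrec}) into a \emph{conjunction of $n$ implications} between B\"uchi conditions:
\[
\Gamma^n_i \;=\; \bigcap_{m=0}^{n-1}\bigl(\outcome(\stratset^m)\Rightarrow \L(\A_i^{m+1})\cap\outcome(\stratset^m)\bigr)\;\cap\;\bigl(\outcome(\stratset^n)\Rightarrow\win{i}\bigr).
\]
A naive Boolean combination of $n$ such implications is a Streett condition with $n$ pairs, and Streett games are $\coNP$-complete, which would \emph{not} yield $\NP\cap\coNP$. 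The paper avoids this by exploiting the chain of inclusions
\[
\outcome(\stratset^m)\cap\L(\A_i^{m+1})\;\subseteq\;\outcome(\stratset^m)\;\subseteq\;\outcome(\stratset^{m-1})\cap\L(\A_i^m),
\]
which lets one define a single coloring $\chi^n_i$ with $2n+3$ colors (Lemma~\ref{lem:toparity}) so that $\Gamma^n_i$ is a genuine \emph{parity} condition. The number of priorities therefore grows linearly with $n$ (hence polynomially in the input), not constantly; but that is harmless because parity games with polynomially many priorities are still in $\UP\cap\coUP$. This chain-of-inclusions argument is the missing technical idea in your proposal, and without it your ``Boolean combinations'' step does not produce a parity condition.

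Two smaller points. First, turning the generalized B\"uchi condition for $\outcome(\stratset^m)$ into a single B\"uchi set requires a product with a round-robin counter of size $m\cdot|\Agt|$; the paper makes this explicit (arena blow-up $O(n^2|\Agt|)$), and you should too. Second, your claim that $\Psi^n_i$ is ``a one-player parity game'' is fine, but note the paper observes it is actually a conjunction of B\"uchi conditions, so emptiness is in~$\P$ outright; no oracle call is needed there.
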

\fi


\begin{definition}
  A \newdef{parity condition} is given by a coloring function $\chi: V \rightarrow \{0,\dots,M\}$ with $M\in \N$.
  Accepted runs with respect to the coloring functions are the ones where the maximal color visited infinitely often is even:
  \[\win{\chi} = \left\{\rho \mmid \max(\inff{\chi(\rho)}) \textrm{ is even}\right\}.\]
\end{definition}

\subsubsection{Expressing $\Theta_i$ as a parity condition}

First note that the acceptance condition of $\L(\A_i^n)$ can be expressed by the following B\"uchi set, assuming that we remove all the edges that decrease the value of \player{i}:
\begin{align*}
  K_i^n = & \{ s \mid \val^{n-1}_i(s) = -1 \} \cup \{ s \mid \val^{n-1}_i(s) = 1 \land s \in \buchi_i \} \\ 
  & \cup \{ s \mid \val^{n-1}_i(s) = 0 \land s \in (\buchi_i  \cup H_i^{n-1}) \}
\end{align*}
Note that $\buchi_i \subseteq K_i^n$ for every $n$.

$\outcome(\stratset^0)$ is accepted by the automaton with the same structure as $\G$ and where the B\"uchi set is $V$.
Since \(\outcome(\stratset^{n-1}) \cap \bigcap_{i\in\Agt} \L(\A_i^n) = \outcome(\stratset^n) \) by Lemma~\ref{lem:auto-stratsetn}, $\outcome(\stratset^n)$ is recognized by an automaton whose acceptance condition is a conjunction of $n\times |\Agt|$ B\"uchi conditions.
By taking a product of the game with an automaton of size $n \times |\Agt|$, a condition $\outcome(\stratset^{n})$ can be expressed as a single B\"uchi set that we write $D^n$.

Recall that condition \(\Theta^n_i\) is 
\[ C^{\leq n}_i \cup \left( \nBC{n-1} \cap \outcome(\stratset^{n}_i) \cap \left(\outcome(\stratset^n) \Rightarrow \win{i}\right)\right).\]
We isolate the prefix-independent part of $\Theta^n_i$ by defining
\[\Gamma^n_i=\outcome(\stratset^{n}_i) \cap \left(\outcome(\stratset^n) \Rightarrow \win{i}\right)\]
which, by unfolding the recurrence relation of Lemma~\ref{lem:admNotBCrec}, can be rewritten:
\begin{mathpar}
\Gamma^n_i = \bigcap_{m=0}^{n-1} \left(\outcome(\stratset^{m}) \Rightarrow \left(\L(\A_i^{m+1}) \cap \outcome(\stratset^{m})\right)\right) \cap \left(\outcome(\stratset^n) \Rightarrow \win{i}\right)
\end{mathpar}
Also remark that the ``prefix-dependent'' part of $\Theta^n_i$ can be expressed by either removing edges (to avoid sets $T_i^n$) or adding an additional winning state (for sets $C_i$).

Each set $\outcome(\stratset^{m})$ is expressed by the B\"uchi condition $D^m$.
Similarly, condition $\left(\L(\A_i^{m+1}) \cap \outcome(\stratset^{m})\right)$ is a conjunction of $m\times |\Agt|+1$ B\"uchi conditions.
Therefore, in order to consider all these different conditions on the same game, it can be assumed that $\G$ is synchronized with a product of size\footnote{Precisely $|\Agt|\cdot (n-1) \cdot (n-2) +n$.} $O(n^2 \cdot |\Agt|)$.
Then each condition $\left(\L(\A_i^{m+1}) \cap \outcome(\stratset^{m})\right)$ can be expressed as a single B\"uchi set that we write $E^{m+1}_i$.

Notice that we have a ``chain'' of inclusion of the various objectives.
For each index $m$ and \player{i}:
\begin{equation}
\outcome(\stratset^m)\cap\L(\A_i^{m+1}) \subseteq \outcome(\stratset^m) \subseteq \outcome(\stratset^{m-1})\cap\L(\A_i^m).\label{eq:chain}
\end{equation}
Hence, if the left operand of an implication \(\outcome(\stratset^{m}) \Rightarrow \left(\outcome(\stratset^m) \cap \L(\A_i^{m+1})\right)\) is satisfied, then all such implications are satisfied for indexes $k<m$.\MS{Donner un nom \`a cette implication?}
And if the right operand is satisfied, then it is the case for indexes $k \leq m$.
The condition $\Gamma^n_i$ can then be defined by the parity condition given by the following coloring function $\chi^n_i$:
\begin{itemize}
\item if $s\in F_i$ then $\chi^n_i(s) = 2n + 2$
\item otherwise $\chi^n_i(s)$ is the maximum value between $0$, \linebreak
$2\cdot \max\left\{m \mmid s \in E^m_i\right\}$,
and $2\cdot \max\left\{m \mmid s \in D^m\right\} + 1$,
with the convention that $\max\emptyset=-\infty$.
\end{itemize}


\begin{lemma}\label{lem:toparity}
  For all index $n$ and \player{i}, $\rho\in\Gamma^n_i$ if, and only if, $\rho$ satisfies the parity condition $\chi^n_i$.
\end{lemma}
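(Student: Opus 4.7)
The proof will translate the conjunction of Büchi implications defining $\Gamma^n_i$ into the parity condition induced by $\chi^n_i$. Writing $p = (F_i \text{ visited i.o.})$, $d_m = (D^m \text{ i.o.})$, and $e_m = (E^m_i \text{ i.o.})$, Lemmas~\ref{lem:la} and \ref{lem:auto-stratsetn} together with the product construction give $p \Leftrightarrow \rho \in \win{i}$, $d_m \Leftrightarrow \rho \in \outcome(\stratset^m)$, and $e_m \Leftrightarrow \rho \in \outcome(\stratset^{m-1}) \cap \L(\A_i^m)$, so $\Gamma^n_i$ unfolds as $\bigwedge_{m=0}^{n-1}(d_m \Rightarrow e_{m+1}) \wedge (d_n \Rightarrow p)$. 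From Equation~\eqref{eq:chain} I extract the chain $d_{m+1} \Rightarrow e_{m+1} \Rightarrow d_m \Rightarrow e_m$, and from the paper's observation $\buchi_i \subseteq K^m_i$ I obtain that $p$ implies acceptance of $\L(\A_i^m)$ for every $m$.

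I would then case-split on whether $F_i$ is visited infinitely often. If so, the maximum color realized i.o. is $2n+2$ (even, accepting); on the other side, the last implication of $\Gamma^n_i$ holds because $p$ is true, and each $d_m \Rightarrow e_{m+1}$ holds since from $d_m$ and $p \Rightarrow \L(\A_i^{m+1})$ we recover the conjunction $\outcome(\stratset^m) \cap \L(\A_i^{m+1})$, i.e., $e_{m+1}$. If not, let $\ell = \max\{m : d_m\}$, which is well-defined because $D^0 = V$ makes $d_0$ trivially true. Using the chain, $d_m$ and $e_m$ hold for all $m \leq \ell$, while $d_m$ fails for $m > \ell$ and $e_m$ fails for $m \geq \ell + 2$, with $e_{\ell+1}$ unconstrained. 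Hence the only potentially non-trivial implication left to check in $\Gamma^n_i$ is $d_\ell \Rightarrow e_{\ell+1}$ when $\ell < n$, while the case $\ell = n$ immediately violates $d_n \Rightarrow p$.

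On the parity side, the absence of i.o. visits to $F_i$, to $D^m$ for $m > \ell$, and to $E^m_i$ for $m \geq \ell+2$ forces the color of every i.o. state to be at most $2\ell+2$ when $\ell<n$, and at most $2n+1$ when $\ell = n$. Moreover, at least one i.o. state witnesses color $2\ell+1$ (choose an i.o. state of $D^\ell$ that does not lie in any strictly higher Büchi set—this exists because the higher sets are themselves not visited i.o.), and a further i.o. state witnesses color $2\ell+2$ precisely when $e_{\ell+1}$ holds. The maximum color i.o. is therefore $2\ell+2$ when $e_{\ell+1}$ holds and $2\ell+1$ otherwise, matching exactly the truth value of $\Gamma^n_i$ in each sub-case.

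The main subtlety will be this last step: ensuring that the maximum color \emph{actually realized} i.o. coincides with the one predicted by the chain. One must pick an i.o. state of the relevant Büchi set that is not in any strictly higher Büchi set, and rule out higher colors by the failure of the corresponding Büchi conditions. Both arguments follow from Equation~\eqref{eq:chain} and the product construction, so no new machinery is needed.
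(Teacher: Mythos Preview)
Your proposal is correct and follows essentially the same strategy as the paper: both arguments pivot on the largest index $\ell$ (the paper calls it $m$) with $\rho\in\outcome(\stratset^{\ell})$, use Equation~\eqref{eq:chain} to propagate the truth of the B\"uchi conditions down the chain, and identify the maximal colour seen infinitely often with $2\ell+2$ or $2\ell+1$ according to whether $e_{\ell+1}$ holds. The paper treats the two implications separately, whereas you merge them into a single case analysis, but the underlying mechanics are identical.

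One small imprecision: your claim that ``at least one i.o.\ state witnesses colour $2\ell+1$'' is justified only when $e_{\ell+1}$ fails. When $e_{\ell+1}$ holds, $E^{\ell+1}_i$ is one of the ``strictly higher B\"uchi sets'' and it \emph{is} visited infinitely often, so your parenthetical justification (``the higher sets are themselves not visited i.o.'') does not apply; an i.o.\ state of $D^{\ell}$ might well lie in $E^{\ell+1}_i$ and then carries colour $2\ell+2$, not $2\ell+1$. This does not harm your conclusion---in that sub-case the $2\ell+2$ witness already gives the maximum---but the sentence as written over-claims. Restrict the $2\ell+1$-witness argument to the sub-case $\neg e_{\ell+1}$ (where indeed \emph{all} higher B\"uchi sets, including $E^{\ell+1}_i$, fail) and the argument is clean.
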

\begin{collect}{appendix-prefix-independent}{\subsection{Proof of Lemma~\ref{lem:toparity}}}{}
\begin{proof}
  Assume $\rho \in  \Gamma^n_i$.
  Let $m\leq n$ be the largest index such that $\rho \in \outcome(\stratset^m)$.
  
  If $m = n$ then $\rho \in \win{i}$, therefore $F_i$ is visited infinitely often by $\rho$.
  The highest color visited infinitely often is thus $2n+2$ and $\rho$ satisfies condition $\chi^n_i$.

  If $m < n$ then $\rho$ is not in $\outcome(\stratset^{m+1})$ and therefore, for $m' > m$, not in $\outcome(\stratset^{m'})$.
  Hence $\rho$ does not visit infinitely often odd colors greater than $2m + 2$.
  Moreover, since $\rho \in \Gamma^n_i$, $\rho \in \outcome(\stratset^m)\cap\L(\A_i^{m+1})$ hence it visits infinitely often states of $E^{m+1}_i$, which have an even color of (at least) $2m+2$.
  This means that $\rho$ satisfies condition $\chi^n_i$.

  \medskip     

  Reciprocally, assume that $\rho$ satisfies $\chi^n_i$.
  Let $c$ be the highest color that is reached infinitely often.
  
  If $c=2n+2$ then $\buchi_i$ is visited infinitely often, therefore $\rho\in \win{i}$.
  Moreover, for every $m\leq n$, $F_i \subseteq K_i^m$ so $K_i^m$ is also visited infinitely often, hence $\rho \in \L(\A_i^{m+1}) \subseteq \outcome(\stratset^{m}) \Rightarrow \left(\outcome(\stratset^{m}) \cap \L(\A_i^{m+1})\right)$, therefore $\rho \in \Gamma^n_i$.

  Otherwise $c=2m+2$ with $m < n$.
  Then set $E^{m+1}_i$ is visited infinitely often while, for every $m'> m$, the set $D^{m'}$ is not.
  So for $m' > m$, $\rho$ is not in $\outcome(\stratset^{m'})$, hence all implications are satisfied of index $> m$.
  Note that it is also the case for $m'=n$, so $\rho \in \outcome(\stratset^n) \Rightarrow \win{i}$.
  
  Moreover, since $\rho \in \outcome(\stratset^m)\cap\L(\A_i^{m+1})$, for all indexes $m'\leq m$, Equation~\eqref{eq:chain} yields that $\rho \in \outcome(\stratset^{m'})\cap\L(\A_i^{m'+1}$.
  So the implication $\outcome(\stratset^{m'}) \Rightarrow \left(\outcome(\stratset^{m'})\cap\L(\A_i^{m'+1})\right)$ is satisfied.
  Therefore $\rho \in \Gamma^n_i$.
\qedhere
\end{proof}
\end{collect}

\subsubsection{Solving the winning coalition problem for B\"uchi objectives}

We showed that all objectives encountered in the computation of values boil down to parity objectives.
This yields the following proof:
\begin{proof}[\proofname\ of Theorem~\ref{thm:Buchi}]
By reformulating $\Theta^n_i$ as a parity condition, checking whether the value of a state is $1$ amounts to solving a two player game with parity condition.
This is known to be in $\UP \cap \coUP$, although suspected to be in~$\P$.
Similarly, condition $\Psi^n_i$ is a conjunction of B\"uchi objectives, hence solving its emptiness ---~which means deciding whether the value of a state is $-1$~--- is in~$\P$.

Let $n_0$ be the index where the sets of admissible strategies stabilize (we have $n_0 \leq |\Agt| \cdot |V|$).
Let $W,L$ be the set of players that should win and lose, respectively.
Solving the winning coalition problem amounts to solving emptiness of
\[\Phi = \outcome(\stratset^{n_0}) \cap \bigcap_{i \in W} \win{i} \cap \bigcap_{i \in L} \neg\win{i}.\]
As the conjunction of $n_0+|W|$ B\"uchi conditions and a coB\"uchi condition, it can be expressed as a parity condition with 3 colors.
\qedhere
\end{proof}


\subsection{Model-checking under admissibility}\label{sec:ltl}
  This section is devoted to the proof of Theorem~\ref{thm:LTLmcAdm}.

  Given a \textsc{LTL} formula $\phi$, the model-checking under admissibility problem is equivalent to the emptiness of 
  $\Phi = \L(\lnot \phi) \cap \outcome(\stratset^*)$.
  It was proven in~\cite{sistla87} that the language $\L(\lnot \phi)$ can be represented by a B\"uchi deterministic\MS{Tu es s\^ur qu'il est d\'eterministe? Est-ce essentiel? (Je ne crois pas puisqu'on devine un chemin dedans.} automaton~$\mathcal{A}$ with set of states $Q$ such that:
  \begin{itemize}
  \item the size of each state of $\mathcal{A}$ is polynomial in $|\phi|$;
  \item it can be checked if a state is initial in space polynomial in $|\phi|$;
  \item it can be checked if a state is in $F$ in space polynomial in $|\phi|$;
  \item each transition of $\mathcal{A}$ can be checked in space polynomial in $|\phi|$.
  \end{itemize}
  
\begin{lemma}
  Let $\A$ be an automaton, then for any run~$\rho$ of the automaton, there exists $\pi$ and $\tau$ such that: 
  \begin{itemize}
  \item $\rho'=\pi \cdot \tau^\omega$ is a run of $\A$;
  \item $\occ{\rho'} = \occ{\rho}$;
  \item $\inff{\rho'} = \inff{\rho}$;
  \item $|\pi|+|\tau| \le 2 \cdot n^2$ where $n$ is the number of states of $\A$.
  \end{itemize}
\end{lemma}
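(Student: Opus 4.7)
The plan is to construct $\tau$ as a short cycle through all states visited infinitely often by $\rho$ and $\pi$ as a short path from $\rho_0$ that covers the remaining states of $\occ{\rho}$ before entering that cycle. Write $S = \inff{\rho}$ and $F = \occ{\rho} \setminus S$ with $m = |S|$ and $k = |F|$, so $m + k \le n$. Each state of $F$ is visited only finitely often, so there exists an index~$N$ past which $\rho$ stays entirely in $S$. Given any two states $s, s' \in S$, both are visited at arbitrarily late positions, so the tail of $\rho$ beyond~$N$ provides a path from $s$ to $s'$ whose intermediate states all lie in $S$: the subgraph of $\A$ induced by $S$ is therefore strongly connected.

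Using this, I would build $\tau$ as follows. Fix some $s_0 \in S$ and an enumeration $s_0, s_1, \dots, s_{m-1}$ of~$S$; by strong connectivity, each consecutive pair $(s_i, s_{(i+1) \bmod m})$ is joined by a simple path of length at most~$m$ inside~$S$. Concatenating these $m$ simple paths yields a cycle at $s_0$ that visits every state of $S$, with $|\tau| \le m^2$ and $\occ{\tau} = S$.

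For the prefix $\pi$, I would order the states of $F$ as $t_1, \dots, t_k$ by first occurrence in~$\rho$, and set $t_0 = \rho_0$ and $t_{k+1} = s_0$. Between any two consecutive markers $t_i$ and $t_{i+1}$, the run~$\rho$ itself supplies a path through $\occ{\rho}$; deleting repeated states turns it into a simple path of length at most~$n$. Concatenating these $k+1$ pieces gives $\pi$ with $|\pi| \le (k+1)\,n$ and $\occ{\pi} \supseteq F \cup \{s_0\}$.

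Then $\rho' = \pi \cdot \tau^\omega$ is a valid run of~$\A$ satisfying $\occ{\rho'} = F \cup S = \occ{\rho}$ and $\inff{\rho'} = \occ{\tau} = S = \inff{\rho}$. For the length, $|\pi| + |\tau| \le (k+1) n + m^2 \le (n - m + 1)\,n + m^2 = n^2 - mn + n + m^2$, which as a parabola in~$m$ on $\{1, \dots, n\}$ is maximised at the endpoints, the larger value $n^2 + n$ being bounded by $2 n^2$ as soon as $n \ge 1$. The main technical point requiring care is the strong connectivity of $S$ in the subgraph restricted to~$S$ (not merely to $\occ{\rho}$): this is precisely what allows the segments of~$\tau$ to be bounded by~$m$ rather than~$n$, yielding the quadratic rather than cubic bound.
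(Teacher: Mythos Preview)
Your proof is correct and takes a genuinely different route from the paper's. The paper builds $\pi$ (and similarly $\tau$) by an explicit greedy procedure: at each step it identifies the earliest state of $\rho$ not yet visited (the ``target''), then chooses as the next state of $\pi$ the successor in $\rho$ of the latest occurrence of the current state before the target. This yields a bound of roughly $n(n-1)/2$ on each of $\pi$ and $\tau$ separately, and then some extra work is needed to glue $\pi$ to $\tau$ and $\tau$ to itself. Your approach instead isolates the structural fact that $\inff{\rho}$ is strongly connected \emph{within its own induced subgraph}, builds $\tau$ as a concatenation of $m$ simple paths of length at most $m$ there, and builds $\pi$ by concatenating simple paths in $\occ{\rho}$ through the finitely-visited states in order of first appearance. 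Your argument is more elementary and modular, and the separation into the parameters $m=|\inff{\rho}|$ and $k=|\occ{\rho}\setminus\inff{\rho}|$ makes the quadratic bound transparent; the paper's construction, on the other hand, is a single uniform mechanism applied verbatim to both $\rho$ and its tail $\rho_{\ge\ell}$, which is aesthetically pleasing but requires more bookkeeping to justify the length bound and to handle the joining of $\pi$ to $\tau$.
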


The proof can be seen as a particular case of Lemma~2.2 in~\cite{brenguier-phd2012}.
\begin{proof}
  We inductively construct a history $\pi = \pi_0 \pi_1
  \dots \pi_n$ that is not too long and 
  visits precisely those states that are visited by~$\rho$.

  The initial state is $\pi_0 = \rho_0$. 
  Then assume we have constructed $\pi_{\le k} = \pi_0 \dots \pi_k$ which visits exactly the same states as $\rho_{\le k'}$ for some~$k'$. 
  If all the states of~$\rho$ have been visited in $\pi_{\le k}$ then the construction is over. 
  Otherwise there is an index~$i$ such that $\rho_{i}$ does not appear in~$\pi_{\le k}$.
  The smallest such~$i$ is therefore the next \emph{target}: define $t(\pi_{\le k}) = \min \{ i \mid \forall j \le k.\ \pi_j \neq \rho_i\}$.
  Now consider the occurrence of the current state~$\pi_k$ that is the closest to the target in~$\rho$: $c(\pi_{\le k}) = \max \{ i < t(\pi_{\le k}) \mid \pi_k =\rho_i\}$.
  Run $\pi$ emulates what happens at that position by choosing $\pi_{k+1} = \rho_{c(\pi_{\le k})+1}$.  \MS{$i \rightarrow k$.}
  Then $\pi_{k+1}$ is either the target $\rho_{t(\pi_{\le k})}$, or a state that has already been seen before in $\pi_{\le k}$, in which case the resulting $\pi_{\le k+1}$ visits exactly the same states as $\rho_{\le c(\pi_{\le k})+1}$.

  At each step, either the number $|\occ{\rho}\setminus\occ{\pi_{\leq k}}|$ of remaining targets strictly decreases, or the number of remaining targets is constant but the distance $t(\pi_{\le k})-c(\pi_{\le k})$\MS{C'est bien \c ca?} to the next target strictly decreases.
  Therefore the construction terminates. 
  Moreover, notice that between two targets the same state is never visited twice, and only states that have already been visited, or the target, are visited.
  As the number of targets is bounded by~$n$, we obtain that the length of the path $\pi$ is bounded by $1+|n|\cdot (|n|-1)/2$.
  In addition, states of $\pi$ are always picked from $\rho$, ensuring that at each step $\occ{\pi_{\leq k}} \subseteq \occ{\rho}$.

  \smallskip
  Using similar ideas, we now inductively  construct $\tau = \tau_0 \tau_1 \dots \tau_m$,  which visits precisely those states which are seen infinitely often along~$\rho$, and which is not too long.  
  Let $\ell$ be the least index after which  the states visited by~$\rho$ are visited infinitely often: $\ell = \min\{ i \mid \forall j\geq i.\ \rho_j \in \inff{\rho}\}$.
  The~run~$\rho_{\ge \ell}$ is such that $\inff{\rho_{\geq \ell}} = \inff{\rho}$.
  Run~$\tau$ can be built in the same way as~$\pi$ above, but for play~$\rho_{\ge \ell}$. 
  As~a by-product, we also get $c(\tau_{\leq k})$, for~$k<m$.\MS{Je ne comprends pas cette phrase; en fait je ne comprends pas \`a quoi sert $c(\tau_{\leq k})$ hors de cette construction.}

  For $\pi \cdot \tau^\omega$ to be a real run, $\pi$  must be joined to $\tau$, and $\tau$ must be joined to itself.
  The last state of~$\pi$ must be the first state of~$\tau$ (and similarly the last state of $\tau$ must also be its first state).
  This possibly requires appending some more states to~$\pi$ and~$\tau$: the target of~$\pi$ and $\tau$ is set to be~$\tau_0$, and the same construction as previously is applied.
  As before, the added states belong to $\rho$ (resp. $\rho_{\geq \ell}$), ensuring that $\occ{\pi} = \occ{\rho}$ (resp. $\inff{\tau} = \inff{\rho}$).
  The total length of the resulting paths~$\pi$ and~$\tau$ is bounded by $1+(n - 1)\cdot(n+2)/2$ which less than~$n^2$.
\end{proof}
  
\begin{lemma}\label{lem:modelcheck}
  There is a word in $\Phi$ if and only if there are histories~$\pi$ and $\tau$ of length bounded by $(|\G| \times|\A|)^2$, such that $\pi\cdot \tau^\omega$ is a path in the product $\G \times \A$, $\tau^\omega$ satisfies the circuit condition defining $\outcome(\stratset^*)$, and $\tau$ visits an accepting state of $\A$.
\end{lemma}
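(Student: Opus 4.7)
The plan is a standard ``exists a run iff exists a short lasso run'' argument, using the previous lemma as the pumping tool. The key observation that makes the argument go through is that the circuit condition defining $\outcome(\stratset^*)$ is a Muller condition, hence prefix-independent and determined solely by the set $\inff{\cdot}$, and the B\"uchi condition on $\A$ is also determined by $\inff{\cdot}$.

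For the $(\Rightarrow)$ direction, I would start from an arbitrary $w \in \Phi$. Since $w \in \L(\neg\phi) = \L(\A)$, there is an accepting run of $\A$ on $w$, i.e.\ a run that visits $F$ infinitely often; pairing it with $w$ itself viewed as a path in $\G$ (which exists since $w \in \outcome(\stratset^*) \subseteq V^\omega$ is a genuine run of $\G$) yields a run $\rho$ in the product automaton $\G \times \A$ whose projection on $\G$ is in $\outcome(\stratset^*)$ and whose projection on $\A$ is accepting. Applying the previous lemma to $\G \times \A$ (which has at most $|\G| \cdot |\A|$ states) produces $\pi$ and $\tau$ with $|\pi| + |\tau| \le 2 (|\G| \cdot |\A|)^2$ such that $\pi \cdot \tau^\omega$ is still a run of $\G \times \A$, $\occ{\pi\cdot\tau^\omega} = \occ{\rho}$, and $\inff{\pi\cdot\tau^\omega} = \inff{\tau} = \inff{\rho}$. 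From $\inff{\tau} = \inff{\rho}$ it follows that $\tau$ visits some accepting state of $\A$ (since $\rho$ visited $F$ infinitely often, at least one state of $F$ lies in $\inff{\rho}$) and that $\tau^\omega$ satisfies the Muller circuit defining $\outcome(\stratset^*)$, as this condition only depends on~$\inff{\cdot}$.

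For the $(\Leftarrow)$ direction, I would take such a lasso $\pi \cdot \tau^\omega$ and project it on $\G$, getting a run $\rho$ of~$\G$. Because $\tau$ visits an accepting state of $\A$, the projection on $\A$ is an accepting run, so the word read is in $\L(\A) = \L(\neg\phi)$. Because $\inff{\rho}$ equals the set of states appearing in $\tau$ (which is $\inff{\tau^\omega}$) and the circuit defining $\outcome(\stratset^*)$ is prefix-independent, $\rho \in \outcome(\stratset^*)$. Hence $\rho \in \Phi$.

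The main obstacle is not really an obstacle but a point to be careful about: one must make sure that the construction of the previous lemma, which is stated for a single automaton, applies to the product $\G \times \A$ in such a way that both the Muller-like circuit condition on the $\G$-component \emph{and} the B\"uchi acceptance of the $\A$-component are preserved. Both preservations follow from the fact that the resulting lasso $\pi \cdot \tau^\omega$ has the same set of states visited infinitely often as~$\rho$, which is precisely the property supplied by the previous lemma. The stated length bound $(|\G|\cdot|\A|)^2$ is obtained (up to a factor~$2$ which is absorbed) by instantiating that lemma with $n = |\G|\cdot|\A|$.
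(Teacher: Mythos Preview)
Your proposal is correct and follows exactly the paper's approach: the paper's proof is the single sentence ``By applying the preceding lemma to the product automaton $\G \times \A$,'' and you have simply unfolded this, making explicit why preservation of $\inff{\cdot}$ suffices for both the B\"uchi acceptance on the $\A$-component and the Muller circuit condition on the $\G$-component, and handling the easy converse direction.
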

\begin{proof}
By applying the preceding lemma to the product automaton $\G \times A$.
\end{proof}

From this lemma we deduce the non-deterministic polynomial space Algorithm~\ref{alg:modelcheck} to decide the model-checking under admissibility problem.

\begin{algorithm}[h]
\textsc{MCunderADM}$(\G_0,\phi)$::= \Begin{
Compute $(\G,\psi)$ as automaton with circuit condition for $\outcome(\stratset^*(\G_0))$\; \tcp*[f]{see Section~\ref{sec:automataDef}.}\\
Compute the machine representing the B\"uchi automaton $\A$ accepting $\L(\neg\phi)$\;
$\ell:=0$\;
Nondeterministically set $s \in V$ and $q \in Q$ an initial state of $\A$\;
$\textit{In}\pi := \textit{true}$\;
\While{$\textit{In}\pi \wedge \ell \leq (|\G| \times|\A|)^2$}{
Guess $s'$ such that $s \rightarrow s'$;
$s:= s'$\;
Guess $q'$ such that $q \xrightarrow{s'} q'$;
$q := q'$\;
Guess a boolean value for $\textit{In}\pi$;
$\ell := \ell+1$\;
}
\lIf{$\textit{In}\pi$}{\Return{false} \tcp*{$\ell$ reached its bound}}
\Else{
$t:=s$; $p:=q$; $\ell:=0$; $X:=\emptyset$; $b:=\textit{false}$\;
\While{$\neg((s,q) = (t,p) \wedge X \vDash \psi \wedge b) \wedge \ell \leq (|\G| \times|\A|)^2$}{
Guess $s'$ such that $s \rightarrow s'$;
$s:= s'$;
$X := X \cup \{s'\}$\;
Guess $q'$ such that $q \xrightarrow{s'} q'$;
$q := q'$;
$b := b \vee (q' \in F)$\;
$\ell := \ell+1$\;
}
\lIf{$(s,q) = (t,p) \wedge X \vDash \psi \wedge b$}{\Return{true}\;}
\lElse{\Return{false} \tcp*{$\ell$ reached its bound}}
}
}
\caption{\PSPACE\ algorithm for the model-checking under admissibility problem.}
\label{alg:modelcheck}
\end{algorithm}

The algorithm looks for $\pi$ and $\tau$ satisfying the conditions of Lemma~\ref{lem:modelcheck}.
It remembers the length~$\ell$ of the path guessed so far.
This length is bounded\footnote{If the bound is reached, the algorithm rejects the run.} by $(|\G| \times|\A|)^2$, which is exponential in the input, hence can be stored in polynomial space.
The algorithm also stores the current states of both $\G$ and $A$ ($s$ and $q$, respectively) and the last states of $\G$ and $A$ at the end of $\pi$ --~therefore the beginning of $\tau$~-- ($t$ and $p$, respectively).
Since there is only an exponential number of states in $\A$ (and a polynomial one in $\G$), coding a state can be done in polynomial space.
In order to check that the winning conditions are satisfied, the algorithm remembers the set~$X$ of states of $\G$ that have been visited (by $\tau$), and the Boolean~$b$ telling whether an accepting state of $\A$ as been visited (again, by $\tau$).
In the end, the fact that $\tau$ is actually a looping path is also checked by testing $(s,q)=(t,p)$.
Note that the paths $\pi$ and $\tau$ in $\G$ must respect the local conditions (\emph{e.g.} a player does not take an edge that decreases his value).

The correctness of the algorithm follows by Lemma~\ref{lem:modelcheck}, it terminates and use only polynomial space.
Hence the model-checking under admissibility problem is in \PSPACE.

\PSPACE-hardness derives from the \PSPACE-hardness of \textsc{LTL} model-checking~\cite{sistla85}.
This concludes the proof of Theorem~\ref{thm:LTLmcAdm}.

\section{Example: a metro system}\label{sec:train}

\subsection{The model of the system}
We illustrate the use of model-checking under admissibility over the following metro system.
The metro track is composed of a directed ring divided in $n$ slots (one can assume for example that even-numbered slots represent a station, while odd-numbered slots represent a section of track between two stations).
There are $p$ trains numbered $1$ to $p$ on the track, initially at positions $p-1,\ldots,0$ (hence train number $1$ starts in front of the others (see \figurename~\ref{fig:traintrack}).
In the sequel we study the case of a track of length $6$ with two trains $t_1$ and $t_2$.
\begin{figure}[h!]
\centering
\begin{tikzpicture}
\def\dashdist{5pt}
\path[draw,thick,double,double distance=5pt] (0,0) arc (-90:90:1.5) -- ++(-5,0) arc (90:270:1.5) -- cycle;
\foreach \x in {0,-2.5,-5} {\path[draw,very thick] (\x,+\dashdist) -- (\x,-\dashdist);}
\begin{scope}[yshift=3cm]\foreach \x in {0,-2.5,-5} {\path[draw,very thick] (\x,+\dashdist) -- (\x,-\dashdist);}\end{scope}
\foreach \x/\sec in {-3.75/0,-1.25/1} {\node[anchor=north] (lab\sec) at (\x,-0.1) {$\sec$}; \node (sec\sec) at (\x,-0.25) {};}
\foreach \x/\sec in {-3.75/4,-1.25/3} {\node[anchor=north] (lab\sec) at (\x,2.9) {$\sec$}; \node (sec\sec) at (\x,2.75) {};}
\node[anchor=east] (lab2) at (1.4,1.5) {$2$}; \node (sec2) at (1.5,1.5) {};
\node[anchor=west] (lab5) at (-6.4,1.5) {$5$}; \node (sec5) at (-6.5,1.5) {};
\node[anchor=south] (tr1) at (sec1) {\minitrain{black}{white}{1}};
\node[anchor=south] (tr2) at (sec0) {\minitrain{lipicsyellow}{black}{2}};
\end{tikzpicture}
\caption{Metro track with $n=6$ and $p=2$, at their initial position $(1,0)$.}
\label{fig:traintrack}
\end{figure}
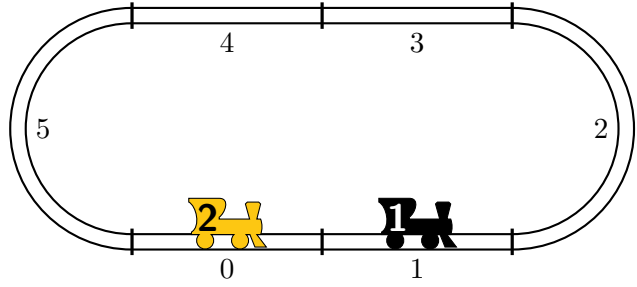

At each step, all the trains successively declare whether they want to advance or not.
Once everyone has chosen, all trains try to move synchronously.
However, there is an evil environment $\env$ that can prevent trains from moving (for example by activating an alarm on the section the train is trying to move in).
That means some trains that wanted to move may in fact remain in their position.
Nevertheless, all other trains must comply with their original choice, modeling the fact that trains cannot communicate\footnote{Although some kind of communication permits the trains to make their choice knowing in which sections the other trains are and what choice has been made by the trains with lower indices.} in real-time.
Additionally, if at any point two trains are on the same track section, they collide and the game stops.
(Part of) the game graph for this protocol is depicted in \figurename~\ref{fig:metrogame}.
In this game, the configuration of the track is given by the pair of positions for $t_1$ and $t_2$, and their respective wish to move or stay in place by the symbols ``$\rightarrow$'' (advance) or ``$\downarrow$'' (stay in place).

\begin{figure*}
\centering
\begin{tikzpicture}[auto,node distance=0.25cm and 2.25cm,>=latex']
\tikzstyle{player1}=[state,shape=rounded rectangle]
\tikzstyle{eliminated1}=[dotted,thick]
\tikzstyle{eliminated2}=[dashed, dash pattern=on 3pt off 3pt on 1pt off 3pt]
\tikzstyle{envstate}=[state,player3,text width=1.25cm,text centered,inner xsep=-1pt,inner ysep=0pt]
\node[state,player1] (q0) {$(1,0),(\bot,\bot)$};
\node[state,player2,node distance=1cm and 2cm,above right=of q0] (q1) {$(1,0),(\rightarrow,\bot)$};
\node[state,player2,node distance=1cm and 2cm,below right=of q0] (q2) {$(1,0),(\downarrow,\bot)$};
\node[envstate,above right=of q1] (q3) {$(1,0)$\\$(\rightarrow,\rightarrow)$};
\node[envstate,player3,below right=of q1] (q4) {$(1,0)$\\$(\rightarrow,\downarrow)$};
\node[envstate,player3,above right=of q2] (q5) {$(1,0)$\\$(\downarrow,\rightarrow)$};
\node[envstate,player3,below right=of q2] (q6) {$(1,0)$\\$(\downarrow,\downarrow)$};

\node[state,player1] (q7) at (12,2cm) {$(2,1)$};
\node[state,player1] (q8) at (12,0) {$(2,0)$};
\node[state,player1] (coll) at (12,-2cm) {$(1,1)$};

\path[->] (q0) edge node[swap,sloped,pos=0.33]{move} (q1);
\path[->] (q0) edge node[sloped,pos=0.333]{stay} (q2);
\path[->,eliminated2] (q1) edge node[swap,sloped,pos=0.1]{move} (q3);
\path[->] (q1) edge node[sloped,pos=0.33]{stay} (q4);
\path[->,eliminated2] (q2) edge node[swap,sloped,pos=0.1]{move} (q5);
\path[->] (q2) edge node[sloped,pos=0.33]{stay} (q6);

\path[->,eliminated1] (q3) edge[bend right] node[swap,sloped,pos=0.33]{block both} (q0);
\path[->,eliminated1] (q3) edge[bend left] (q7);
\path[->,eliminated1] (q3) edge node[sloped,pos=0.5]{block $2$} (q8);
\path[->] (q3) edge[bend left=10] node[sloped,pos=0.75]{block $1$} (coll);

\path[->] (q4) edge node[swap,sloped,pos=0.33]{block $1$} (q0);
\path[->] (q4) edge[bend right=5] (q8);
\path[->,eliminated1] (q5) edge node[swap,sloped,pos=0.33]{block $2$} (q0);
\path[->] (q5) edge (coll);
\path[->] (q6) edge[bend left] (q0);
\path[->] (coll) edge[loop below] (coll);

\tikzstyle{fading edge}=[dashed, dash pattern=on 7pt off 2pt on 2pt off 2pt on 2pt off 2pt on 2pt]
\path[draw,->,fading edge] (q7.north east) -- ++(0.33,0.75);
\path[draw,->,fading edge] (q8.north east) -- ++(0.33,0.75);
\path[draw,<-,fading edge] (q0.north) -- ++(-0.33,0.75);
\end{tikzpicture}
\caption[Part of the game played by the trains and the environment when trying to move from the original position.]{Part of the game played by the trains and the environment when trying to move from the original position. Round states belong to train $t_1$, square ones to train $t_2$, and triangular ones to the environment. Dotted edges are eliminated in the first iteration. Dashed ones in the second iteration.}
\label{fig:metrogame}
\end{figure*}
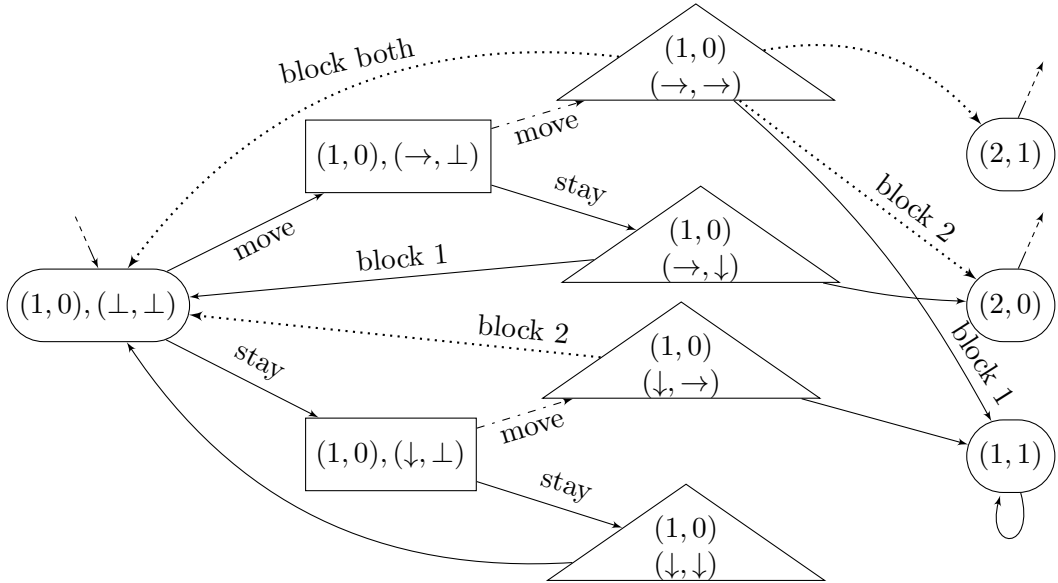

The objective of both trains is to loop infinitely often, which can be expressed by a generalized B\"uchi winning condition\footnote{This condition can therefore be expressed by a B\"uchi condition on a graph with memory of visits to $0$ or $1$. It can also be expressed by a circuit condition encoding formula $\bigvee (0,\_) \wedge \bigvee (1,\_)$.} requiring visiting infinitely often both sections $0$ and $1$.
The objective of the environment is to give rise to a collision: since collision are sink states it is equivalent to a B\"uchi winning condition over the set of collision states.
Globally, we are interested in knowing whether the \textsf{LTL} formula $\psi_{\neg\textrm{coll}}$ that expresses that no collision ever happens is satisfied under admissibility.
Remark that here:
\begin{itemize}
\item no player has a winning strategy alone;
\item the formula $\psi_{\neg\textrm{coll}}$ is not verified on all paths of the model;
\item that the players are not \emph{a priori} trying to satisfy $\psi_{\neg\textrm{coll}}$: indeed, the environment's objective is to negate $\psi_{\neg\textrm{coll}}$.
\end{itemize}

In the sequel we will show that the mechanism of admissibility enforces the satisfaction of $\psi_{\neg\textrm{coll}}$, although each train cannot guarantee that its objective will be fulfilled and $\psi_{\neg\textrm{coll}}$ is not an explicit requirement in the objective of any player.

\subsection{Strategy elimination}

\subsubsection{Value computation for first iteration}
At the first iteration, players for both trains never have a winning strategy for their winning condition that requires infinitely many visits to both positions $0$ and $1$: they cannot prevent, for example, the environment from always blocking their movement.
Note that one train can also make the other one lose on its own, by choosing to always stay in the same position.
On the other hand, in all states but the one with a collision, \emph{e.g.} state $(1,1)$, both trains may achieve their goal if the other players cooperate.
Hence the value of states for both train players is $0$ except in collision states, where it is $-1$.

For the environment player, however, there are states where he can force a collision, and hence has a winning strategy: whenever train~$2$ is right behind train~$1$ and train~$2$ has decided to move, which happens for example in states \mbox{$(1,0),(\rightarrow,\rightarrow)$} and $(1,0),(\downarrow,\rightarrow)$ in \figurename~\ref{fig:metrogame}.
These states therefore have value $1$ for the $\env$.
Additionally, from collision states the environment has already won, hence their value is also $1$.
From other states, if trains chose to never move, for example, $\env$ will not achieve its goal.
And they can also chose to reach a winning state for $\env$, hence the value from other states is $0$.

\subsubsection{Strategy elimination for first iteration}
From the values computed above, we first see that $\env$ eliminates all strategies that do not go to a collision state whenever it is possible: this amounts to removing some edge that originate from winning state, for example the dotted edges stemming from states $(1,0),(\rightarrow,\rightarrow)$ and $(1,0),(\downarrow,\rightarrow)$ in \figurename~\ref{fig:metrogame}.

Note that the strategy of $\env$ that always blocks train~$1$ if he is the only train wishing to move (in state $(1,0),(\rightarrow,\downarrow)$), which may end up losing for $\env$ if both players repeat their choice, is not eliminated, since this play traverses infinitely often state $(1,0),(\rightarrow,\bot)$, which is a ``Help!''-state for player $\env$.

Players $t_1$ eliminates for example the strategy that always stays in the same position, because it is always losing with respect to his objective.
However, although it is intuitively not a good choice for $t_1$ to always choose to ``stay'' in position $1$, the play \begin{mathpar}
\big((0,1),(\bot,\bot) \xrightarrow{stay} (1,0),(\downarrow,\bot) \xrightarrow{stay} (1,0),(\downarrow,\downarrow) \longrightarrow(0,1),(\bot,\bot)\big)^\omega
\end{mathpar} is an outcome of an admissible strategy.
For example, consider the strategy of $t_1$ that tries to move only if the path
\begin{mathpar}(0,1),(\bot,\bot) \xrightarrow{stay} (1,0),(\downarrow,\bot) \xrightarrow{move} (1,0),(\downarrow,\rightarrow) \xrightarrow{block 2} (0,1),(\bot,\bot)\end{mathpar} is taken first\footnote{One can show that this strategy is indeed admissible.}.
Against a strategy of $t_2$ that always chooses not to move, the aforementioned loop is the outcome of the game.
This is a consequence of the fact that $(1,0),(\downarrow,\bot)$ is a ``Help!''-state for player $t_1$.

Similarly, all states of $t_1$ are ``Help!''-states for player $t_2$.

As a result, the outcomes of strategies admissible after the first phase of elimination are all paths that never use the dotted transitions of player $\env$.

\subsubsection{Second iteration}

At the second iteration, the value for player $\env$ do not change.
On the other hand, the states where player $\env$ can force a collision are now losing for players $t_1$ and $t_2$, and hence now have value $-1$.

As a result, player $t_2$ will never play the transitions that lead to these states (the dashed transitions in \figurename~\ref{fig:metrogame}).
However, the ``Help!''-states for player $t_1$ and $t_2$ are the same, hence the outcomes of strategies admissible after two iterations remains unchanged (except for these local conditions).

\subsubsection{Third and fourth iteration}

At these iterations the values do not change.
Nonetheless, some ``Help!''-states have to be removed.
For example, state $(1,0),(\downarrow,\bot)$ belonging to $t_2$ is not a ``Help!''-state of $t_1$ anymore.
Hence any admissible strategy must, from $(0,1),(\bot,\bot)$, try to move infinitely often.

Neither values nor sets of admissible strategies change at the fourth iteration, meaning the set $\outcome(\stratset^*)$ has been computed.

\subsubsection{Analysis of $\outcome(\stratset^*)$}
In this set, one can see that whenever $t_2$ is right behind $t_1$:
\begin{itemize}
\item $t_2$ never tries to move;
\item $t_1$ infinitely tries to move;
\end{itemize}
As a result, one can see there can be no collision, hence the objective $\psi_{\neg\textrm{coll}}$ is satisfied.
Hence, the model-checking under admissibility problem answers positively for $\psi_{\neg\textrm{coll}}$, while the (``classical'') model-checking problem for the same formula answers negatively.

This also shows for example that the winning coalition problem with $\env \in W$ answers negatively: there is no admissible profile that makes player $\env$ win.

On the other hand, it cannot be assured that $t_1$ and $t_2$ will win for their objective: the environment can choose to always block their movement, preventing them from looping infinitely often on the track.

\section{Weak objectives}\label{sec:weak}


In this section, we assume that the winning conditions are weak circuit condition.
We will show the following theorem:
\begin{theorem}\label{thm:weakobj}
  The winning coalition problem for weak Muller conditions is \PSPACE-complete.
\end{theorem}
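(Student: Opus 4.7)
The plan is to derive PSPACE-hardness from the safety case and to obtain PSPACE membership by lifting the safety algorithm of Section~\ref{sec:safety} to an enriched game that tracks the set of states visited so far.

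Hardness is immediate: a safety condition $\win{i}=(V\setminus\bad_i)^\omega$ is the weak Muller condition associated with the family $\mathcal{F}_i=\{S\subseteq V\mid S\cap\bad_i=\emptyset\}$, which admits an obvious polynomial-size Boolean circuit. Hence Theorem~\ref{th:safety-pspace} transfers verbatim, even for $|W|=1$ and $L=\emptyset$.

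For membership, I would mimic the treatment of safety, with the set $\visited\subseteq V$ of already-visited states playing the role of the set $\lost(h)$ of already-losing players. Concretely, enrich each state of $\G$ by a subset $\visited$, updated by $\visited\leftarrow\visited\cup\{s\}$ upon entering $s$; the enriched game has $|V|\cdot 2^{|V|}$ states but ensures that for every infinite run $\rho$ the component $\visited$ stabilises at $\states{\rho}$ after at most $|V|$ growth steps. Consequently the winning condition $\states{\rho}\in\mathcal{F}_i$ becomes a condition on the stable value of $\visited$, depending only on the current enriched state. The value $\val^n_i(h)$ then depends only on $(\visited(h),\last(h))$, exactly as in Proposition~\ref{prop:locallost}, and the whole safety analysis---characterisation of admissibility by non-value-decreasing transitions (Propositions~\ref{prop:correctness} and~\ref{prop:completeness}) together with iterative elimination of dominated transitions à la Algorithm~\ref{algo:safety}---transfers to the enriched game without conceptual change.

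The main obstacle is the exponential size of the enriched game, which forbids materialising it. I would handle it exactly as in the proof of Proposition~\ref{lem:pspace-easy}: enriched states carry a partial order $(\visited,s)\leq(\visited',s')$ iff $\visited\subseteq\visited'$, which strictly grows at most $|V|$ times along any path. So the iterative elimination stabilises after at most $|V|\cdot|E|$ phases (counted in the enriched game), and the recursive computation of a value $\val^n_i(\visited,s)$---which reduces to a two-player reachability problem on the enriched subgame pruned of all previously eliminated transitions---has recursion depth bounded by $|V|$, each nested call strictly enlarging the $\visited$ component. A nondeterministic procedure can then guess on the fly a lasso $\pi\cdot\tau^\omega$ in the enriched game whose stabilised $\visited$ satisfies $\visited\in\mathcal{F}_i$ for $i\in W$ and $\visited\notin\mathcal{F}_i$ for $i\in L$ (evaluating the input circuits), verifying at each step via the recursive value computation that the chosen transition has survived all prior elimination phases. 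Each enriched state fits in $|V|+\log|V|$ bits, the length counter for the lasso is polynomial, and the recursion stack has polynomial depth, placing the algorithm in PSPACE by Savitch's theorem.
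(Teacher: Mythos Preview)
Your hardness argument is fine and matches the paper. The membership argument, however, has a real gap: the claim that Propositions~\ref{prop:correctness} and~\ref{prop:completeness} ``transfer to the enriched game without conceptual change'' is false for the sufficiency direction. The proof of Proposition~\ref{prop:completeness} relies on a property that is specific to safety: if an outcome is losing for player~$i$, it must visit a state of $\bad_i$, hence a state of value~$-1$. This is what lets one conclude that any value-preserving strategy is admissible. For general weak Muller objectives the implication fails. Take a one-player reachability game with states $s,g$, edges $s\to s$, $s\to g$, $g\to g$, and $\good_1=\{g\}$. In the $\visited$-enriched game, both $(\{s\},s)$ and $(\{s,g\},g)$ have value~$1$ for player~$1$, so the self-loop at $(\{s\},s)$ is value-preserving and your procedure keeps it. Yet the strategy that loops at $s$ forever is losing and strictly dominated by the one that moves to $g$; it is not admissible. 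Thus the purely local, transition-based characterisation of admissibility that drives Algorithm~\ref{algo:safety} does \emph{not} hold in the enriched game, and the algorithm you describe is unsound.

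The paper avoids this by invoking the machinery of Section~\ref{sec:prefix-independent} rather than that of Section~\ref{sec:safety}. Within a fixed $\visited$-layer the weak Muller objective becomes a trivial prefix-independent condition (either always true or always false), so the objectives $\Theta^n_i$ and $\Psi^n_i$ of Section~\ref{sec:valcomput} can be computed there; transitions that enlarge $\visited$ are handled by a recursion on the strictly growing set of visited states. The depth of that recursion is bounded by $|V|$ (plus the iteration index~$n$), which is what yields polynomial space. What you are missing, then, is precisely the non-local part of the admissibility characterisation---the ``Help!''-states and the three-case analysis of Lemmata~\ref{lem:valun}--\ref{lem:dj}---without which value-preservation alone does not pin down admissible outcomes.
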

\RB{donner un sketch de l'algorithm ici?}

\begin{proposition}\label{prop:localvisited}
  For weak Muller conditions, the value of a history only depends on the set $\states{h}$ of states that have been visited and on $s=\last(h)$ the last state of the history.
\end{proposition}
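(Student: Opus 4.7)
The strategy is to reduce the claim to the already-proved prefix-independent case (Proposition~\ref{prop:valuestate}) by a partial unfolding of the game that remembers the set of states visited so far. This is the exact same trick already used in the proof of Proposition~\ref{prop:locallost} for safety conditions, except that the memory tracks visited states rather than the set of already-losing players.

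Concretely, I would introduce an augmented game $\G'$ whose state space is $2^V\times V$, with $V'_i = 2^V \times V_i$, and whose edges are $(X,s)\rightarrow(X\cup\{s'\},s')$ whenever $s\rightarrow s'$ in $\G$. Since the first component is a deterministic function of the history, histories of $\G'$ are in bijection with histories of $\G$: a history $h$ of $\G$ starting in $s_0$ corresponds to the unique history $h'$ of $\G'$ starting in $(\{s_0\},s_0)$ whose projection on the second component is $h$, and whose last state is $(\states{h},\last(h))$. This bijection extends to a bijection on strategies and on strategy profiles, preserving admissibility.

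Next, I would lift the weak Muller condition of each player $i$ to a Muller condition on $\G'$. Along any run $\rho'=((X_k,s_k))_k$, the sets $X_k$ are nondecreasing subsets of the finite set $V$, hence stabilize at some $X^\star=\bigcup_k X_k = \states{(s_k)_k}$. Therefore $\inff{\rho'}$ is contained in $\{X^\star\}\times V$, and projecting $\inff{\rho'}$ on its first component yields exactly $\{X^\star\}=\{\states{\rho}\}$ where $\rho$ is the projection of $\rho'$. Hence the winning condition $\win{i}'=\{\rho'\mid \text{proj}_1(\inff{\rho'})=\{X\}\text{ and }X\in\mathcal{F}_i\}$ is a Muller condition over $\G'$ equivalent to the original weak Muller condition through the above bijection.

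Applying Proposition~\ref{prop:valuestate} to $\G'$ (which carries prefix-independent Muller objectives), the value of a history $h'$ of $\G'$ depends only on $\last(h')$. Pulling back through the bijection, the value of a history $h$ of $\G$ depends only on the corresponding last augmented state $(\states{h},\last(h))$, which is the claim. No step looks like a real obstacle here; the only point to check carefully is that the bijection on strategies genuinely preserves the inductive definition of $\stratset^n$, but this is immediate since dominance is defined outcome-wise and outcomes correspond under the bijection.
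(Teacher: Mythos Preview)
Your proposal is correct and follows exactly the same approach as the paper: the paper's proof is a one-sentence sketch stating that a weak Muller condition becomes prefix-independent once the set of visited states is recorded in the state, after which Proposition~\ref{prop:valuestate} applies. You simply spell out this reduction in full detail (the product game $\G'$, the bijection on histories and strategies, and the Muller reformulation of the objective), which the paper leaves implicit.
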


\begin{collect}{appendix-weak}{\subsubsection{Proof of Proposition~\ref{prop:localvisited}}}{}
\begin{proof}
  A weak Muller condition can be transformed into a prefix-independent one by remembering which states have already been visited, this is a consequence of the fact that for prefix-independent objectives the value depends only on the last state of the history.   
\end{proof}
\end{collect}

We will thus simply write $\val^n_j(\visited,s)$ for $\val^n_j(h)$ where $\visited=\states{h}$ and $s=\last(h)$.
The core of the argument consists in proving that the value for given $R$ and $s$ can be computed in polynomial space.

\begin{lemma}
For every iteration $n$, \player{i}, state $s$, and set of states~$\visited$, the value of $\val^n_j(\visited,s)$ can be computed in polynomial space.
\end{lemma}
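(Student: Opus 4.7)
The plan is to proceed by induction on $n$, mirroring the algorithm of Section~\ref{sec:prefix-independent} while carefully controlling the exponential blow-up caused by tracking the visited set. The base case $n=-1$ gives $\val^{-1}_j(R,s)=0$ by convention. For the inductive step, assume $\val^{n-1}_j(R',s')$ is computable in polynomial space for every player $j$ and every pair $(R',s')$. By Proposition~\ref{prop:localvisited}, augmenting each state with the current visited set turns a weak Muller condition into a prefix-independent Muller condition, so conceptually all results from Section~\ref{sec:prefix-independent} (Lemma~\ref{lem:admvaleq}, Lemma~\ref{lem:dj}, the auxiliary objectives $\Psi^n_i$ and $\Theta^n_i$, etc.) apply to the unfolded game with state space $2^V\times V$. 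The goal is to implement these characterisations \emph{without} materialising that exponential arena.

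First I would observe that along any run the visited component is monotone non-decreasing and strictly increases at most $|V|$ times. Consequently any lasso outcome from $(R,s)$ admits a witness of length $O(|V|^2)$: at most $|V|$ ``growth'' segments, each a simple path, followed by a simple loop that stabilises the visited set at some final $R^\star$. This lets one check $\val^n_i(R,s)\neq-1$ by guessing such a polynomial-length lasso on the fly and verifying, transition by transition, that (i)~the weak Muller circuit of $\win{i}$ accepts $R^\star$ and (ii)~the lasso is the outcome of some profile of $\stratset^n$. Condition (ii)~is local: by (the weak-Muller analogue of) Lemma~\ref{lem:admvaleq}, it reduces to checking on each transition that no player decreases his own $\val^{n-1}$, together with a ``Help!''-state check for segments that remain at value $0$ for some player; both checks call $\val^{n-1}$ on inputs of polynomial size, hence lie in PSPACE by the inductive hypothesis.

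To decide $\val^n_i(R,s)=1$, I would solve a two-player zero-sum game on the unfolded arena between player $i$ and a coalition of the others, with winning condition $\win{i}$ and with edges restricted by the local admissibility constraints described above. The current configuration $(R,s)$ has polynomial size, so an alternating polynomial-time procedure can play this game step by step, using a polynomial step-counter bounded by $|P|\cdot|V|^2$ to detect stabilisation of $R$ and entry into a repeating loop; acceptance of the loop is again decided by evaluating the circuit on $R^\star$. Each move uses $\val^{n-1}$ as an oracle, which by induction is in PSPACE; since $\APTIME=\PSPACE$ and PSPACE is closed under polynomial-space recursive calls, the whole test stays in PSPACE. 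The remaining case is $\val^n_i(R,s)=0$.

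The main obstacle is precisely to avoid ever writing down the unfolded arena $2^V\times V$: a naive application of the Section~\ref{sec:prefix-independent} construction would give only EXPSPACE. Monotonicity of $R$ is what saves us, because it both bounds the length of witnessing lassos polynomially and ensures that the recursion on $n$ has polynomial depth (by the analogue of Proposition~\ref{prop:bounditer}, at most $|P|\cdot|V|$ iterations before stabilisation). A secondary technicality is to re-prove that the ``Help!''-state characterisation of Lemma~\ref{lem:dj} survives the unfolding; this follows directly because the unfolded game is prefix-independent and has the structural properties required by the lemmata of Section~\ref{sec:prefix-independent}.
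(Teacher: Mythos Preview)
Your overall scheme—induction on $n$, exploiting the monotonicity of the visited set $R$ to keep witnesses polynomial, and using $\val^{n-1}$ as a PSPACE oracle—is the right shape, and your treatment of $\val^n_i(R,s)\neq-1$ via a guessed lasso is essentially correct (though you actually need values at \emph{all} levels $m<n$, not only $n-1$, to check the conditions of the automata $\A^m_j$).

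The gap is in the test for $\val^n_i(R,s)=1$. You propose a two-player game on the unfolded arena with objective $\win{i}$ and ``edges restricted by the local admissibility constraints.'' But by Propositions~\ref{prop:omega} and~\ref{prop:theta} the correct objective is $\Theta^n_i$, which contains $\outcome(\stratset^n_i)$ and $\outcome(\stratset^n)\Rightarrow\win{i}$; these encode the ``Help!''-state B\"uchi requirements of the $\A^m_j$, and such requirements are \emph{not} local—they cannot be captured by removing edges. Solving the game for $\win{i}$ on $\G\setminus T^{\leq n}$ neither implies nor is implied by $\val^n_i=1$: the winning strategy need not lie in $\stratset^n_i$, and the opponents in that game are not restricted to $\stratset^n_{-i}$. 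Moreover, even with the correct objective, your polynomial step-counter does not suffice: once $R$ stabilises at some $R^\star$, the residual game is a genuine Muller game on the sub-arena indexed by $R^\star$, and Muller games do not in general admit positional or polynomially bounded strategies, so ``detect a loop and evaluate the circuit on $R^\star$'' is not a sound decision procedure.

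The paper resolves both points by recursing on the pair $(R,n)$ rather than on $n$ alone. For a fixed $R$, all $\win{j}$ become constants on runs that stay inside $R$, so the full Section~\ref{sec:prefix-independent} machinery (in particular, solving the Muller game for $\Theta^n_i$ restricted to $R$) applies on a polynomial-size arena. Transitions exiting $R$ to some $s'\notin R$ are handled by a recursive call with the strictly larger set $R\cup\{s'\}$; one then declares $\val^n_i(R,s)=1$ iff player~$i$ can force every outcome either to stay in $R$ and satisfy $\Theta^n_i(R)$, or to exit first to a state of value~$1$. Since each recursive call either decreases $n$ or strictly enlarges $R$, the recursion depth is bounded by $n+|V|$. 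Your argument would go through if you replaced the objective $\win{i}$ by $\Theta^n_i$ and replaced the alternating step-counter by this stay-in-$R$/exit-$R$ decomposition.
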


\begin{proof}
We define a recursive procedure $\textsf{Proc}(\visited,n)$ which computes the value $\val^n_i(\visited,s)$ for each \player{i} and state $s$.
This procedure relies on the conditions $\Theta^n_i$ (and $\Psi^n_i$) from Section~\ref{sec:valcomput}, but require them to maintain $\visited$ as the set of states visited so far.
Namely,
\begin{mathpar}
\Theta^n_i(\visited) = \Theta^n_i \cap \prestates{\visited}
\and\textrm{and}\and
\Psi^n_i(\visited) = \Psi^n_i \cap \prestates{\visited}.
\end{mathpar}

Given $\visited$ and $n$, $\textsf{Proc}(\visited,n)$ works as follows.
\begin{itemize}
\item Compute the condition $\Theta_i^n(\visited)$.
  In that case $\win{i}$ is a trivial condition (either \emph{true} if $\visited$ satisfies $\win{i}$ or \emph{false} otherwise) so we can use the polynomial space algorithm of Section~\ref{sec:prefix-independent}.\MS{Or is it juste the (sub)Section~\ref{sec:valcomput}?}
  Note that the computation of this condition may require computing value of some state for some $n'<n$, this will be done by recursive call to $\textsf{Proc}(\visited,n')$.
\item Do the same thing for $\Psi_i^n(\visited)$.
\item For each transition $s \rightarrow s'$ with $s\in \visited$ and $s'\not\in \visited$ we compute $\val^n_i(\visited',s')$ with $\visited'=\visited\cup\{s'\}$.
  This is done by a recursive call to $\textsf{Proc}(\visited',n)$; note that $\visited'$ strictly contains $\visited$.
\item 
  If from $s$, there is a strategy of \player{i} such that any outcome either:
  \begin{itemize}
  \item stays in $\visited$ and is winning for $\Theta_i^n(\visited)$;
  \item or the first state reached outside of $\visited$ has value $1$;
  \end{itemize}
  then $s$ has value $1$.
\item 
  If there is no strategy profile that satisfies $\Phi^n_i(\visited)$ and no profile whose outcome from $s$ leaves $\visited$ for $s' \notin \visited$ with $\val^n_i(\visited\cup{s'},s')\geq0$, then $s$ has value $-1$.
\item 
  In the other cases $s$ has value $0$.
\end{itemize}

At each recursive call either $n$ decreases or the size of $\visited$ strictly increases, therefore the height of the stack of recursive call is bounded by $n + |V|$.
The computation, ignoring recursive calls, can be done in polynomial space.
Thus the global procedure works in polynomial space.
\qedhere
\end{proof}


\begin{proof}[\proofname\ of Theorem~\ref{thm:weakobj}]
Let $W$ and $L$ be the sets of players that should win and lose, respectively.
We guess a lasso path $\rho$ such that $\states{\rho} = \visited$ such that for $i \in W$, $\visited \vDash \varphi_i$ and for $i \in L$, $\visited \nvDash \varphi_i$, where for $i \in \Agt$, $\varphi_i$ is the formula over $V$ defined by the circuit for the winning condition of $i$.

Let $n_0$ be the iteration after which the values do not change, and assume we have, with $\textsf{Proc}(\visited,n_0)$, computed the values for all players and states.

We check that $\rho$ is the outcome of an admissible profile.
In order to do that, $\rho$ is divided into $\rho=\rho_1 \cdot \rho_2^\omega$.
Note that we can choose $\states{\rho_1}=\visited$, $|\rho_1| \leq |V|^2$ (since a visit to a state of $\visited$ may require a path of length at most $|V|$), and  $|\rho_2|\leq|V|$ (we have also $\states{\rho_2} \subseteq\visited$).
We first check that in $\rho_1$, no player ever lower its own value.

We then check that $\rho_2^\omega$, the looping part of $\rho$, is part of an admissible profile that stays in $\visited$, \ie:
\[\rho_2^\omega \in \outcome(\stratset^{n_0}) \cap \prestates{\visited}\]
where $\outcome(\stratset^{n_0})$ is computed as in Section~\ref{sec:valcomput} with trivial objectives for all players: \[\outcome(\stratset^{n_0}) = \bigcap_{n=0}^{n_0}\bigcap_{i \in \Agt}\L(\A^n_i).\]
In particular, no player can decrease its own value.
Note that the definition of the winning condition of these automata requires the values of states at previous iterations, which is provided by calls to $\textsf{Proc}(\visited,n)$.

\medskip
\PSPACE-hardness is given by Lemma~\ref{lem:pspace-hard}.
\qedhere
\end{proof}

In section~\ref{sec:safety}, the winning coalition problem was proved \PSPACE-hard even for the special case of safety conditions.
The problem is in fact also \PSPACE-hard for reachability conditions.
\begin{lemma}
  The winning coalition problem for reachability conditions is \PSPACE-hard, even for sets of players 
such that $|W|=1$ and $L=\emptyset$.
\end{lemma}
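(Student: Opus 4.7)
The plan is to reduce \QSAT to this variant of the winning coalition problem, adapting the construction of Lemma~\ref{lem:pspace-hard}. Given a formula $\phi = Q_1 x_1 \cdots Q_n x_n\, \psi$, we build a game $\G_\phi$ with players $\shortEve$, $\shortAdam$, and two literal players $x_i$, $\lnot x_i$ for each variable, each equipped with a reachability objective on a dedicated goal state $\good_p$. The quantifier modules are as in Lemma~\ref{lem:pspace-hard}: in $M_{\exists x_i}$ (resp.\ $M_{\forall x_i}$), $\shortEve$ (resp.\ $\shortAdam$) chooses between two branches routing the play through $\good_{x_i}$ or $\good_{\lnot x_i}$. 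For the propositional part we use the classical Eve--Adam evaluation tree of $\psi$: disjunctions become $\shortEve$-choices and conjunctions $\shortAdam$-choices, so that each play visits exactly one literal leaf. At that leaf, a gadget $M'_\ell$ controlled by $\ell$ has two successors: (C), ``continue'' to a terminal sink visiting only $\good_\shortEve$, and (B), a ``blocking'' sink visiting $\good_\ell$ and $\good_\shortAdam$ only. The key design invariant is that, along any play, $\good_\ell$ is visited if and only if either $\ell$ was activated in the quantifier phase or the play ends in (B) at the unique $M'_\ell$ on that play.

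The admissibility analysis then mirrors Lemma~\ref{lem:pspace-hard}. At the first elimination step, if $\ell$ has not been activated upon reaching $M'_\ell$, then (B) \emph{strictly} dominates (C) for $\ell$: option (B) guarantees that $\ell$ reaches $\good_\ell$, while (C) forbids it since the terminal sink contains no $\good_\ell$. If $\ell$ is already activated, (C) and (B) are indifferent for $\ell$, so both remain admissible. Symmetrically, whenever $\shortAdam$ has a strategy forcing the visited literal to be non-activated, that strategy wins for him (reaching $\good_\shortAdam$ inside (B)) and is admissible. At the second elimination step, any $\shortEve$-strategy whose combined quantifier and disjunction choices produce a play ending at a non-activated literal is dominated, because that literal is constrained to play (B) in every admissible profile and $\shortEve$ never reaches $\good_\shortEve$; the $\shortEve$-strategies that survive are exactly those witnessing, against every $\shortAdam$-response, a true literal leaf of the evaluation tree, i.e.\ the QBF-winning strategies of $\shortEve$ for~$\phi$.

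It follows that an iteratively admissible profile in which $\shortEve$ wins exists if and only if $\phi$ is a true QBF: combining $\shortEve$'s QBF-winning strategy with the admissible profile in which every activated literal plays (C) yields such a profile; conversely, when $\phi$ is false, $\shortAdam$'s admissible winning strategy forces a non-activated literal onto the play, whose forced (B) denies $\good_\shortEve$ to $\shortEve$ in every iteratively admissible profile. The main technical point, and the step most in need of care, is verifying that (B) is \emph{strictly} (rather than merely weakly) dominant for a non-activated literal: this relies on the evaluation-tree structure ensuring that each play visits exactly one literal leaf, so that (C) truly eliminates the only remaining opportunity for $\ell$ to reach $\good_\ell$.
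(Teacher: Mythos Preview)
Your construction differs from the paper's in one structural respect: the paper keeps conjunctions \emph{sequential} (as in the safety proof), so that a play traverses every conjunct and \Eve\ selects a literal in each disjunct; you instead build an \Eve--\Adam\ evaluation tree where \Adam\ resolves conjunctions and the play visits a single literal leaf. The paper's choice makes the admissibility analysis match the safety case verbatim: after step~1 a non-activated literal blocks progress, so \Eve's step-2 choices inside each clause are forced locally, and the remainder of the argument is a direct appeal to Lemma~\ref{lem:pspace-hard}.

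Your step-2 analysis has a genuine gap. You claim that any \Eve-strategy which, against some \Adam-response, ends at a non-activated literal is dominated at step~2, and hence that only QBF-winning \Eve-strategies survive. This is not true in your construction. Take a non-QBF-winning $\sigma_\shortEve$ that reaches an activated leaf against \emph{some} \Adam-strategy $\tau_A$. Against the profile where \Adam\ plays $\tau_A$ and the activated literal at that leaf plays (C), $\sigma_\shortEve$ wins; a putative dominator $\sigma'_\shortEve$ (even a QBF-winning one) reaches a \emph{different} activated leaf along a different history, and the literal there is free in $\stratset^1$ to play (B), making $\sigma'_\shortEve$ lose. Thus $\sigma'_\shortEve \not\dom{\stratset^1} \sigma_\shortEve$, and $\sigma_\shortEve$ survives step~2. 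The freedom of activated literals to play (B) at will, combined with the fact that distinct \Eve-strategies produce distinct histories at the leaves, blocks the domination you assert.

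Your final equivalence can still be rescued, but not by the two-step argument you give. When $\phi$ is false, \Adam\ indeed acquires value~$1$ at step~2 and every admissible \Adam-strategy is winning, so \Eve\ loses in all iteratively admissible profiles---this direction is fine. When $\phi$ is true, you must argue directly that \Eve's QBF-winning strategy is iteratively admissible (it never decreases \Eve's value at any iteration, since at every disjunction it moves to a subformula that is still true under the current valuation) and then exhibit the winning profile; the intermediate claim that non-QBF-winning \Eve-strategies are eliminated is both false and unnecessary. The paper's sequential-conjunction gadget avoids this detour entirely, because there \Eve's step-2 constraint is a local ``pick an activated literal in the current clause,'' which does coincide with the satisfaction witness.
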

\begin{proof}
  This is essentially the same proof than for safety, we will therefore only insist on the differences.
  \begin{itemize}
  \item If $\phi = x_i$ we define the module $M_\phi$ in which player $x_i$ has a choice between winning or letting the game continue.
  \item If $\phi = \lnot x_i$ the construction is similar, with player $\lnot x_i$ replacing $x_i$.
  \item The other modules are defined in the same way than in Lemma~\ref{lem:pspace-hard}.
  \end{itemize}
  The game $\G_\phi$ is obtained by directing the remaining outgoing transitions to a winning state for \Eve.
  
  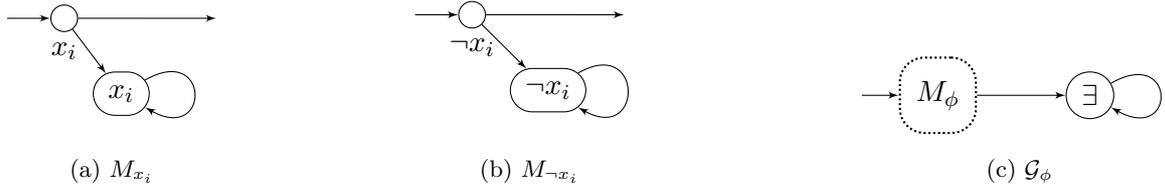
\begin{figure}
\centering
\tikzstyle{every state}+=[rounded rectangle,inner sep=5pt,minimum size=8pt]
\tikzstyle{module}+=[state,rectangle,draw,thick,densely dotted,minimum size=1cm,rounded corners=3.25mm]
\tikzstyle{every picture}+=[>=latex']
\subfloat[$M_{x_i}$]{\label{fig:MxReach}
  \begin{tikzpicture}
  \draw (2,2) node[state,label=below:$x_i$] (L11) {};

  \draw (2.75,1) node[state] (E) {${x_i}$};

  \draw[->] (1.25,2) -- (L11);
  \draw[->] (L11) edge (4,2);
  \draw[->] (L11) edge (E);
  \draw[->] (E) edge[bigloop right] (E);
  \end{tikzpicture}
}
\hfill
\subfloat[$M_{\lnot x_i}$]{\label{fig:MnxReach}
  \begin{tikzpicture}
  \draw (2,2) node[state,label=below:$\lnot x_i$] (L11) {};

  \draw (3,1) node[state] (E) {${\lnot x_i}$};

  \draw[->] (1.25,2) -- (L11);
  \draw[->] (L11) edge (4,2);
  \draw[->] (L11) edge (E);
  \draw[->] (E) edge[bigloop right] (E);
  \end{tikzpicture}
}
\hfill~
\subfloat[$\G_{\phi}$]{\label{fig:GReach}
  \begin{tikzpicture}
    \draw (3,1) node[module] (M1) {$M_{\phi}$};
    \draw (5,1) node[state] (F) {$\shortEve$};

    \draw[->] (2,1) -- (M1);
    \draw[->] (M1) edge (F);
    \draw[->] (F) edge[bigloop right] (F);
  \end{tikzpicture}
}
\caption[Modules for the definition of the game $\G_\phi$.]{Modules for the definition of the game $\G_\phi$. A label $y$ \emph{inside} a state $q$ denotes that $q\in\good_y$; a label $y$ \emph{below} a state $q$ denotes that $q\in V_y$. Note that \eve is abbreviated \shortEve.}

  \end{figure}

Given a history $h$ of the game, we write $\won(h)=\{ i\in\Agt \mid \exists k.\ h_k \in \good_i\}$ the set of players who already won in that history.
We associate to such a set of player a valuation $v_\won$ such that:
\[
v_\won(x_i) = \left\{\begin{array}{ll}
1 & \textrm{if } x_i \in \won \\
0 & \textrm{if } \lnot x_i \in \won \wedge x_i \notin \won \\
\textrm{undefined} & \textrm{otherwise} \\
\end{array}\right.
\]

If $\ell$ is a literal, the strategies of $\stratset^1_\ell$ are exactly the one such that if $\ell\not\in \won(h)$ and $\last(h)\in V_\ell$, take the transition to the state in $\good_\ell$.

Now \eve should only visit a state controlled by $\ell$ if it has already been visited in the past.
We are in the same situation than in the proof of Lemma~\ref{lem:pspace-hard}, so the remainder of the proof is exactly the same.
\qedhere
\end{proof}

\section{Conclusion}

This paper provides new fundamental complexity and algorithmic results on the iterated elimination of dominated strategies in the context of turn-based $n$-player perfect information $\omega$-regular games. We have shown that the non-elementary complexity of the plain tree-automata-based procedure suggested in~\cite{berwanger07} can be avoided, and our precise complexity results show that iterated elimination of dominated strategies is a {\em computationally viable alternative} to Nash equilibria.

Future directions include investigating extensions with imperfect information, with quantitative objectives, or richer interaction models like concurrent games.
\label{sec:conclusion}

\bibliography{AdmissibleStrategies}
\end{document}